\documentclass[12pt]{article}

\usepackage[utf8]{inputenc}
\usepackage[T1]{fontenc}
\usepackage{amsmath,amssymb,amsthm}
\usepackage{mathtools}
\usepackage{booktabs}
\usepackage{natbib}
\usepackage[margin=1in]{geometry}
\usepackage{hyperref}
\usepackage{algorithm}
\usepackage{algpseudocode}
\usepackage{enumitem}
\usepackage{caption}
\usepackage{threeparttable}

\newtheorem{theorem}{Theorem}[section]
\newtheorem{proposition}[theorem]{Proposition}
\newtheorem{lemma}[theorem]{Lemma}

\theoremstyle{definition}
\newtheorem{definition}[theorem]{Definition}

\theoremstyle{remark}
\newtheorem{remark}[theorem]{Remark}
\theoremstyle{definition}\newtheorem{principle}[theorem]{Principle}

\newcommand{\E}{\mathbb{E}}
\newcommand{\Var}{\mathrm{Var}}
\newcommand{\Cov}{\mathrm{Cov}}
\newcommand{\CV}{\mathrm{CV}}
\newcommand{\NB}{\mathrm{NegBin}}
\newcommand{\Mult}{\mathrm{Multinomial}}
\newcommand{\Dir}{\mathrm{Dirichlet}}
\newcommand{\Gam}{\mathrm{Gamma}}
\newcommand{\Bet}{\mathrm{Beta}}
\newcommand{\Pois}{\mathrm{Poisson}}
\newcommand{\Nobs}{N^{\mathrm{obs}}}
\newcommand{\Nibnr}{N^{\mathrm{IBNR}}}
\newcommand{\Xobs}{X^{\mathrm{obs}}}
\newcommand{\Sibnp}{S^{\mathrm{IBNP}}}
\newcommand{\Wobs}{W^{\mathrm{obs}}}
\newcommand{\bpi}{\boldsymbol{\pi}}

\title{Conditional Prediction for Macro-Level Claims Reserving:\\
What a Single Paid Triangle Identifies}
\author{Robin Van Oirbeek\thanks{University of Antwerp,
Department of Mathematics, Antwerp, Belgium.
Email: \texttt{robin.vanoirbeek@gmail.com}.}
\and
Tim Verdonck\thanks{University of Antwerp -- imec,
Department of Mathematics, Antwerp, Belgium;
KU Leuven, Department of Mathematics, Leuven, Belgium.
Email: \texttt{tim.verdonck@uantwerpen.be}.}}
\date{September 2026}

\begin{document}
\maketitle


\begin{abstract}

In macro-level claims reserving, reserve uncertainty is assessed from a single realised run-off triangle. The relevant quantity is therefore the conditional predictive distribution $p(R \mid D)$ of the outstanding amount given that triangle. We state the requirements for
targeting it and supply, in a Dirichlet--Gamma hierarchy, a
closed-form conditional predictive that attaches to any method producing
a development pattern (Chain-Ladder, Bornhuetter--Ferguson, Cape Cod),
for the latter two of which no conditional predictive bootstrap was
available. Three results bound what that construction can deliver.
First, the concentration $c$ governing allocation uncertainty is
estimated from partial-column proportions that form a Dirichlet
subcomposition, carrying concentration $cF_k$ rather than $c$; the
corrected estimator is consistent but slow ($\E[\hat c]/c = 1.31$ at
$I = 10$), and substituting it costs $6.4$ coverage points at that
triangle size, of which integrating over its posterior recovers about
two. Second, the closed form is exact only under diffuse anchoring, so an
oracle supplied with the true parameters remains conditionally
miscalibrated by roughly ten points at the ultimate dispersion typical of
macro-level portfolios. Third, and against the intuition that motivated
this work, a residual bootstrap with free accident-year effects does not
omit accident-year frailty uncertainty: its resampling of the diagonal
reproduces the diffuse-limit posterior of the row level, and under the
count hierarchy its coverage is nominal at every frailty variance tested.
The under-coverage that Meyers (2015) documents on paid data is
therefore not explained by omitted frailty, and the residual bootstrap is
a correctly targeted benchmark at practical triangle sizes. The paper's
contribution is the accounting: which parameters a single paid triangle
determines, what each substitution costs, and where the closed form
itself gives way.

\medskip\noindent\textbf{Keywords:} claims reserving, conditional predictive
distribution, conditioning principle, accident-year frailty, Dirichlet
subcomposition, Chain-Ladder, Bornhuetter--Ferguson, Cape Cod

\end{abstract}

\section{Introduction}\label{sec:intro}

\subsection{The conditional predictive and the sampling distribution}
\label{sec:intro_conditioning}

In macro-level claims reserving, the reserve of a portfolio is booked on a
single realised run-off triangle. The uncertainty that is relevant for such a
booking is therefore the spread of the outstanding amount \emph{given that
triangle}, and in any well-specified probabilistic model the latter is the
conditional predictive distribution
\begin{equation}\label{eq:cond_target}
p(R \mid \mathcal{D})
\;=\; \int p(R \mid \mathcal{D}, \theta)\, d\pi(\theta \mid \mathcal{D}),
\end{equation}
which holds the observed triangle $\mathcal{D}$ fixed as conditioning
information and integrates the model's conditional law over whatever
uncertainty about $\theta$ the triangle leaves unresolved.

Two requirements have to be met to attain \eqref{eq:cond_target}. The
first is that $\mathcal{D}$ enters as conditioning information. The
second is that $\pi(\theta \mid \mathcal{D})$ is integrated over instead
of being replaced by an estimate $\hat\theta$, which is what a plug-in
construction does not do: the plug-in predictive
$p(R \mid \mathcal{D}, \hat\theta)$ is the degenerate-posterior limit of
\eqref{eq:cond_target}, and is as such calibrated only when $\theta$ is
known. The cost of the second violation is measured throughout: $6.4$
coverage points at $I = 10$ for the concentration alone, falling to
$2.1$ by $I = 30$.

The standard residual bootstraps for claims reserving, i.e.\ the ODP
bootstrap of \citet{england2002} and the Mack bootstrap, resample the
observed triangle. At the outset of this work we read that resampling as
a violation of the first requirement, treating $\mathcal{D}$ as a random
draw rather than as fixed information, and we conjectured that the
resulting procedure omits the posterior uncertainty about accident-year
frailty and under-covers accordingly. Section~\ref{sec:sim_frailty}
tests that conjecture under the count hierarchy in which its omitted
component can be computed exactly, and rejects it. A residual bootstrap
with free accident-year effects reproduces, through its resampling of the
latest diagonal, the diffuse-limit posterior of the row level; through
its refit of the development factors, the estimation uncertainty in the
development pattern; and through its process draw, the process variance.
Those are the three components of \eqref{eq:cond_target} under the
hierarchy of this paper in the Chain-Ladder limit
(Proposition~\ref{prop:boot_decomp}), and its coverage under that
hierarchy is nominal at every frailty variance tested. Resampling the
triangle is a device for integrating over $\pi(\theta \mid \mathcal{D})$
under a diffuse prior, not a departure from conditioning on
$\mathcal{D}$.

The distinction that survives is narrower and, we think, more useful.
Under an informative anchor the residual bootstrap has no channel through
which the anchor enters, since it resamples a fitted development
structure that Bornhuetter--Ferguson and Cape Cod do not supply. And for
the parameters a residual bootstrap does not refit, i.e.\ the allocation
concentration of the present hierarchy and the pooled loss ratio of Cape
Cod, the plug-in question of the second requirement is the one that
remains. The conditional construction of this paper is therefore not an
alternative to the residual bootstrap on Chain-Ladder-anchored paid
triangles, where the latter is the better-calibrated procedure at
practical sizes (Section~\ref{sec:sim_odp}); it is the construction that
extends conditional prediction to the anchors the residual bootstrap
cannot reach, together with an accounting of what a single paid triangle
determines about the parameters it needs.

\subsection{Why the Dirichlet--Gamma hierarchy}\label{sec:intro_hierarchy}

The hierarchy adopted here separates two sources of uncertainty with distinct
actuarial meaning: the ultimate paid amount
$S_i \sim \Gam(\alpha E_i, \beta)$ captures how much will eventually be paid,
while the allocation $(X_{i,\cdot}/S_i) \sim \Dir(c\bpi)$ captures when it
will be paid. Once the observed cumulative payment $\Xobs_i$ is in hand the
first is partially resolved, and what remains is purely allocational: the
outstanding amount is
\[
\Sibnp_i \;=\; \Xobs_i \cdot \frac{1 - W_i}{W_i},
\qquad
W_i \sim \Bet\bigl(cF_{I-i},\, c(1-F_{I-i})\bigr),
\]
with $W_i$ the fraction of the ultimate already paid. The triangle enters once,
as a parameter of the Beta, and is never treated as random again, hereby
making the conditioning requirement implementable: the conditional law is
available in closed form, requiring neither approximation by resampling nor a
fitted mean structure.

The construction is \emph{anchor-agnostic}. The development proportions
$F_{I-i}$ that parametrise the Beta may come from Chain-Ladder,
Bornhuetter--Ferguson, Cape Cod, a GLM or any other procedure producing a
cumulative pattern, so the same predictive attaches to each. For
Bornhuetter--Ferguson and Cape Cod, analytic prediction errors exist
\citep{Mack2008, AlaiMerzWuthrich2009, AlaiMerzWuthrich2010} but no
conditional predictive bootstrap did.

The predictive depends on the data through the development pattern and a
single scalar, i.e.\ the concentration $c$. Note that this parsimony is at the
same time the principal difficulty: a single paid triangle determines $c$
poorly, and Section~\ref{sec:c_estimation} characterises how poorly.

The closed form is, however, exact only in the diffuse limit
$\alpha, \beta \to 0$. Under an informative prior on the ultimate the
conditional law of $W_i$ given $\Xobs_i$ is tilted, and the Beta draw is a
plug-in approximation to it. The approximation is adopted deliberately and its cost is recorded in
Section~\ref{sec:conditional_coverage}: roughly ten
coverage points of conditional miscalibration at the ultimate dispersion
typical of macro-level portfolios, vanishing as the prior becomes diffuse, and
present even when every parameter is known exactly. The tilted density is available, but does not yield the Beta
representation on which the anchor-agnostic property rests. Hence, the closed
form and the conditional calibration cannot be attained jointly.

Finally, Chain-Ladder, Bornhuetter--Ferguson and Cape Cod emerge from this one
hierarchy as credibility estimators under diffuse, informative and pooling
priors respectively, with identical structure for counts and amounts. That
correspondence is a structural consequence of the hierarchy rather than an
additional assumption; it is established in \citet{VanOirbeek2026cred} and
used here to justify applying a single predictive across the three anchors.
It is the Dirichlet--Gamma analogue of the credibility Chain-Ladder of
\citet{GislerWuthrich2008}, in which Chain-Ladder and
Bornhuetter--Ferguson arise as limiting cases of one credibility estimator
within Mack's model.

\subsection{Contributions}

Three contributions are made in this paper: an identification result for
the parameter that governs allocation uncertainty, a construction for the
conditional predictive that operates uniformly across the major
macro-level reserving methods together with the calibration it actually
achieves, and a negative result about residual bootstraps that corrects
a natural conjecture, including our own.

The negative result is stated first because it reorients the rest.
Under a data-generating process with accident-year frailty
$U_i \sim \Gam(\kappa,\kappa)$, one might expect the free accident-year
effects of the ODP bootstrap to absorb the realised frailty of each
observed row and thereby to lose the posterior uncertainty in $U_i$ for
the least-developed years. Proposition~\ref{prop:boot_decomp} shows the
opposite: in the count hierarchy of Section~\ref{sec:counts} the variance
generated by resampling the latest diagonal equals the diffuse-limit
posterior variance of the row level, the refit of the development
factors carries the estimation uncertainty in $\bpi$, and the process
draw carries the process variance, so that the three add to the
conditional predictive variance of Theorem~\ref{thm:ibnr} in its
Chain-Ladder limit. Section~\ref{sec:sim_frailty} confirms this with the
omitted component computed exactly on every simulated triangle: as
$\Var(U_i)$ goes from $0$ to $1$ the conjectured omission would predict
$95\%$ coverage falling to $86.4\%$, and the measured coverage
stays at $93.6$--$96.4\%$. A residual bootstrap with free
levels is a correctly targeted procedure for the diffuse-anchored
conditional predictive, and the systematic under-coverage that
\citet{Meyers2015} finds on paid Schedule~P data is not a frailty
effect. The decline in ODP coverage that the same sweep produces on the
amounts side is traced to the variance function instead: it is present
under a Dirichlet allocation, whose cell variance is quadratic in the row
level, and absent under a compound Poisson allocation with the ODP
variance function on the same rows
(Section~\ref{sec:sim_frailty}).

The second contribution is an identification result for the concentration.
The conditional predictive depends on the data only through $\Xobs_i$, the
development pattern, and a single scalar: the concentration $c$ of the
Dirichlet allocation. That scalar is estimated from the dispersion of
partial-column proportions across accident years, and those proportions form a
Dirichlet \emph{subcomposition}, carrying concentration $cF_k$ rather than $c$
(Lemma~\ref{lem:dirichlet_sub}). The estimator of
Proposition~\ref{prop:c_estimator} corrects for the factor $F_k$, without
which it is inconsistent for $c$. The corrected estimator is consistent, but
converges slowly: $\E[\hat{c}]/c$ equals $1.89$ at $I = 7$ and $1.31$ at
$I = 10$, scale-free in $c$, with a coefficient of variation exceeding one
half at $I = 10$ (Remark~\ref{rem:sigma_c_finite}). Its per-cell
asymptotic variance is available in closed form
(Proposition~\ref{prop:c_variance}),
\begin{equation}\label{eq:sigma_c_intro}
\sigma_{c,k}^2
= \frac{c\,(a+1)\bigl[2a(a-3)\,\pi(1-\pi) + 3a + 1\bigr]}
       {F_k\,\pi(1-\pi)(a+2)(a+3)},
\qquad a = cF_k,
\end{equation}
Of the three claims that this expression makes, two are confirmed by
simulation and one is not. The $\sqrt{n_k}$-rate holds, and the $\sigma_{c,k}^2 \sim 2c^2$ scaling holds
at every triangle size. As such, the precision of $\hat{c}$, and hence the
coverage of the resulting predictive, depends on $I$ and on the development
pattern, but not on the concentration itself. The constant does not hold at practical $I$: the
expression understates $\Var(\hat{c})$ by a factor of $2.9$ at $I = 10$ and
$15.8$ at $I = 7$ (Remark~\ref{rem:sigma_c_finite}). Note that a closed-form asymptotic variance invites use as a standard error,
which would here be optimistic by up to an order of magnitude.

The consequences for the predictive are measured throughout the paper.
Substituting $\hat{c}$ costs coverage at every triangle size, at the rate
measured in Section~\ref{sec:oracle}. A screen of the form $\hat{c} < t$ fires
on a portfolio sitting exactly at $t$ with probability $0.32$--$0.45$ rather
than $0.50$, since the bias displaces the whole detection curve
(Section~\ref{sec:c_threshold}). Integrating over the posterior of $c$
removes most of the $\hat{c}$-conditional miscalibration that the plug-in
induces (Section~\ref{sec:conditional_coverage}), while raising the
unconditional level by about two points only. Hence, roughly six points remain
at $I = 10$ that no further modelling of the same triangle removes
(Section~\ref{sec:regularised}).

The third contribution is the anchor-agnostic character of the construction.
Existing bootstrap procedures (ODP and Mack) operate only within the
Chain-Ladder family. For Bornhuetter--Ferguson, analytic prediction errors
exist: \citet{Mack2008} derives the BF prediction error in a distribution-free
model, and \citet{AlaiMerzWuthrich2009, AlaiMerzWuthrich2010} derive the BF
mean squared error of prediction in the ODP framework. No bootstrap targeting
the conditional predictive distribution has, however, been published for BF or
Cape Cod. The construction of Section~\ref{sec:bootstrap} takes any vector of
cumulative development proportions $\hat{F}$ as input, from Chain-Ladder, BF,
Cape Cod or any other procedure, and Section~\ref{sec:empirical} illustrates
it on a Bornhuetter--Ferguson example.

Applicability and calibration are distinct claims, and are established
separately. Under a correctly specified anchor the construction is
calibrated:
Bornhuetter--Ferguson covers within two points of Chain-Ladder on an interval
$53\%$ as wide (Section~\ref{sec:sim_anchor}). That precision is contingent on the anchor, since a $20\%$ error in the prior
loss ratio costs $25$ to $50$ coverage points and no amount of data corrects
it. The Cape Cod anchor, whose loss ratio is estimated from the triangle,
under-covers by $6.2$ points at $I = 10$ and $3.4$ at $I = 15$, since that
estimation error is not represented in the predictive.

The conditional calibration itself is treated last.
Principle~\ref{prin:bootstrap} states two requirements: that $\mathcal{D}$ is
held fixed, and that the residual uncertainty about $\theta$ given
$\mathcal{D}$ is integrated over instead of substituted. Both requirements
have measurable content. Partitioning the simulated triangles by $\hat{c}$, the plug-in
construction varies from $96\%$ to $78\%$ across terciles at $I = 10$;
integrating over the posterior of $c$ reduces that spread from $18.3$ to $5.3$
coverage points at unchanged interval width, and integrating over $(c, \bpi)$
jointly reduces it to $2.8$, against an oracle floor of $1.7$
(Section~\ref{sec:conditional_coverage}). On a partition formed from the dispersion of the estimated ultimates, a
statistic that no procedure uses as a parameter, an oracle supplied with both
true parameters is still miscalibrated by ten points. The source is
identified: the closed-form
conditional law is exact only under diffuse anchoring
(Remark~\ref{rem:diffuse_scope}), and the gradient vanishes as the ultimate's
prior becomes diffuse. Conditional calibration is therefore achieved with respect to the parameters
and not with respect to the anchoring.

On Chain-Ladder-anchored paid triangles the ODP bootstrap is the
better-calibrated procedure at practical triangle sizes, and
Section~\ref{sec:sim_odp} reports that comparison in full. Given
Proposition~\ref{prop:boot_decomp} this is not a paradox: both procedures
target the same conditional object there, and the residual bootstrap
integrates over the row level and the development pattern where the
plug-in construction substitutes the concentration and the pattern. The
construction of this paper earns its place on the anchors the residual
bootstrap cannot reach, and the cost it pays elsewhere is quantified
rather than hidden.

Finally, a number of supporting results are established. Under compound
Poisson misspecification the construction is conservative: in
the limit $p \to 1$ the predictive standard deviation exceeds the true value
by the factor $1/\sqrt{F_{I-i}}$ wherever the predictive variance exists, and
for $p \in (1,2)$ that factor is a lower bound
(Theorem~\ref{thm:conservative_bias}).
Sections~\ref{sec:simulations}--\ref{sec:empirical} provide empirical
validation on three contrasting benchmarks and on four misspecification
studies.

\subsection{Related literature}

\citet{Mack1993} established distribution-free Chain-Ladder uncertainty;
\citet{renshaw1998} and \citet{verrall2000} provided the GLM foundations for
stochastic Chain-Ladder; \citet{england2002} developed the ODP bootstrap;
\citet{wuthrich2008} covers stochastic reserving methods in detail. Note that
several stochastic models reproduce the Chain-Ladder point estimate
\citep{mack1994, mackventer2000, renshaw1998}; the Dirichlet--Gamma hierarchy
of the present paper adds to this list under a credibility interpretation.
Tweedie GLMs \citep{tweedie1984, Jorgensen1987, SmythJorgensen2002} are
standard for severity modelling, with \citet{mccullagh1989} the canonical
reference on the GLM framework itself. \citet{Clark2003} fitted smooth growth
curves via maximum likelihood; the compositional treatment of development
proportions adopted here follows \citet{aitchison1982}, and the Dirichlet
allocation for reserving is due to \citet{srirams2021} (see below). The
systematic miscalibration of ODP and Mack across the \citet{Meyers2015}
Schedule~P database provided the empirical motivation for the present
paper. We conjectured that omitted accident-year frailty was the cause
and Section~\ref{sec:sim_frailty} shows that it is not; on paid data
\citet{Meyers2015} himself finds the failure to be one of location, with
both models overstating the expected ultimate, and his changing
settlement rate model is a correction of the mean rather than of the
dispersion. The residual bootstrap's dispersion is, by
Proposition~\ref{prop:boot_decomp}, the right one for the diffuse-anchored
conditional predictive.

The word \emph{conditional} has a history in this literature that the
present paper should be placed against. The dispute between
\citet{BBMW2006} and \citet{MackQuargBraun2006}, with the Bayesian
treatment of \citet{Gisler2006} and the later reassessment of
\citet{Gisler2019}, concerned Mack's model and was not about whether to
decompose the conditional prediction error into process and estimation
terms, which every party did, but about how to average the estimation
term over the unknown development factors given the observed triangle.
Mack's estimator and the conditional-resampling estimator of
\citet{BBMW2006} differ by second-order cross terms, and \citet{Gisler2019}
concludes in favour of the former. The present construction sits on the
Bayesian side of that exchange: the diffuse-limit posterior of the row
levels and the estimation distribution of the pattern in
Proposition~\ref{prop:boot_decomp} are the Dirichlet--Gamma counterpart of
the limiting-prior posterior under which \citet{Gisler2006} recovers the
Chain-Ladder factors as Bayes estimators, and not the conditional
resampling of \citet{BBMW2006}. The proposition is silent on the
Mack--BBMW difference itself: the two estimators differ at second order in
the factor estimation error, and at the Monte Carlo resolution of
Section~\ref{sec:sim_mech1} (roughly $1.3$ coverage points at $M = 300$) a
difference of that order is not detectable through coverage.

The conditioning principle is stated against the background of classical
bootstrap theory, in which the bootstrap approximates the sampling
distribution of a statistic under repeated sampling
\citep{DavisonHinkley1997, Hall1992}. Prediction targets the conditional
law of an unobserved quantity given the realised data, and it is not
automatic that a resampling procedure attains it. For the residual
bootstraps of reserving it turns out that it does, in the diffuse-anchored
case, because resampling a triangle whose row levels are free parameters
is equivalent to drawing those levels from their flat-prior posterior
(Proposition~\ref{prop:boot_decomp}); the distinction becomes operative
only where the anchor is informative or where a parameter is not refitted. The distinction between the two coverage notions is drawn in
Section~\ref{sec:intro_conditioning} and not restated here; conditional
coverage is the notion adopted throughout, and it is also, as
Section~\ref{sec:conditional_coverage} shows, a notion no procedure
considered here attains in full.

A Dirichlet allocation model for loss reserving was previously introduced by
\citet{srirams2021}, who use it to unify Chain-Ladder and
Bornhuetter--Ferguson predictions under a common statistical framework and
discuss both frequentist and Bayesian inference on Worker's compensation
triangles. The present paper takes the same Dirichlet allocation as its
starting point, but pursues a different objective. Rather than unifying
Chain-Ladder and Bornhuetter--Ferguson at the point-estimate level, the
conditioning principle is formalised and the conditional predictive is
developed, inheriting its development pattern from any such method, Cape Cod
included. The partial-column proportions are furthermore identified as a
Dirichlet subcomposition, hereby correcting the resulting estimator of the
concentration and characterising what a single paid triangle does and does
not determine about that concentration. The
two papers are complementary: \citet{srirams2021} establishes the Dirichlet
allocation as a unifier at the point-estimate level; the present paper
establishes the conditional predictive for that allocation, and the limits of
its calibration.

The Compound Negative Binomial-Gamma paper \citep{VanOirbeek2026cnbg}
establishes the correctly specified parametric model for incremental paid
amounts; the present paper establishes the anchor-agnostic conditional
predictive that the former cites. Both address the systematic under-coverage
documented by \citet{Meyers2015}, be it with different answers, and the
division of labour between both is the following. The present paper
establishes what a free-level residual bootstrap does and does not
represent: its dispersion estimate carries no information about the
frailty variance $\kappa$, because the free levels absorb the realised
frailty of each row, but its resampling scheme nonetheless generates the
diffuse-limit posterior variance of those levels, which is the widest
member of the credibility family and hence never understates the frailty
component (Proposition~\ref{prop:boot_decomp}). What the residual bootstrap
cannot do is exploit a known or estimated $\kappa$ to narrow the
predictive, which is where an anchored model enters.
How far such a procedure can go is bounded by two distinct statements,
which should not be conflated. The first is a parameterisation fact: under
free accident-year effects a row-constant frailty $U_i$ enters the
likelihood only through the product $\mu_i U_i$, which is the free level
itself, so the frailty is collinear with the fitted levels and the
free-level dispersion estimate is a cell-level composite whatever the
estimator. This is an exact non-identification, not an information limit.
The second is a genuine information ceiling, established in
\citet{VanOirbeek2026cnbg} through Cram\'er--Rao and van Trees bounds on the
dispersion that the variance-curvature channel carries once levels are
anchored, together with a cumulant-order limit on paid amounts; a model
representing frailty explicitly is optimal within that ceiling without
reaching nominal coverage. Note that neither statement follows from the mechanism identified here, and
neither is claimed here as our own.

\subsection{Paper structure}

Section~\ref{sec:notation} establishes notation, including the distinction
between compositions and subcompositions on which the identification result
turns. Sections~\ref{sec:counts}--\ref{sec:amounts} develop the
Poisson--Gamma--Multinomial and Dirichlet--Gamma hierarchies and the duality
between them, and Section~\ref{sec:c_estimation} treats the estimation of the
concentration and the two diagnostics it supports.
Section~\ref{sec:tweedie} compares the construction with Tweedie models.
Section~\ref{sec:bootstrap} states the conditioning principle, the
decomposition of the residual bootstrap variance, the plug-in and
regularised algorithms, the coverage theorem and the conservative bias
theorem. Section~\ref{sec:simulations} reports the simulation studies:
correct specification, three misspecification designs, the frailty test
that rejects the omitted-frailty conjecture, the component decomposition
of the residual bootstrap, the oracle decomposition, conditional
calibration, and the anchoring regimes. Section~\ref{sec:empirical} applies
the construction to three benchmark triangles and to a Bornhuetter--Ferguson
illustration. Section~\ref{sec:discussion} concludes, and
Section~\ref{sec:limitations} states the limitations.

\section{Notation and data structure}\label{sec:notation}

Let $i = 1, \ldots, I$ index accident years and $j = 0, 1, \ldots, J-1$ index
development years. At valuation time~$I$, cells $(i, j)$ with $i + j \leq I$
are observed and cells with $i + j > I$ are future. Accident year $i$ is therefore observed through development year $I - i$, and
the number of accident years observed through at least development year $k$
is $I - k$; the concentration estimator of Section~\ref{sec:c_estimation}
uses the $I - k - 1$ of them observed beyond horizon $k$, a convention made
explicit in Definition~\ref{def:partial_props}. Row counts of this form
govern its precision.

For counts, $N_{ij}$ denotes the incremental claim count in cell $(i, j)$,
$N_i = \sum_j N_{ij}$ the ultimate claim count for accident year $i$, and
$\Nobs_i = \sum_{j=0}^{I-i} N_{ij}$ the observed count at valuation time. For
amounts, $X_{ij}$ denotes the incremental paid amount, $S_i = \sum_j X_{ij}$
the ultimate paid amount, and $\Xobs_i = \sum_{j=0}^{I-i} X_{ij}$ the observed
cumulative paid amount. We write $E_i$ for accident-year~$i$ exposure.

The two predictive quantities of interest are
\begin{equation}\label{eq:ibnr_ibnp_def}
  \Nibnr_i \;=\; N_i - \Nobs_i,
  \qquad
  \Sibnp_i \;=\; S_i - \Xobs_i,
\end{equation}
i.e.\ the count and amount of claims still outstanding for accident year $i$.
The first is the standard \emph{IBNR} (incurred but not reported) reserve. The
second we write \emph{IBNP} (incurred but not paid), by analogy with IBNR: it
is the amount of paid losses still outstanding for an accident year whose
ultimate has not yet been fully realised. We use IBNP throughout as the
amount-side counterpart of IBNR; the parallel between $\Nibnr_i$ and
$\Sibnp_i$ is the duality developed in Section~\ref{sec:amounts} and
Table~\ref{tab:duality}.

Any macro-level method producing ultimate estimates implicitly defines
cumulative development proportions $0 < F_0 < \cdots < F_{J-1} = 1$, with
incremental proportions $\pi_j = F_j - F_{j-1}$ forming a probability vector
$\bpi = (\pi_0, \ldots, \pi_{J-1})$ on the simplex:
\begin{equation}\label{eq:simplex}
  \sum_{j=0}^{J-1} \pi_j \;=\; 1,
  \qquad \pi_j > 0 \;\text{for all } j.
\end{equation}
The simplex constraint~\eqref{eq:simplex} is built into both the multinomial
allocation~\eqref{eq:multinomial} and the Dirichlet
allocation~\eqref{eq:dirichlet}. It is also the identifying convention shared
across the unified reserving framework: under the equivalent log-linear
parameterisation $\log \mu_{ij} = \alpha_i + \beta_j$ used in count-side
likelihood-based methods, the constraint $\sum_j \exp(\beta_j) = 1$ is the
same simplex condition expressed on the log scale, with
$\mu_i = \exp(\alpha_i)$ the expected ultimate count and
$\pi_j = \exp(\beta_j)$ the development weight. Throughout this paper we work
with $\bpi$ directly.

For Chain-Ladder, $F_j = 1 / (f_j \cdots f_{J-2})$ where $f_j$ are the link
ratios. For Bornhuetter--Ferguson, Cape Cod, or any other development method,
$\bpi$ is computed from that method's projected development pattern.

The vector of development shares of a single accident year,
$(X_{i,0}/S_i,\ldots,X_{i,J-1}/S_i)$, is a \emph{composition}: a vector of
non-negative proportions constrained to the unit simplex
\citep{aitchison1982}. Two operations on such a vector recur, and they should be kept strictly
apart.

\emph{Aggregation} sums a subset of components. The observed share
$\Wobs_i = \Xobs_i/S_i = \sum_{j \le I-i} X_{ij}/S_i$ is of this kind, and it
is the quantity the predictive of Section~\ref{sec:bootstrap} simulates.

\emph{Subcomposition} rescales a subset to its own total. For horizon $k$ and
accident year $i$ we write
\begin{equation}\label{eq:subcomp_def}
  W_{ij}^{(k)} \;=\; \frac{X_{ij}}{\sum_{l=0}^{k} X_{il}},
  \qquad j = 0,\ldots,k,
  \qquad\text{with}\qquad
  \pi_j^{(k)} \;=\; \frac{\pi_j}{F_k}
\end{equation}
its expectation. This is the quantity from which the concentration parameter
is estimated in Section~\ref{sec:c_estimation}, because it can be formed from
cells that are all observed, whereas $\Wobs_i$ has an unobserved denominator.
The two operations behave differently under the Dirichlet: aggregation
preserves the total concentration and subcomposition does not
(Lemmas~\ref{lem:dirichlet_agg} and~\ref{lem:dirichlet_sub}). Where the
distinction matters we write $a = cF_k$ for the subcomposition concentration.

\section{Multinomial framework for claim counts}\label{sec:counts}

The count-side hierarchy supplies the frequency--severity duality of
Section~\ref{sec:amounts}, it is the hierarchy in which the residual
bootstrap can be compared with the conditional predictive exactly
(Proposition~\ref{prop:boot_decomp} and Section~\ref{sec:sim_frailty}),
and it introduces the frailty dispersion $\kappa$, whose identifiability
from a single triangle is the count-side counterpart of the question this
paper asks about the concentration $c$.

\subsection{The Poisson--Gamma--Multinomial hierarchy}

For each accident year $i$, the count-side hierarchy is specified as
\begin{align}
\lambda_i &\sim \Gam(\kappa, \kappa/\mu_i), \label{eq:frailty}\\
N_i \mid \lambda_i &\sim \Pois(\lambda_i), \label{eq:poisson}\\
(N_{i,0},\ldots,N_{i,J-1}) \mid N_i
  &\sim \Mult(N_i, \bpi). \label{eq:multinomial}
\end{align}

Marginalising~\eqref{eq:frailty}--\eqref{eq:poisson} yields
$N_i \sim \NB(\kappa, \kappa/(\kappa + \mu_i))$. Throughout we use the
parameterisation in which $\NB(r, p)$ has mean $r(1-p)/p$ and variance
$r(1-p)/p^2$, so that the variance-to-mean ratio is $1/p$ and $p \to 1$ is the
Poisson limit.

The hierarchy separates two sources of uncertainty in the same way the
Dirichlet--Gamma construction of Section~\ref{sec:amounts} does. The frailty
$\lambda_i$ has mean $\mu_i$ and squared coefficient of variation $1/\kappa$,
so $\kappa$ governs how much accident years differ from one another beyond
Poisson noise; the multinomial allocation governs how a given year's claims
distribute across development periods. Writing $U_i = \lambda_i/\mu_i \sim
\Gam(\kappa,\kappa)$ makes the first component a unit-mean multiplier, which
is the form in which it appears as a data-generating process in
Section~\ref{sec:sim_frailty}: $\kappa \to \infty$ removes the frailty and
$\Var(U_i) = 1/\kappa$ measures it.

The following lemma records the law of the observed count $\Nobs_i$ given the
ultimate count $N_i$.

\begin{lemma}[Binomial thinning]\label{lem:thinning}
Under the multinomial allocation~\eqref{eq:multinomial},
$\Nobs_i \mid N_i \sim \mathrm{Binomial}(N_i, F_{I-i})$.
\end{lemma}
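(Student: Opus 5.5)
The plan is to use the aggregation property of the multinomial distribution. Conditional on $N_i$, the vector $(N_{i,0},\ldots,N_{i,J-1})$ is $\Mult(N_i,\bpi)$ by \eqref{eq:multinomial}. I will collapse the cells into two groups: the observed cells $j \le I-i$ and the future cells $j > I-i$. The observed count $\Nobs_i = \sum_{j=0}^{I-i} N_{ij}$ is the first group's total. Its success probability is $\sum_{j=0}^{I-i}\pi_j = F_{I-i}$, by the definition $\pi_j = F_j - F_{j-1}$ with $F_{-1}=0$ and telescoping.

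The cleanest route is the ball-allocation representation. Given $N_i = n$, the multinomial vector has the same law as the cell counts obtained when $n$ claims are assigned independently to development periods, each landing in period $j$ with probability $\pi_j$.

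\begin{itemize}
\item Define $B_m = 1$ if claim $m$ falls in a period $j \le I-i$, and $B_m = 0$ otherwise.
\item The $B_m$ are i.i.d.\ Bernoulli with parameter $\sum_{j\le I-i}\pi_j = F_{I-i}$.
\item $\Nobs_i = \sum_{m=1}^{n} B_m$, which is $\mathrm{Binomial}(n, F_{I-i})$.
\end{itemize}

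As an alternative check, I can compute the conditional probability generating function directly. The multinomial pgf is $\E[\prod_j z_j^{N_{ij}} \mid N_i] = (\sum_j \pi_j z_j)^{N_i}$. Setting $z_j = z$ for $j \le I-i$ and $z_j = 1$ otherwise gives
\[
\E\bigl[z^{\Nobs_i} \mid N_i\bigr] = \bigl(F_{I-i} z + 1 - F_{I-i}\bigr)^{N_i},
\]
which is the Binomial pgf.

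The argument involves no real obstacle. The only points needing care are the indexing conventions:

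\begin{itemize}
\item The telescoping needs the convention $F_{-1} = 0$, so that $\sum_{j=0}^{k}\pi_j = F_k$.
\item For $i$ with $I - i \ge J-1$ the row is fully developed, $F_{I-i}$ should be read as $1$, and the statement holds trivially.
\item The conditioning is on $N_i$ alone, so the frailty $\lambda_i$ plays no role. This is because \eqref{eq:multinomial} specifies the allocation given $N_i$ independently of $\lambda_i$.
\end{itemize}
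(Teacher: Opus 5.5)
Your proof is correct and takes the same approach as the paper. Conditional on $N_i$, each claim is allocated independently with probabilities $\bpi$, so the indicator of being reported by development year $I-i$ is Bernoulli with parameter $F_{I-i}$, and the sum of these indicators is $\mathrm{Binomial}(N_i, F_{I-i})$. Your pgf check and indexing remarks are valid but not needed for the argument.
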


\begin{proof}
Conditional on $N_i$, each of the $N_i$ claims is independently allocated to a
development period with probabilities $\bpi$. The event that a claim is
reported by development year $I-i$ has probability
$\sum_{j \leq I-i} \pi_j = F_{I-i}$, and the allocations are independent
across claims, so the number so reported is binomial.
\end{proof}

Combining that lemma with Gamma--Poisson conjugacy gives the predictive law of
the outstanding count.

\begin{theorem}[Negative Binomial IBNR]\label{thm:ibnr}
Under the hierarchy~\eqref{eq:frailty}--\eqref{eq:multinomial},
\[
\Nibnr_i \mid \Nobs_i
\sim \NB\!\left(\Nobs_i + \kappa,\;
\frac{\kappa + \mu_i F_{I-i}}{\kappa + \mu_i}\right).
\]
As $\kappa \to 0$, i.e.\ maximal frailty,
$\Nibnr_i \mid \Nobs_i \sim \NB(\Nobs_i, F_{I-i})$, which depends on the data
only through $\Nobs_i$. As $\kappa \to \infty$, i.e.\ no frailty, the
predictive tends to $\Pois(\mu_i(1-F_{I-i}))$, independent of $\Nobs_i$.
\end{theorem}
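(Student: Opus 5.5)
The plan is to use Poisson thinning rather than condition on $N_i$ directly. Given $\lambda_i$, Lemma~\ref{lem:thinning} together with $N_i \mid \lambda_i \sim \Pois(\lambda_i)$ splits $N_i$ into two independent Poisson variables:
\[
\Nobs_i \mid \lambda_i \sim \Pois(\lambda_i F_{I-i}),
\qquad
\Nibnr_i \mid \lambda_i \sim \Pois\bigl(\lambda_i(1-F_{I-i})\bigr).
\]
To prove this I would compute the joint probability generating function given $\lambda_i$. Mixing over $N_i$ turns the binomial split of Lemma~\ref{lem:thinning} into a product of two Poisson pgfs. This independence given $\lambda_i$ lets the frailty carry all the dependence between observed and outstanding counts.

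The second step is Gamma--Poisson conjugacy for the posterior of $\lambda_i$ given $\Nobs_i$. The prior~\eqref{eq:frailty} has density proportional to $\lambda^{\kappa-1}e^{-\kappa\lambda/\mu_i}$. Multiplying by the likelihood $\lambda^{\Nobs_i}e^{-\lambda F_{I-i}}$ gives
\[
\lambda_i \mid \Nobs_i \sim \Gam\bigl(\kappa+\Nobs_i,\ \kappa/\mu_i + F_{I-i}\bigr).
\]

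The third step mixes $\Pois(\lambda_i(1-F_{I-i}))$ over this posterior. The standard identity is that a Poisson with mean $s\lambda$, mixed over $\lambda\sim\Gam(r,b)$, is negative binomial with size $r$ and success probability $b/(b+s)$ in the paper's parameterisation (mean $r(1-p)/p$). I would verify this once through the pgf $\bigl(b/(b+s(1-z))\bigr)^r$. Here $r=\kappa+\Nobs_i$, $b=\kappa/\mu_i+F_{I-i}$ and $s=1-F_{I-i}$, so $b+s=\kappa/\mu_i+1$. Multiplying numerator and denominator by $\mu_i$ gives
\[
p=\frac{\kappa+\mu_iF_{I-i}}{\kappa+\mu_i},
\]
which is the stated law. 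I expect the only real care to lie here, in matching the NB parameterisation convention; the algebra itself is routine.

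For the limits I would argue from the pgf. As $\kappa\to0$, $r\to\Nobs_i$ and $p\to F_{I-i}$. As $\kappa\to\infty$, $r(1-p)=(\kappa+\Nobs_i)\mu_i(1-F_{I-i})/(\kappa+\mu_i)\to\mu_i(1-F_{I-i})$ while $p\to1$. The NB pgf $\bigl(p/(1-(1-p)z)\bigr)^r$ then converges to $\exp\{\mu_i(1-F_{I-i})(z-1)\}$, and continuity of pgfs gives convergence in law to the Poisson, which does not involve $\Nobs_i$.

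The one delicate point is the $\kappa\to0$ case when $\Nobs_i=0$. The limiting $\NB(0,F_{I-i})$ is then the point mass at zero, so the claim should be read with that convention, or as holding for $\Nobs_i\ge1$. I would note this explicitly in the proof.
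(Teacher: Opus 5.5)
Your proposal is correct and follows the paper's proof: Gamma--Poisson conjugacy with the thinned likelihood gives the posterior $\lambda_i \mid \Nobs_i \sim \Gam(\kappa+\Nobs_i,\,\kappa/\mu_i+F_{I-i})$, and mixing $\Pois(\lambda_i(1-F_{I-i}))$ over that posterior yields the stated negative binomial. Beyond the paper, you make explicit the conditional independence of $\Nobs_i$ and $\Nibnr_i$ given $\lambda_i$, which the paper uses tacitly; your pgf argument for the $\kappa\to\infty$ limit, and your $\Nobs_i=0$ caveat at $\kappa\to 0$, are also more careful than the paper's moment-based ``by inspection''.
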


\begin{proof}
Gamma--Poisson conjugacy with the thinned likelihood of
Lemma~\ref{lem:thinning} gives
$\lambda_i \mid \Nobs_i \sim \Gam(\kappa + \Nobs_i,\;
\kappa/\mu_i + F_{I-i})$, the second argument a rate; marginalising
$\Nibnr_i \mid \lambda_i \sim \Pois(\lambda_i(1-F_{I-i}))$ over this
posterior yields the stated negative binomial, with probability
$(\kappa/\mu_i + F_{I-i})/(\kappa/\mu_i + 1)
= (\kappa + \mu_i F_{I-i})/(\kappa + \mu_i)$. The limits follow by
inspection: as $\kappa \to 0$ the shape tends to $\Nobs_i$ and the
probability to $F_{I-i}$; as $\kappa \to \infty$ the probability tends to
one, the variance-to-mean ratio to one, and the mean to
$\mu_i(1-F_{I-i})$.
\end{proof}

The predictive mean of Theorem~\ref{thm:ibnr} rearranges into a credibility
form, $(1-F_{I-i})[Z_i \Nobs_i/F_{I-i} + (1-Z_i)\mu_i]$ with weight
$Z_i = \mu_i F_{I-i}/(\kappa + \mu_i F_{I-i})$, carrying the Chain-Ladder
estimate at $\kappa \to 0$ and the Bornhuetter--Ferguson form
$\mu_i(1-F_{I-i})$ at $\kappa \to \infty$. This is the count-side instance
of the credibility family developed in \citet{VanOirbeek2026cred} and is not
re-derived here.

\begin{remark}[Which $\kappa$ is identified]\label{rem:kappa_identification}
The parameter $\kappa$ in \eqref{eq:frailty} is a \emph{row-level} dispersion:
it measures how accident-year expectations vary about a common structure. It
is not the same quantity as the cell-level dispersion that a Poisson or
negative binomial regression with free accident-year effects reports, and the
two are not identified from the same fit. Where the mean structure carries a
free parameter per accident year, those parameters absorb the realised
$\lambda_i$ of each observed row, so the estimated dispersion measures
within-row variability and carries no information about $\kappa$. Absorption has a second consequence that is easily misread. Because the
free levels are fitted to each row, the predictive that a free-level
residual bootstrap generates is the $\kappa \to 0$ member of
Theorem~\ref{thm:ibnr}, whose variance is the largest in the family;
Proposition~\ref{prop:boot_decomp} shows that the bootstrap's resampling
of the diagonal reproduces exactly the posterior level variance of that
member. Absorbing the realised frailty therefore does not lose the
frailty uncertainty; it prices it at its diffuse-prior value, which is
never too small (Section~\ref{sec:sim_frailty}). It also means that a
value of $\hat\kappa$ reported from a single triangle under free accident-year levels, such as the $\hat\kappa = 4.8$ of
\citet{VanOirbeek2026nbcl}, should not be read as an estimate of the
row-level frailty dispersion of this hierarchy. The two share a symbol and
not an identification channel. In the notation of the companion papers the
hierarchy's $\kappa$ is the row dispersion $\kappa_r$ and the free-level
estimate a cell-level composite $\kappa_c$: under free levels $\kappa_r$ is
\emph{exactly} non-identified, by the collinearity of
Section~\ref{sec:related}, while what anchored levels recover is
bounded by the information ceiling of \citet{VanOirbeek2026cnbg}. For $\kappa_r$ the answer to the question this paper asks of the
concentration $c$ (Section~\ref{sec:c_estimation}) is therefore immediate:
without anchoring, nothing is identified at all.
\end{remark}


\section{Multinomial framework for claim sizes}\label{sec:amounts}

The amount-side hierarchy is the counterpart of the count-side hierarchy of
Section~\ref{sec:counts} through the duality of Table~\ref{tab:duality}, it
supplies the conditional predictive of the outstanding amount from which the
bootstrap of Section~\ref{sec:bootstrap} draws, and it introduces the
concentration $c$, whose estimation from a single triangle is treated in
Section~\ref{sec:c_estimation}.

\subsection{The Dirichlet--Gamma construction}

For each accident year $i$, the amount-side hierarchy is specified as
\begin{align}
S_i &\sim \Gam(\alpha E_i, \beta), \label{eq:gamma_ult}\\
\left(\frac{X_{i,0}}{S_i},\ldots,\frac{X_{i,J-1}}{S_i}\right)
  &\sim \Dir(c\pi_0,\ldots,c\pi_{J-1}). \label{eq:dirichlet}
\end{align}
As $c \to \infty$ the Dirichlet concentrates on $\bpi$ and
$X_{ij} \to S_i\pi_j$ deterministically.

The development allocation $(X_{i,\cdot}/S_i)$ is a composition: a vector of
non-negative proportions constrained to the unit simplex, in the sense of
\citet{aitchison1982}. The Dirichlet is neither the only nor the most flexible
distribution on that simplex. It imposes a neutrality structure, i.e.\ the proportions are completely
subcompositionally independent, which excludes the richer dependence admitted
by logistic-normal alternatives. It is
therefore a genuine modelling restriction rather than a default choice.

We adopt it deliberately. Note that the same neutrality that limits its
dependence is what makes the conditioning principle implementable: it yields the closed-form predictive distribution of
$(1-W_i)/W_i$ in Section~\ref{sec:bootstrap}, and with it the property that
the predictive attaches to any method producing a development pattern. The
cost is that calibrated prediction then depends on the data through a single
scalar $c$, and Section~\ref{sec:c_estimation} shows that a single paid
triangle determines that scalar poorly. The restriction buys tractability and generality, but not identifiability.

Table~\ref{tab:duality} makes the parallel structure explicit. Every
structural element has a counterpart: Poisson--Multinomial for counts mirrors
Gamma--Dirichlet for amounts.

\begin{table}[t]
\centering
\caption{The frequency--severity duality. The year-level frailty dispersion
$\kappa$ is identifiable only when accident-year means are anchored, for
instance to exposure via a common rate as in Cape Cod, since under free
accident-year parameters it is absorbed and not identifiable from a single
triangle (Remark~\ref{rem:kappa_identification}). The concentration $c$ carries no
such restriction, but is weakly determined for a different reason
(Section~\ref{sec:c_estimation}). Predictive rows are stated in the respective
diffuse limits: $\kappa \to 0$ for counts (Theorem~\ref{thm:ibnr}) and
$\alpha,\beta \to 0$ for amounts (Remark~\ref{rem:diffuse_scope}).}
\label{tab:duality}
\begin{tabular}{@{}lll@{}}
\toprule
\textbf{Component} & \textbf{Counts} & \textbf{Amounts} \\
\midrule
Ultimate distribution
  & $N_i \sim \NB(\kappa, p_i)$
  & $S_i \sim \Gam(\alpha E_i, \beta)$ \\
Allocation mechanism
  & $(N_{i,\cdot}) \sim \Mult(N_i, \bpi)$
  & $(X_{i,\cdot}/S_i) \sim \Dir(c\bpi)$ \\
Predictive distribution
  & $\Nibnr_i \sim \NB(\Nobs_i, F_{I-i})$
  & $\Sibnp_i = \Xobs_i \cdot R_i$,
    $R_i \sim \Bet'$ \\
CL point estimate
  & $\Nobs_i \cdot (1-F)/F$
  & $\Xobs_i \cdot (1-F)/F$ \\
Variance driver
  & $\kappa$ (frailty dispersion)
  & $c$ (allocation dispersion) \\
\bottomrule
\end{tabular}
\end{table}

\subsection{IBNP predictive distribution}

The following theorem gives the conditional law of the outstanding amount
$\Sibnp_i$ given the observed cumulative payment $\Xobs_i$, together with its
first two moments.

\begin{theorem}[IBNP predictive distribution]\label{thm:ibnp}
Under the Dirichlet--Gamma model~\eqref{eq:gamma_ult}--\eqref{eq:dirichlet},
with (ii)--(iv) stated in the diffuse-prior limit $\alpha,\beta \to 0$ (see
Remark~\ref{rem:diffuse_scope}):
\begin{enumerate}[label=(\roman*)]
\item $\Wobs_i = \Xobs_i/S_i \sim \Bet(cF_{I-i},\, c(1-F_{I-i}))$.
\item Conditional on $\Xobs_i$:
      $\Sibnp_i = \Xobs_i \cdot (1-\Wobs_i)/\Wobs_i$, where
      $\Wobs_i \sim \Bet(cF_{I-i},\, c(1-F_{I-i}))$.
\item $\E[\Sibnp_i \mid \Xobs_i]
      = \Xobs_i \cdot \dfrac{c(1-F_{I-i})}{cF_{I-i}-1}$
      (requires $cF_{I-i} > 1$). The ratio to the Chain-Ladder estimate
      $\Xobs_i(1-F_{I-i})/F_{I-i}$ is $cF_{I-i}/(cF_{I-i}-1) > 1$; the
      Chain-Ladder point estimate is recovered as $c \to \infty$.
\item $\CV^2(\Sibnp_i \mid \Xobs_i)
      = \dfrac{c-1}{c(1-F_{I-i})(cF_{I-i}-2)}
      \approx \dfrac{1}{cF_{I-i}(1-F_{I-i})}$
      (requires $cF_{I-i} > 2$).
\end{enumerate}
\end{theorem}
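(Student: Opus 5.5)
The plan is to handle (i) by the aggregation property of the Dirichlet, and then to obtain (ii)--(iv) from a change of variables in the pair $(S_i,\Wobs_i)$ followed by Beta-prime moment formulas. For (i), $\Wobs_i=\sum_{j\le I-i}X_{ij}/S_i$ is the sum of the first $I-i+1$ components of a $\Dir(c\bpi)$ vector. By Lemma~\ref{lem:dirichlet_agg}, aggregation preserves total concentration, so the aggregated pair $(\Wobs_i,1-\Wobs_i)$ is $\Dir(cF_{I-i},c(1-F_{I-i}))$, that is, Beta. I would check this against the Gamma representation: if $X_{ij}=G_j$ with independent $G_j\sim\Gam(c\pi_j,1)$, normalised by their sum, then sums of independent Gammas with a common rate are Gamma with the shapes added.

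For (ii), I would use the fact that under \eqref{eq:gamma_ult}--\eqref{eq:dirichlet} the composition is drawn independently of $S_i$. Hence $(S_i,\Wobs_i)$ has product density $f_S(s)\,f_W(w)$, with $f_S$ the $\Gam(\alpha E_i,\beta)$ density and $f_W$ the Beta of (i). The map $(s,w)\mapsto(x,w)$ with $x=sw$ has Jacobian $1/w$, so the joint density of $(\Xobs_i,\Wobs_i)$ is $f_S(x/w)f_W(w)/w$. The conditional density of $\Wobs_i$ given $\Xobs_i=x$ is therefore proportional to
\[
(x/w)^{\alpha E_i-1}e^{-\beta x/w}\,w^{-1}f_W(w)\;\propto\; w^{-\alpha E_i}e^{-\beta x/w}f_W(w).
\]
As $\alpha,\beta\to0$ the tilt $w^{-\alpha E_i}e^{-\beta x/w}$ tends to $1$ pointwise and is bounded on $(0,1]$ once $\alpha E_i<cF_{I-i}$. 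Dominated convergence then gives the normalised conditional density tending to $f_W$. So in the diffuse limit $\Wobs_i\mid\Xobs_i$ is the unconditional Beta, and $\Sibnp_i=S_i-\Xobs_i=\Xobs_i(1-\Wobs_i)/\Wobs_i$ holds identically. The same display also exhibits the tilt that survives away from the limit, which is the point of Remark~\ref{rem:diffuse_scope}.

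For (iii)--(iv), write $a=cF_{I-i}$ and $b=c(1-F_{I-i})$. The ratio $R=(1-W)/W$ with $W\sim\Bet(a,b)$ is Beta-prime$(b,a)$, whose moments are
\[
\E R=\frac{B(a-1,b+1)}{B(a,b)}=\frac{b}{a-1}\quad(a>1),\qquad
\E R^2=\frac{b(b+1)}{(a-1)(a-2)}\quad(a>2).
\]
The first gives $\E[\Sibnp_i\mid\Xobs_i]=\Xobs_i\,c(1-F_{I-i})/(cF_{I-i}-1)$. Dividing by the Chain-Ladder estimate gives the ratio $a/(a-1)>1$, which tends to $1$ as $c\to\infty$. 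For the second, $\Var R=b(a+b-1)/[(a-1)^2(a-2)]$. Since $\Xobs_i$ is fixed it cancels in the coefficient of variation, leaving
\[
\CV^2=\frac{a+b-1}{b(a-2)}=\frac{c-1}{c(1-F_{I-i})(cF_{I-i}-2)}.
\]
The stated approximation follows by replacing $c-1$ with $c$ and $cF_{I-i}-2$ with $cF_{I-i}$ for large $c$.

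The algebra is routine. The step I expect to need care is (ii), namely making the diffuse limit rigorous: $\alpha,\beta\to0$ is an improper prior, so I would state the result as the limit of the proper conditionals, as sketched above, rather than condition on an improper law. I would also verify the domination bound near $w\to0$, which is what requires $\alpha E_i<cF_{I-i}$.
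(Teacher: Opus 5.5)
Your proposal is correct and follows the paper's own route. You use aggregation for (i). For (ii) you use the tilted conditional density $\propto w^{cF_{I-i}-\alpha E_i-1}(1-w)^{c(1-F_{I-i})-1}e^{-\beta x/w}$ of Remark~\ref{rem:diffuse_scope}, which collapses to the marginal Beta, and for (iii)--(iv) the Beta-prime moments; your dominated-convergence step makes rigorous a limit the paper takes by inspection.

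One wording fix there. The tilt $w^{-\alpha E_i}e^{-\beta x/w}$ is not uniformly bounded as $\alpha,\beta\to0$, so dominate the integrand instead by $w^{-\alpha_0 E_i}f_W(w)$ for $\alpha\le\alpha_0$. That envelope is integrable exactly when $\alpha_0 E_i<cF_{I-i}$, which is the condition you already identified.
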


\begin{proof}
(i) Dirichlet aggregation (Lemma~\ref{lem:dirichlet_agg}), which preserves the
total concentration $c$.
(ii) Since $\Xobs_i = S_i\Wobs_i$ and $\Sibnp_i = S_i(1-\Wobs_i)$, dividing
gives the ratio; in the diffuse limit the conditional law of $\Wobs_i$ given
$\Xobs_i$ is its marginal Beta (Remark~\ref{rem:diffuse_scope}).
(iii) For $W \sim \Bet(a,b)$ with $a > 1$, $\E[1/W] = (a+b-1)/(a-1)$; with
$a = cF$ and $b = c(1-F)$ this gives $\E[(1-W)/W] = c(1-F)/(cF-1)$.
(iv) $(1-W)/W \sim \beta'(c(1-F), cF)$ with mean $c(1-F)/(cF-1)$ and variance
$c(1-F)(c-1)/[(cF-1)^2(cF-2)]$; dividing gives the stated $\CV^2$. The
existence conditions in (iii) and (iv) are the cases $k = 1$ and $k = 2$ of
the moment condition $cF_{I-i} > k$, developed in
Remark~\ref{rem:validity}.
\end{proof}

\begin{remark}[Scope of the conditional statements]\label{rem:diffuse_scope}
By Lemma~\ref{lem:gamma_dirichlet}, $S_i$ and the allocation vector are
independent, so the marginal statement (i) holds under any prior on $S_i$.
Conditioning on $\Xobs_i = S_i \Wobs_i$, however, couples the two: under
$S_i \sim \Gam(\alpha E_i, \beta)$ the conditional density of $\Wobs_i$ given
$\Xobs_i = x$ is
\[
p(w \mid x) \;\propto\;
w^{\,cF_{I-i} - \alpha E_i - 1}\,
(1-w)^{\,c(1-F_{I-i}) - 1}\,
e^{-\beta x / w},
\qquad w \in (0,1),
\]
which is a Beta density if and only if the prior is diffuse
($\alpha, \beta \to 0$), in which case it reduces to the marginal
$\Bet(cF_{I-i}, c(1-F_{I-i}))$. Statements (ii)--(iv) and
Algorithm~\ref{alg:bootstrap} are therefore statements about the
diffuse-anchored, Chain-Ladder-type predictive. Under an informative anchor
the Beta draw is a plug-in approximation to the tilted density above; the
Bornhuetter--Ferguson variant of Section~\ref{sec:empirical} is of this kind.

The approximation is not innocuous. Section~\ref{sec:conditional_coverage}
measures its cost at roughly ten coverage points across terciles of realised
allocation dispersion, at the ultimate dispersion typical of this study, and
shows that cost vanishing as $\alpha, \beta \to 0$: the gradient falls to
$1.4$ points, inside Monte-Carlo error, when the prior on $S_i$ becomes
diffuse. This holds \emph{even when $c$ and $\bpi$ are known exactly}, so it
is not estimation error and no amount of integration removes it. It is the price of the closed form, and
Section~\ref{sec:limitations} returns to the resulting tension.
\end{remark}

Under the Dirichlet--Gamma framework, zeros arise through the count channel:
$X_{ij} = 0$ if and only if $N_{ij} = 0$, with probability
$(\kappa/(\kappa + \mu_{ij}))^\kappa$, a structural consequence of $\kappa$
and $\mu_{ij}$ rather than a free parameter. See \citet{VanOirbeek2026cnbg}
for the structural zero test and the zero-inflated extension. Note that zero
cells are not merely a modelling nuisance: the estimator of
Section~\ref{sec:c_estimation} forms ratios with partial row sums in the
denominator, so a row containing a zero in an early column contributes no
usable subcomposition at that horizon.

\begin{remark}[Validity of the posterior moments and the boundary regime]
\label{rem:validity}
Theorem~\ref{thm:ibnp}(iii) requires $cF_{I-i} > 1$ for the posterior mean of
$(1-W_i)/W_i$ to exist. The condition is a structural feature of the
Beta-prime distribution: $\E[(1-W)/W] = (1-F)/(F - 1/c)$, whose denominator
vanishes as $cF_{I-i} \to 1$ from above. More generally, for
$(1-W_i)/W_i \sim \beta'\bigl(c(1-F_{I-i}),\, cF_{I-i}\bigr)$ the $k$-th
moment is finite if and only if $cF_{I-i} > k$. Every boundary in the rule
below is an instance of that condition; none is a convention.

\begin{center}
\begin{tabular}{@{}lll@{}}
\toprule
$\hat{c}F_{I-i}$ & What exists & What to report \\
\midrule
$\leq 1$   & no moments        & median and the 5th, 25th, 50th,
                                 75th, 95th percentiles \\
$(1, 2]$   & mean only         & quantiles; the mean is unstable \\
$(2, 4]$   & mean and variance & quantiles; a standard error has no
                                 finite-sample guarantee \\
$> 4$      & all four moments  & full moment reporting \\
\bottomrule
\end{tabular}
\end{center}

We apply $\hat{c}F_{I-i} \geq 5$ throughout, i.e.\ the last boundary with a
margin. The margin is warranted because $\hat{c}$ is itself upward-biased at
realistic $I$ (Remark~\ref{rem:sigma_c_finite}), so the realised
$cF_{I-i}$ typically sits below the value to which the rule is applied. The
third tier is the one most easily overlooked: the variance exists there, so
software will return a standard error, but the fourth moment does not, and the
sample variance of the bootstrap draws therefore has infinite variance.

Note what does \emph{not} fail. Algorithm~\ref{alg:bootstrap} remains
operational at every tier: each draw
$W_i^{*(b)} \sim \Bet(\hat{c}F_{I-i}, \hat{c}(1-F_{I-i}))$ is well defined for
any $\hat{c}, F_{I-i} > 0$ and the resulting $S_i^{\mathrm{IBNP},(b)}$ is
finite almost surely. What fails is the convergence of the sample moments, not
the sampler. Quantiles remain defined throughout. Note that being defined is not the same
as being useful, however: Section~\ref{sec:sim_tweedie} reports a cell at
$\hat{c}F_{I-i} \approx 2$ where nominal coverage is delivered by an interval
$357$ times the true reserve.

The degradation across the tiers is gradual rather than abrupt. In $200$
independent bootstrap runs of $B = 5{,}000$ at $F_{I-i} = 0.45$, the
coefficient of variation of the reported standard error is $0.018$ at
$\hat{c}F_{I-i} = 12$, $0.037$ at $5$, $0.060$ at $4$, $0.230$ at $2.5$ and
$1.54$ at $1.5$. The tiers identify where a reported moment ceases to be
defined; the accompanying instability is a continuous function of
$\hat{c}F_{I-i}$ and not a threshold effect at any practical $B$. Whether a
given portfolio will encounter these tiers is checkable in advance from the
operability bound $c^\dagger(\tau) = \tau/F_{I-i}$ of
Section~\ref{sec:c_threshold}, without computing a reserve.
\end{remark}

\subsection{Estimation of the concentration parameter}
\label{sec:c_estimation}

The concentration $c$ is estimated from the dispersion of the partial-column
proportions across accident years, which are defined first.

\begin{definition}[Partial-column proportions]\label{def:partial_props}
For horizon $k \in \{1,\ldots,J-2\}$ and accident year $i$ with $k < I - i$,
\[
W_{ij}^{(k)}
= \frac{X_{ij}}{\sum_{l=0}^{k} X_{il}},
\qquad j = 0,\ldots,k-1,
\]
the final component $j = k$ being determined by the others and therefore
omitted. By Lemma~\ref{lem:dirichlet_sub},
\[
W_{ij}^{(k)} \sim \Bet\bigl(a\,\pi_j^{(k)},\; a(1-\pi_j^{(k)})\bigr),
\qquad \pi_j^{(k)} = \pi_j/F_k, \qquad a = cF_k,
\]
so that $\E[W_{ij}^{(k)}] = \pi_j^{(k)}$ and
\begin{equation}\label{eq:partial_var}
\Var\bigl(W_{ij}^{(k)}\bigr)
= \frac{\pi_j^{(k)}(1-\pi_j^{(k)})}{a + 1}
= \frac{\pi_j^{(k)}(1-\pi_j^{(k)})}{cF_k + 1}.
\end{equation}
The concentration of the partial-column subcomposition is $cF_k$ and not $c$
(Remark~\ref{rem:agg_vs_sub}): rescaling by a partial row total discards the
development mass beyond horizon $k$, and the residual composition is
correspondingly less concentrated. The factor $F_k$ must therefore be undone
when the cell estimators are combined, as in
Proposition~\ref{prop:c_estimator}. The condition $k < I - i$ restricts to
rows observed \emph{beyond} horizon $k$: the diagonal row $i = I - k$,
observed exactly through $k$, also carries a fully observed horizon-$k$
subcomposition and is excluded by this convention. Including it would replace
$n_k = I - k - 1$ by $I - k$, one additional row per horizon, most material
at the small $I$ where the estimator is weakest. Every number in this paper
uses the stated convention.
\end{definition}

\begin{proposition}[Moment estimator of $c$]\label{prop:c_estimator}
For each $(j,k)$ with $n_k = I-k-1 \geq 3$,
\begin{equation}\label{eq:c_moment}
\hat{c}_{jk}
= \frac{1}{\hat{F}_k}
  \left[\frac{\hat{\pi}_j^{(k)}(1-\hat{\pi}_j^{(k)})}
             {\widehat{\Var}(W_{\cdot j}^{(k)})} - 1\right],
\qquad \hat{F}_k = \sum_{l=0}^{k}\hat\pi_l ,
\end{equation}
and the overall estimator is $\hat{c} = \mathrm{median}\{\hat{c}_{jk}\}$ over
all valid pairs.
\end{proposition}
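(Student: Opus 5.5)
The proposition is essentially a definition of an estimator together with the implicit claim that it is a sound moment estimator for $c$, i.e.\ that each $\hat c_{jk}$ is obtained by inverting the population moment identity and is consistent. I will therefore prove it by inverting the variance formula \eqref{eq:partial_var} and then arguing consistency of the plug-ins.

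\emph{The population identity.} From Definition~\ref{def:partial_props}, which rests on Lemma~\ref{lem:dirichlet_sub}, $W_{ij}^{(k)} \sim \Bet(a\pi_j^{(k)}, a(1-\pi_j^{(k)}))$ with $a = cF_k$. Hence
\[
\frac{\pi_j^{(k)}(1-\pi_j^{(k)})}{\Var(W_{ij}^{(k)})} = cF_k + 1,
\qquad\text{so}\qquad
c = \frac{1}{F_k}\left[\frac{\pi_j^{(k)}(1-\pi_j^{(k)})}{\Var(W_{ij}^{(k)})} - 1\right].
\]
This identity holds for every valid pair $(j,k)$, and it holds identically in $i$. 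The reason is that the Dirichlet allocation \eqref{eq:dirichlet} does not depend on $S_i$ or $E_i$, so by Lemma~\ref{lem:gamma_dirichlet} the proportions are i.i.d.\ across accident years whatever the ultimate prior. Replacing $\pi_j^{(k)}$ by $\hat\pi_j^{(k)} = \hat\pi_j/\hat F_k$ or by the column sample mean, $\Var$ by the sample variance over the $n_k = I-k-1$ rows with $k<I-i$, and $F_k$ by $\hat F_k$ gives exactly \eqref{eq:c_moment}. The step that needs care is that the subcomposition concentration is $cF_k$ and not $c$; that is what produces the $1/\hat F_k$ and the ``$-1$'', and it is the content of Remark~\ref{rem:agg_vs_sub}.

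\emph{Consistency.} For fixed $(j,k)$, let the number of rows grow. The sample mean and the sample variance of the i.i.d.\ Beta variables converge almost surely, and the Beta variance is strictly positive. If $\hat\pi$ and $\hat F_k$ are consistent, for instance from Chain-Ladder under the hierarchy, the continuous mapping theorem gives $\hat c_{jk} \to c$. The requirement $n_k \ge 3$ makes the sample variance well defined with at least two degrees of freedom; it is a practical floor, not an asymptotic condition.

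\emph{The median.} Every $\hat c_{jk}$ targets the same $c$, so a median over finitely many consistent estimators is itself consistent. I would use the sandwich between the minimum and the maximum of the $\hat c_{jk}$ to conclude.

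The main obstacle is the dependence structure. Across $j$ for fixed $i$ the cell estimators are correlated, and they are correlated again across horizons $k$ because the subcompositions are nested. The ratio form also means the estimator is biased in finite samples (Remark~\ref{rem:sigma_c_finite}), so only consistency, and not unbiasedness, should be claimed. If consistency of $\hat F_k$ is not available, I would first treat $\pi$ as known and then add a perturbation argument.
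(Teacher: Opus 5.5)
Your proposal is correct and follows the paper's own argument. Both invert the subcomposition variance \eqref{eq:partial_var} to get a moment estimator of $a = cF_k$, divide by $\hat F_k$ so that every cell estimator has probability limit $c$, and take the median of these homogeneous quantities. You supply more detail on consistency (law of large numbers plus continuous mapping, and the min--max sandwich for the median), while the paper adds that without the $\hat F_k$ correction the median would converge to $cF_{k^*}$ and so be inconsistent for $c$.
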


\begin{proof}
By~\eqref{eq:partial_var} the bracketed quantity in~\eqref{eq:c_moment} is the
moment estimator of the subcomposition concentration $a = cF_k$; dividing by
$\hat{F}_k$ returns an estimator of $c$. The median aggregation reduces
sensitivity to noisy variance estimates at individual horizons, and after the
$\hat{F}_k$ correction every cell estimator has the same probability limit
$c$, so the median combines homogeneous quantities. Without the correction the
cell estimators have horizon-dependent limits $cF_k$ and the aggregate
converges to $cF_{k^*}$ at the median contributing horizon $k^*$: the
estimator is then inconsistent for $c$.
\end{proof}

The correction is not cosmetic, and its size depends on the development
pattern. With $\bpi = (0.45, 0.25, 0.15, 0.10, 0.05)$ the contributing
horizons carry $F_k \in \{0.70, 0.85, 0.95\}$, so an individual uncorrected
cell estimator understates $c$ by between $5\%$ and $30\%$. The aggregate is
governed by the median contributing horizon: at $I = 10$ the six contributing
cells give $F_{k^*} = 0.90$ and hence a $10\%$ understatement. On a $J = 10$
triangle with the geometric pattern of Section~\ref{sec:sim_design} the
understatement is smaller, about $4\%$, because the later horizons carry
$F_k$ close to one and contribute most of the cells. In every case it is a
systematic bias rather than a small-sample effect: it does not diminish as $I$
grows, and simulation under the uncorrected estimator returns
$\E[\hat c]/c = 0.916$ at $I = 100$, where the finite-sample effects of
Remark~\ref{rem:sigma_c_finite} have all but vanished.

\begin{proposition}[Per-cell asymptotic variance of the moment estimator]
\label{prop:c_variance}
Under the Dirichlet--Gamma model with fixed $J$ and true $c > 0$, each cell
estimator $\hat{c}_{jk}$ from Proposition~\ref{prop:c_estimator} satisfies
\begin{equation}\label{eq:c_clt}
\sqrt{n_k}\,(\hat{c}_{jk} - c)
\xrightarrow{d} \mathcal{N}(0,\, \sigma_{c,k}^2)
\quad \text{as } I \to \infty, \quad n_k = I-k-1,
\end{equation}
where, writing $a \equiv cF_k$ for the subcomposition concentration at horizon
$k$ and $\pi \equiv \pi_j^{(k)}$,
\begin{equation}\label{eq:sigma_c_explicit}
\boxed{
\sigma_{c,k}^2
= \frac{c\,(a+1)\bigl[2a(a-3)\,\pi(1-\pi) + 3a + 1\bigr]}
       {F_k\,\pi(1-\pi)(a+2)(a+3)},
\qquad a = cF_k .
}
\end{equation}
At $F_k = 1$ this reduces to
$c(c+1)[2c(c-3)\pi(1-\pi)+3c+1]/[\pi(1-\pi)(c+2)(c+3)]$. The expression is
strictly positive for all $c > 0$, $F_k \in (0,1]$, $\pi \in (0,1)$, and for
large $c$ satisfies $\sigma_{c,k}^2 \sim 2c^2$ independently of $F_k$: the
horizon enters the constant but not the rate, and the rate is scale-free in
the concentration.

The aggregated estimator $\hat{c} = \mathrm{median}\{\hat{c}_{jk}\}$ combines
correlated cell estimators on the same triangle and inherits the
$\sqrt{I}$-rate, so that $\mathrm{MSE}(\hat{c}) = O(I^{-1})$. Writing
$\bar\sigma_c^2$ for the average of \eqref{eq:sigma_c_explicit} over the $M$
contributing cells, its variance is bounded above by $\bar\sigma_c^2/I$
(perfectly correlated cells) and below by $\bar\sigma_c^2/(I \cdot M)$
(perfectly independent cells) \emph{as $I \to \infty$}. That qualifier is
essential and not a formality: at the triangle sizes at which macro-level
reserving operates the upper bound is exceeded, at $I = 7$ by more than an
order of magnitude (Remark~\ref{rem:sigma_c_finite}).
\end{proposition}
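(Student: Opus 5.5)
The plan is a delta-method argument applied cell by cell, followed by a continuous-mapping step for the median. Fix a valid pair $(j,k)$ and write $Y_i = W_{ij}^{(k)}$ for the $n_k = I-k-1$ contributing rows. Distinct accident years are independent under \eqref{eq:gamma_ult}--\eqref{eq:dirichlet}. By Lemma~\ref{lem:dirichlet_sub}, each $Y_i \sim \Bet(a\pi, a(1-\pi))$ with $a = cF_k$ and $\pi = \pi_j^{(k)}$, so the $Y_i$ are i.i.d. I will take $\hat\pi_j^{(k)}$ to be the row mean $m$ of the $Y_i$ and $\widehat{\Var}$ to be their sample variance $s^2$.

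I will treat the normalising $\hat F_k$ as fixed at $F_k$. Algebraically, the boxed expression is exactly $\Var_\infty(\hat a)/F_k^2$ once $c = a/F_k$ is substituted, so this is the reading under which \eqref{eq:sigma_c_explicit} holds. I will state it explicitly as a conditioning on the pattern, which matches its use elsewhere as an input anchor, rather than claim the estimation error in $\hat F_k$ is negligible.

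The core step applies the bivariate CLT to $(m, s^2)$. Its asymptotic covariance matrix is
\[
n_k^{-1}\begin{pmatrix}\mu_2 & \mu_3\\ \mu_3 & \mu_4-\mu_2^2\end{pmatrix},
\]
where $\mu_r$ are the central moments of the Beta. I then apply the delta method to $g(m,v) = m(1-m)/v - 1$, whose gradient at $(\pi,\mu_2)$ is
\[
\Bigl((1-2\pi)/\mu_2,\; -\pi(1-\pi)/\mu_2^2\Bigr), \qquad \mu_2 = \pi(1-\pi)/(a+1).
\]
This requires the standard closed forms for the Beta:
\[
\mu_3 = \frac{2\pi(1-\pi)(1-2\pi)}{(a+1)(a+2)},
\]
together with $\mu_4$, which has denominator $(a+1)(a+2)(a+3)$. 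Substituting these, the quadratic form becomes a rational function of $(a,\pi)$ with denominator $\pi(1-\pi)(a+2)(a+3)$. The target is
\[
a(a+1)\bigl[2a(a-3)\pi(1-\pi)+3a+1\bigr] \Big/ \bigl[\pi(1-\pi)(a+2)(a+3)\bigr].
\]
Dividing by $F_k^2$ and using $a/F_k = c$ then yields \eqref{eq:sigma_c_explicit}. The $F_k=1$ specialisation is immediate.

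I expect the main obstacle to be this algebraic collapse. The $(1-2\pi)^2$ contributions from the variance and covariance terms must combine with the $\mu_4$ term so that only $\pi(1-\pi)$ survives. I would carry it out writing $(1-2\pi)^2 = 1-4\pi(1-\pi)$ throughout, so that everything is polynomial in $q = \pi(1-\pi)$ and $a$, and then check the result numerically at one or two points.

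The remaining claims are short. For positivity, $q \le 1/4$ handles the only possibly negative case $a<3$, since then
\[
2a(a-3)q + 3a + 1 \ge 3a+1 - a(3-a)/2 = (a^2+3a+2)/2 > 0.
\]
For large $c$, the leading terms give $\sigma^2_{c,k} \sim c\cdot a\cdot 2a^2q/(F_k q a^2) = 2c^2$, independent of $F_k$.

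For the aggregate, I will stack the cell estimators and use that all functions of the same rows are jointly asymptotically normal with common limit $c$. Since each $n_k \asymp I$ for fixed $J$, we get $\sqrt{I}(\hat c_{jk}-c)_{jk} \to \mathcal N(0,\Sigma)$. The median is a $1$-Lipschitz, location-equivariant function, so by continuous mapping $\sqrt I(\hat c - c)$ converges to the median of that Gaussian vector, which gives $\mathrm{MSE} = O(I^{-1})$ given uniform integrability.

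For the bounds, the perfectly correlated case gives the median equal to a common Gaussian with variance $\bar\sigma_c^2$. The independent case behaves like averaging over the $M$ cells. I will present these two as the extremal configurations, and flag explicitly that they are asymptotic, as the statement itself stresses.
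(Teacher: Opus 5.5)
Your proposal follows the paper's proof essentially step for step. It uses the same delta method on $f(m,s^2)=m(1-m)/s^2-1$, with the same gradient and the same Beta-moment covariance $(\mu_2,\mu_3,\mu_4-\mu_2^2)$ at concentration $a=cF_k$. It reduces to a polynomial in $q=\pi(1-\pi)$ via $(1-2\pi)^2=1-4q$, divides by $F_k^2$ with $F_k$ treated as known, and uses the same $q\le 1/4$ positivity argument.

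Your treatment of the median, by joint asymptotic normality and continuous mapping with the uniform-integrability caveat for the MSE, is tighter than the paper's appeal to sample-median results. As in the paper, though, the two variance bounds are only illustrated by two extreme configurations and are not shown to be extremal.
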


\begin{proof}
Write $\pi \equiv \pi_j^{(k)}$, $q \equiv \pi(1-\pi)$, $a \equiv cF_k$ and
$\sigma^2 \equiv q/(a+1)$, the subcomposition variance
\eqref{eq:partial_var}. Consider first the uncorrected cell statistic
$\hat{g}_{jk} = f(\hat{m}, \hat{s}^2)$ with $f(m, s^2) = m(1-m)/s^2 - 1$,
which by Lemma~\ref{lem:dirichlet_sub} is the moment estimator of $a$. The
gradient of $f$ at the true values $(m, s^2) = (\pi, \sigma^2)$ is
\[
\nabla f
= \left(\frac{(1-2\pi)(a+1)}{q},\;
  -\frac{(a+1)^2}{q}\right),
\]
and the asymptotic covariance matrix of $(\hat{m}, \hat{s}^2)$ scaled by $n_k$
is $\Sigma$ with $\Sigma_{11} = \mu_2(W) = q/(a+1)$,
$\Sigma_{12} = \mu_3(W)$ and $\Sigma_{22} = \mu_4(W) - \mu_2(W)^2$, given by
\eqref{eq:beta_mu2}--\eqref{eq:beta_mu4} with concentration $a$. The delta
method $\tau_g^2 = \nabla f^\top \Sigma\, \nabla f$ yields
\begin{align*}
\tau_g^2
&= \frac{(1-2\pi)^2(a+1)}{q}
\;-\; \frac{4(1-2\pi)^2(a+1)^2}{q(a+2)} \\
&\quad + \frac{2(a+1)^2\bigl[(a^2-10a-12)\,q + 3(a+1)\bigr]}
       {q(a+2)(a+3)}.
\end{align*}
Extracting the common factor $(a+1)/\bigl(q(a+2)(a+3)\bigr)$ and substituting
$(1-2\pi)^2 = 1 - 4q$, the bracketed numerator is
\[
(1-4q)(a+2)(a+3) - 4(1-4q)(a+1)(a+3)
+ 2(a+1)\bigl[(a^2-10a-12)q + 3(a+1)\bigr].
\]
The two $(1-4q)$ terms combine to $-(1-4q)(a+3)(3a+2)$; the constant part
collects to $6(a+1)^2 - (a+3)(3a+2) = a(3a+1)$, and the coefficient of $q$ to
$4(a+3)(3a+2) + 2(a+1)(a^2-10a-12) = 2a^2(a-3)$. Hence
\[
\tau_g^2
= \frac{a(a+1)\bigl[2a(a-3)\,q + 3a + 1\bigr]}{q(a+2)(a+3)},
\]
so $\sqrt{n_k}(\hat{g}_{jk} - a) \xrightarrow{d} \mathcal{N}(0, \tau_g^2)$.
The reported cell estimator is $\hat{c}_{jk} = \hat{g}_{jk}/\hat{F}_k$;
treating $F_k$ as known, Slutsky's theorem gives
$\sqrt{n_k}(\hat{c}_{jk} - c) \xrightarrow{d}
\mathcal{N}(0, \tau_g^2/F_k^2)$, and substituting $a = cF_k$, so that
$a/F_k^2 = c/F_k$, yields~\eqref{eq:sigma_c_explicit}.

Positivity is checked as follows. For $a \geq 3$ both $2a(a-3)q$ and $3a+1$
are positive.
For $a < 3$ we have $2a(3-a)q \leq a(3-a)/2$ since $q \leq 1/4$, while
$3a+1 > a(3-a)/2$ for all $a \in [0,3)$. Hence $\sigma_{c,k}^2 > 0$
throughout.

Turning to the asymptotics, since $n_k = I-k-1 = O(I)$, each cell estimator
has
variance $\sigma_{c,k}^2/n_k = O(I^{-1})$. The median of correlated cell
estimators inherits the $\sqrt{I}$ rate: by standard results for the sample
median \citep{vdVaart1998}, $\sqrt{I}(\hat{c}-c) = O_p(1)$, and consequently
$\mathrm{MSE}(\hat{c}) = O(I^{-1})$. The exact variance constant of $\hat{c}$
depends on the inter-cell correlation structure induced by shared rows on the
triangle and is reported empirically in Remark~\ref{rem:sigma_c_finite};
deriving its closed form requires fourth-order cross-cumulants of the
Dirichlet allocation and is not pursued here.
\end{proof}

Expression~\eqref{eq:sigma_c_explicit} treats $F_k$ as known. In practice
$\hat{F}_k$ is estimated from the same triangle at rate $O_p(I^{-1/2})$, so
plugging it in contributes an additional term of the same order. The rate
$\sqrt{n_k}$ is unaffected, and the additional constant is absorbed
empirically in the reduction factor of Remark~\ref{rem:sigma_c_finite},
which is measured with $\hat{F}_k$ in place. A closed form for the combined
constant requires the covariance between $\hat{F}_k$ and the cell variance
estimates, and is not pursued here.

\begin{remark}[Finite-sample behaviour of $\hat{c}$]\label{rem:sigma_c_finite}
Proposition~\ref{prop:c_variance} makes three separable claims, i.e.\ a rate,
a scaling in $c$ and a constant, and simulation under the model ($J = 5$,
$\bpi = (0.45, 0.25, 0.15, 0.10, 0.05)$, $M = 10{,}000$ per cell, seed 2026)
shows that the three do not all hold. The implemented estimator is first
verified to use exactly the cells of Definition~\ref{def:partial_props}: the
six contributing cells carry average development mass $\bar F = 0.875$ and
average subcomposition proportion $\bar\pi = 0.3935$, and the simulation
reproduces both to four decimals. Table~\ref{tab:sigma_c_finite} then
compares the measured scaled variance $I \cdot \widehat{\Var}(\hat c)$ with
$\bar\sigma_c^2$, the average of \eqref{eq:sigma_c_explicit} over those
cells, and reports the bias ratio $\E[\hat c]/c$ on the same grid.

\begin{table}[ht]\centering
\caption{Finite-sample behaviour of $\hat{c}$ ($J = 5$, $M = 10{,}000$ per
cell). Top panel: ratio of measured scaled variance
$I\cdot\widehat{\Var}(\hat c)$ to the closed form $\bar\sigma_c^2$; values
above one mean the asymptotic expression \emph{understates} the variance.
Bottom panel: bias ratio $\E[\hat c]/c$, averaged over the three values of
$c$; it is scale-free in $c$ (variation across $c$ at most $3.0$ points at
$I = 7$ and at most $1.0$ point at $I \geq 15$).}
\label{tab:sigma_c_finite}
\begin{tabular}{@{}lcccccc@{}}
\toprule
$I$ & 7 & 10 & 15 & 20 & 50 & 100 \\
\midrule
\multicolumn{7}{@{}l}{\emph{Variance ratio
  $I\cdot\widehat{\Var}(\hat c)/\bar\sigma_c^2$}}\\
$c = 20$  & 15.75 & 2.85 & 1.55 & 1.12 & 0.79 & 0.72 \\
$c = 50$  & 14.00 & 2.77 & 1.47 & 1.14 & 0.79 & 0.73 \\
$c = 100$ & 14.75 & 2.96 & 1.55 & 1.19 & 0.81 & 0.73 \\
\midrule
\multicolumn{7}{@{}l}{\emph{Bias ratio $\E[\hat{c}]/c$}}\\
          & 1.89 & 1.31 & 1.15 & 1.10 & 1.03 & 1.015 \\
\bottomrule
\end{tabular}
\end{table}

The rate holds: the variance ratio converges, from above, and sits
inside the asymptotic interval $[1/M, 1]$ by $I = 50$; at $I = 100$ the
median over the six correlated cell estimators reduces the variance to
$0.73$ of the average per-cell formula, stable to within $0.01$ across an
order of magnitude in $c$. Note that this factor is a property of the asymptotic
regime, and should not be carried to smaller triangles.

The scaling in $c$ holds as well: at every $I$ the three variance rows
agree to within $\pm 6\%$, and the bias ratio is
likewise scale-free. This is the finite-sample counterpart of
$\sigma_{c,k}^2 \sim 2c^2$: both the relative bias and the coefficient of
variation of $\hat{c}$ depend on $I$ and the development pattern but not on
the concentration. It accounts for the near-flatness of coverage in $c$ in
Table~\ref{tab:sensitivity}, and it is why the diagnostic of
Section~\ref{sec:c_threshold} can be stated against a portfolio-specific
bound rather than recalibrated at each concentration.

The constant, however, does not hold at practical $I$. At $I = 10$ the closed
form understates $\Var(\hat c)$ by a factor of $2.9$ and at $I = 7$ by
$15.8$, while the asymptotic upper bound $\bar\sigma_c^2/I$ is exceeded for
every $I \leq 20$. At $c = 50$ that bound implies $\sigma(\hat c) \approx 22$
at $I = 10$ against a simulated $\mathrm{sd}(\hat{c}) = 45.9$
(Table~\ref{tab:sim_correct}). Expression~\eqref{eq:sigma_c_explicit} should
therefore be used to reason about how the precision of $\hat{c}$ scales with
$I$ and $c$, and not to compute a standard error on a triangle of the size actually encountered,
where it is optimistic by a factor of two to sixteen; an interval for $c$
should come from the posterior of Section~\ref{sec:regularised}, defined at
every $I$. More consequentially, the estimator is biased upward, by a factor
depending only on $I$ (bottom panel), and the mechanism is the same in both
failures: the delta-method expansion linearises
$\hat\pi(1-\hat\pi)/\hat{s}^2$ around the true moments, and at $n_k \in \{3,4,5\}$ the sampling distribution of $\hat{s}^2$ is nowhere
near that neighbourhood. The Jensen effect
$\E[1/\hat{s}^2] > 1/\E[\hat{s}^2]$ leaves a residual $1.5\%$ bias even at
$I = 100$ ($O(n_k^{-1})$, but not vanished at the horizon sample sizes
available), and at practical $I$ it is the dominant component of the
estimation error. An overstated $\hat{c}$ concentrates the Beta predictive
and narrows the interval, so bias and variance act in the same direction on
coverage; Section~\ref{sec:oracle} measures their combined effect. The estimator is consistent and its asymptotic
variance is correct, but neither property is of practical use at
$n_k = 4$.
\end{remark}

\subsection{The $\hat{c}$ diagnostic}\label{sec:c_threshold}

Two distinct questions are commonly asked of $\hat{c}$, and they have
different answers. The first is whether the framework can report the summaries
the practitioner needs on the accident years they care about; this is a
distributional existence condition, derived below. The second is whether the
portfolio exhibits enough allocation heterogeneity that a model with explicit
accident-year frailty is preferable; this is a model-selection question, and
the answer is a screen with measurable error rates rather than a bound. We
treat them separately.

The first question is one of operability. By Remark~\ref{rem:validity} the
$k$-th moment of $\Sibnp_i$ exists if and
only if $cF_{I-i} > k$, so the reporting tiers are conditions on the product
$cF_{I-i}$ and each inverts into a requirement on $c$ that depends on how
developed the year is. Let $i^*$ denote the least-developed accident year the
practitioner requires summarised. Then tier $\tau$ is attainable for every
required year if and only if
\begin{equation}\label{eq:c_operability}
  c \;\geq\; c^\dagger(\tau) \;:=\; \frac{\tau}{F_{I-i^*}} ,
\end{equation}
with $\tau = 1$ for the mean, $\tau = 2$ for the variance and $\tau = 4$ for a
stable standard error; we use $\tau = 5$ throughout as that last boundary with
a margin. When all accident years are required, the binding year is the most
recent one, $i^* = I$, for which $F_{I-i^*} = \pi_0$, and the bound is $c^\dagger(\tau) = \tau/\pi_0$, which is computable from the
estimated development pattern before any reserve is calculated.

The bound~\eqref{eq:c_operability} is portfolio-specific through $\pi_0$, and
on the three benchmarks of Section~\ref{sec:empirical} it varies by more than
an order of magnitude:
\begin{center}
\begin{tabular}{@{}lccccc@{}}
\toprule
Portfolio & $\pi_0$ & $c^\dagger(2)$ & $c^\dagger(5)$ & $\hat{c}$
& Clears $c^\dagger(5)$? \\
\midrule
Taylor--Ashe & $0.0692$ & $28.9$  & $72.2$  & $189.1$ & yes \\
RAA          & $0.1121$ & $17.8$  & $44.6$  & $24.3$  & no \\
Mortgage     & $0.0079$ & $253.0$ & $632.5$ & $152.0$ & no \\
\bottomrule
\end{tabular}
\end{center}
A single number such as $\hat{c} < 30$ implicitly fixes $\pi_0 \approx 1/6$:
a specific and undisclosed assumption about how long-tailed the portfolio is.
We therefore report $c^\dagger$ alongside $\hat{c}$ rather than comparing
$\hat{c}$ with a constant, and the ordering the two produce is not the same.
Against a common threshold RAA looks marginal and Mortgage comfortable;
against their own bounds Mortgage is by far the more demanding portfolio,
requiring $\hat{c} \geq 633$ where it has $152$, because it reports under one
percent of the ultimate in its first development year. That is why both
exclude an accident year in Table~\ref{tab:empirical_all}, and why the
consequence differs between them: the excluded year carries $31\%$ of the RAA
reserve and $11\%$ of the Mortgage reserve. The bound tells the practitioner
\emph{that} a year will be unreportable; its reserve share tells them whether
it matters.

The simulations of Section~\ref{sec:simulations} use
$\bpi = (0.45, 0.25, 0.15, 0.10, 0.05)$, so $\pi_0 = 0.45$ and
$c^\dagger(5) \approx 11$: the operability bound is essentially never binding
on the simulated grid. Real paid triangles report far less of the ultimate in
the first development year, and the empirical $\pi_0$ of the three benchmarks
is between one quarter and one fiftieth of the simulated value. The
operability bound is therefore a real constraint on application and not one
the simulation study exercises. Note that this is a limitation of the design and not evidence that the
constraint is slack: a study whose
$\pi_0$ matched the benchmarks would encounter the boundary tiers of
Remark~\ref{rem:validity} routinely, and its coverage figures would be
correspondingly harder to interpret.

Separately from operability, a small $\hat{c}$ indicates that development
proportions vary substantially across accident years, which is the regime in
which a model with explicit accident-year frailty \citep{VanOirbeek2026cnbg}
is preferable to the present one. Because $\hat{c}$ is itself estimated from a
single triangle, a rule of the form ``flag the portfolio when
$\hat{c} < t$'' has error rates depending on both the true $c$ and the
triangle size $I$, and these must be measured rather than assumed.
Table~\ref{tab:c_screen} reports the detection rate
$\Pr(\hat{c} < t \mid c, I)$ over the simulation grid of
Table~\ref{tab:sensitivity}.

\begin{table}[ht]\centering
\caption{Operating characteristics of the heterogeneity screen: detection
rate $\Pr(\hat{c} < t \mid c, I)$ under the Dirichlet--Gamma DGP, $J = 5$,
$M = 500$ per cell. For each $t$, rows with $c < t$ are portfolios the screen
should flag and rows with $c \geq t$ are portfolios it should not.}
\label{tab:c_screen}
\small
\begin{tabular}{@{}lccccccccc@{}}
\toprule
& \multicolumn{3}{c}{$t = 20$} & \multicolumn{3}{c}{$t = 30$}
& \multicolumn{3}{c}{$t = 50$} \\
\cmidrule(lr){2-4}\cmidrule(lr){5-7}\cmidrule(lr){8-10}
$c$ & $I{=}7$ & $I{=}10$ & $I{=}15$
    & $I{=}7$ & $I{=}10$ & $I{=}15$
    & $I{=}7$ & $I{=}10$ & $I{=}15$ \\
\midrule
 10 & 0.74 & 0.88 & 0.94 & 0.89 & 0.96 & 1.00 & 0.96 & 0.99 & 1.00 \\
 20 & 0.32 & 0.42 & 0.44 & 0.55 & 0.74 & 0.83 & 0.78 & 0.94 & 0.99 \\
 30 & 0.16 & 0.13 & 0.09 & 0.37 & 0.38 & 0.42 & 0.64 & 0.78 & 0.88 \\
 50 & 0.03 & 0.01 & 0.00 & 0.12 & 0.10 & 0.03 & 0.36 & 0.42 & 0.45 \\
100 & 0.00 & 0.00 & 0.00 & 0.01 & 0.01 & 0.00 & 0.08 & 0.03 & 0.01 \\
200 & 0.00 & 0.00 & 0.00 & 0.00 & 0.00 & 0.00 & 0.00 & 0.00 & 0.00 \\
\bottomrule
\end{tabular}
\end{table}

Averaging over the grid, the rule ``flag when $\hat{c} < 30$'' detects
portfolios with true $c < 30$ with probability $0.72$ at $I = 7$, $0.85$ at
$I = 10$ and $0.91$ at $I = 15$, at a false-positive rate on portfolios with
$c \geq 30$ of $0.13$, $0.12$ and $0.11$ respectively. At $c = t$ exactly,
the detection rate is $0.32$--$0.45$ rather than the
$0.50$ an unbiased estimator would give, at every $t$ and every $I$. This is
the finite-sample upward bias of $\hat{c}$
(Remark~\ref{rem:sigma_c_finite}) displacing the whole detection curve: a
portfolio sitting exactly at the threshold is more likely to be passed than
flagged, and the displacement is largest at $I = 7$, where the bias is
largest.

The screen, however, degrades in the region that matters. At $c = 20$, i.e.\
genuinely heterogeneous development, the rule $\hat{c} < 30$ fires on only
$55\%$ of triangles at $I = 7$, against $83\%$ at $I = 15$. A practitioner
working with a seven-year triangle should therefore treat a value of $\hat{c}$
above the threshold as weak evidence of stability rather than as a clearance.

We consequently report $\hat{c}$ alongside its operating characteristics
rather than as a pass/fail test, and where the decision is material we
recommend taking $\hat{c}$ from the posterior of
Section~\ref{sec:regularised} rather than from the moment estimator: the
posterior is defined at every triangle size, requires no asymptotic
approximation, and is better centred at the sizes at which the screen is least
reliable ($\E[c \mid \mathcal{D}] / c = 1.24$ at $I = 10$ against the moment
estimator's $1.31$).

Note finally that the two questions of this subsection are independent. A
portfolio may clear its operability bound comfortably and still be flagged by
the screen, or fail the bound while having a perfectly stable allocation. Mortgage, with
$\hat{c} = 152$ against $c^\dagger(5) = 633$, is of the latter kind. The bound concerns what can be reported; the screen concerns which model
should be used.

\subsection{The credibility interpretation}\label{sec:credibility}

The three anchors of Section~\ref{sec:bootstrap} are not an arbitrary
collection. Chain-Ladder, Bornhuetter--Ferguson and Cape Cod arise from the
hierarchy \eqref{eq:gamma_ult}--\eqref{eq:dirichlet} as credibility estimators
under diffuse, informative and pooling priors on the ultimate respectively,
with $F_{I-i}$, the fraction of the ultimate already observed, serving as the
credibility weight in each case, and with identical structure on the count
and amount sides. The Chain-Ladder case is immediate from
Theorem~\ref{thm:ibnp} and we record it here because the bootstrap of
Section~\ref{sec:bootstrap} is derived in that limit.

\begin{theorem}[Chain-Ladder as a double credibility limit]\label{thm:CL_sev}
In the diffuse-prior limit $\alpha,\beta \to 0$, the conditional law of
$\Wobs_i$ given $\Xobs_i$ is $\Bet(cF_{I-i}, c(1-F_{I-i}))$
(Remark~\ref{rem:diffuse_scope}), and for $cF_{I-i} > 1$,
\[
\E[S_i \mid \Xobs_i]
= \Xobs_i \cdot \frac{c-1}{cF_{I-i}-1}
= \hat{S}^{\mathrm{CL}}_i \cdot \frac{F_{I-i}(c-1)}{cF_{I-i}-1},
\]
the correction factor tending to one as $c \to \infty$. The Chain-Ladder
ultimate is therefore a double limit of the hierarchy: a diffuse prior on the
ultimate and a concentrated allocation.
\end{theorem}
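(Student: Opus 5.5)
The argument reduces to one Beta moment, taken in the diffuse limit where Remark~\ref{rem:diffuse_scope} already supplies the needed conditional law.

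\emph{Step 1: the conditional law.} By Remark~\ref{rem:diffuse_scope}, for any $\alpha,\beta>0$ the conditional density of $\Wobs_i$ given $\Xobs_i=x$ is proportional to $w^{cF_{I-i}-\alpha E_i-1}(1-w)^{c(1-F_{I-i})-1}e^{-\beta x/w}$. As $\alpha,\beta\to 0$ this kernel converges pointwise to the $\Bet(cF_{I-i},c(1-F_{I-i}))$ kernel. Convergence in law follows by dominated convergence once we check that the normalising constants converge. The tilt factor $e^{-\beta x/w}$ is bounded by one. The perturbation $w^{-\alpha E_i}$ is dominated by $w^{-\epsilon}$ for small $\alpha$, and $w^{cF-\epsilon-1}$ remains integrable near $0$ whenever $cF>\epsilon$. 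This gives the first assertion of Theorem~\ref{thm:CL_sev}, which is simply a restatement of Theorem~\ref{thm:ibnp}(ii).

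\emph{Step 2: the conditional mean of the ultimate.} Since $S_i=\Xobs_i/\Wobs_i$ exactly, we have $\E[S_i\mid\Xobs_i]=\Xobs_i\,\E[1/\Wobs_i\mid \Xobs_i]$. For $W\sim\Bet(a,b)$ with $a>1$, the standard Beta-function identity gives $\E[1/W]=B(a-1,b)/B(a,b)=(a+b-1)/(a-1)$; this identity was already used in the proof of Theorem~\ref{thm:ibnp}(iii). Substituting $a=cF_{I-i}$ and $b=c(1-F_{I-i})$ yields $(c-1)/(cF_{I-i}-1)$. The condition $cF_{I-i}>1$ is exactly the condition that this integral is finite. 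As a consistency check, the result agrees with Theorem~\ref{thm:ibnp}(iii) plus $\Xobs_i$:
\[
\frac{c(1-F)}{cF-1}+1=\frac{c-1}{cF-1}.
\]

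\emph{Step 3: rewriting in Chain-Ladder form and the limits.} With $\hat S^{\mathrm{CL}}_i=\Xobs_i/F_{I-i}$, we get $\Xobs_i=\hat S^{\mathrm{CL}}_iF_{I-i}$, which gives the second form, with correction factor $F(c-1)/(cF-1)$. Dividing numerator and denominator by $c$ shows that this factor tends to $F/F=1$ as $c\to\infty$. The correction factor exceeds one precisely when $F<1$, since $F(c-1)>cF-1$ is equivalent to $1>F$. The CL ultimate is therefore attained only in the double limit: the diffuse prior (Step 1) makes $\Xobs_i$ the sole information about $S_i$, and $c\to\infty$ removes the Jensen inflation $\E[1/W]>1/\E[W]$.

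\emph{Main obstacle.} The only delicate point is the interchange of the limit $\alpha,\beta\to0$ with the expectation of $1/W$ in Step 2. Convergence in law from Step 1 does not by itself give convergence of $\E[1/W]$, because $1/w$ is unbounded near $0$. I would handle this with uniform integrability. The tilted kernel times $1/w$ is dominated, uniformly for small $\alpha$, by $w^{cF-\epsilon-2}(1-w)^{c(1-F)-1}$. This is integrable when $cF>1+\epsilon$, and such an $\epsilon>0$ exists by the strict inequality $cF_{I-i}>1$. Dominated convergence then applies to both the numerator and the normaliser.
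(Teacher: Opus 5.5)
Your argument is correct and is essentially the paper's own proof: the Beta law from Remark~\ref{rem:diffuse_scope}, then $\E[S_i\mid\Xobs_i]=\Xobs_i\,\E[1/\Wobs_i\mid\Xobs_i]$, the identity $\E[1/W]=(a+b-1)/(a-1)$, and substitution. Your dominated-convergence treatment of the tilted kernel, both for the law and for the unbounded moment $1/w$, adds rigour the paper does not supply, since the paper simply evaluates the moment under the limiting Beta.

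One correction to Step~3: drop ``only''. For fixed $\alpha,\beta>0$ the tilted law also concentrates at $F_{I-i}$ as $c\to\infty$; in that limit $\Xobs_i=S_iF_{I-i}$ exactly, whatever the prior on $S_i$. So the concentrated allocation alone already yields $\hat S^{\mathrm{CL}}_i$. The diffuse limit is what the finite-$c$ closed form needs, and that is the sense in which the paper says both limits are required.
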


\begin{proof}
$\E[S_i \mid \Xobs_i] = \Xobs_i\,\E[1/\Wobs_i \mid \Xobs_i]$; for
$W \sim \Bet(a, b)$ with $a > 1$, $\E[1/W] = (a+b-1)/(a-1)$, and substituting
$a = cF_{I-i}$, $b = c(1-F_{I-i})$ gives the result. Both limits are required:
$c \to \infty$ alone leaves the correction factor
$F_{I-i}(c-1)/(cF_{I-i}-1) \to 1$ but the conditional law is not Beta away
from $\alpha,\beta \to 0$.
\end{proof}

The corresponding statements for Bornhuetter--Ferguson and Cape Cod, together
with the count-side versions and the sense in which the three are one
credibility family, are developed in \citet{VanOirbeek2026cred}. We use the
correspondence here only to justify applying a single predictive across the
three anchors; Section~\ref{sec:sim_anchor} then measures what that
application costs when the anchor is informative.

\section{Relationship to Tweedie models}\label{sec:tweedie}

Tweedie GLMs \citep{tweedie1984} model the marginal variance
$\Var(X_{ij}) = \phi\mu_{ij}^p$ but say nothing about how amounts are
allocated across development periods. The Dirichlet--Gamma model instead
splits the uncertainty along two axes with distinct actuarial meaning: how
much will eventually be paid ($S_i \sim \Gam$), and when it will be paid
($(X_{i,\cdot}/S_i) \sim \Dir$). That decomposition is invisible under
Tweedie, which specifies only the cell-wise marginal.

The distinction matters, since the reserve is a conditional quantity. Given
the observed payments $\Xobs_i$, what remains is an allocation
question, and in the diffuse limit the Dirichlet gives
\[
\CV^2(\Sibnp_i \mid \Xobs_i) \;\approx\; \frac{1}{cF_{I-i}(1-F_{I-i})},
\]
which depends on $c$ and $F_{I-i}$ alone. Away from that limit the conditional
law is tilted by $\Xobs_i$ and the prior parameters re-enter
(Remark~\ref{rem:diffuse_scope}); the statement above is therefore about the
Chain-Ladder-anchored predictive, in keeping with the rest of
Section~\ref{sec:amounts}. Under either reading, a Tweedie specification
cannot express this conditional structure without additional assumptions about
the allocation, because it does not model the allocation at all.

Tweedie is nonetheless the natural misspecification against which to test the
construction, since it is the standard severity model and its variance
function is not the one the hierarchy implies.
Section~\ref{sec:sim_tweedie} does so, and
Theorem~\ref{thm:conservative_bias} establishes what to expect: under
compound Poisson data the moment estimator converges to a limit determined by
$\phi$ and the development pattern, and the resulting predictive is
conservative. Note that two qualifications are easily lost in the theorem. First, the plim is $p$-dependent: the $p \to 1$
form, in which cell variances are $\phi\mu_{ij}$, overstates the limit by a
factor of five at $p = 1.3$ and by two orders of magnitude at $p = 1.8$, so
the $\phi\mu_{ij}^{\,p}$ form of \eqref{eq:tweedie_plim_p} must be used
(Section~\ref{sec:conservative_bias}). Second, the resulting conservatism is
not automatically useful: at $p = 1.8$ the construction attains nominal coverage with an
interval $357$ times the true reserve, which falls in the boundary regime of
Remark~\ref{rem:validity} (Section~\ref{sec:sim_tweedie}).

The scale-dependence of the Tweedie power parameter, and its conflation of
severity dispersion with frequency heterogeneity, are established formally in
\citet{VanOirbeek2026cnbg}.

The multinomial allocation admits an equivalent reading as a discrete-time
survival model for claim reporting delays under administrative censoring at
$C_i = I - i$, with the incremental proportions $\pi_j$ playing the role of
discrete hazard increments and $F_{I-i}$ the probability of reporting by the
valuation date. Under that reading the conditioning principle of
Section~\ref{sec:bootstrap} is the statement that the predictive should
be formed given the censoring pattern, with the uncertainty in the
hazard integrated over rather than substituted. Note that the correspondence places the development pattern in a familiar
setting, and that the identification difficulty of
Section~\ref{sec:c_estimation} has a survival analogue, dispersion being hard
to estimate from few censored observations. The correspondence is not pursued
any further here.

\section{Parametric bootstrap}\label{sec:bootstrap}

This section states the requirements a procedure must meet to target the
conditional predictive \eqref{eq:cond_target}, the components through which
the ODP bootstrap meets them under diffuse anchoring, the two algorithms that
meet them on any anchor, and the coverage these algorithms attain.

\subsection{The conditioning principle}

\begin{principle}[Conditional prediction for hierarchical reserving]
\label{prin:bootstrap}
Let $\mathcal{D} = \{X_{ij} : i+j \leq I\}$ denote the observed triangle. A
procedure targeting the conditional predictive \eqref{eq:cond_target} must
satisfy two requirements:
\begin{enumerate}[label=(\roman*)]
\item $\mathcal{D}$, and every function of
      $\mathcal{D}$, is held fixed inside the replication loop. The observed
      data are conditioning information, not a random draw to be averaged
      over.
\item The replication draws
      $\theta^{(b)} \sim \pi(\theta \mid \mathcal{D})$ and then
      $R^{*(b)} \sim p(R \mid \mathcal{D}, \theta^{(b)})$, so that the
      residual uncertainty about $\theta$ given $\mathcal{D}$ is represented
      in the predictive.
\end{enumerate}
Requirement~(i) is satisfied by the residual bootstraps as well as by
the constructions below: Proposition~\ref{prop:boot_decomp} shows that
resampling the observed triangle under free accident-year effects is a
device for drawing the row levels from their diffuse-prior posterior, not
a departure from conditioning on $\mathcal{D}$. Requirement~(ii) is what
a plug-in violates:
substituting a point estimate for the draw yields
$p(R \mid \mathcal{D}, \hat\theta)$, whose coverage deficit relative to
\eqref{eq:cond_target} is governed by $\mathrm{MSE}(\hat\theta)^{1/2}$ and
vanishes only asymptotically. Algorithm~\ref{alg:bootstrap} below
satisfies~(i) exactly and~(ii) only in the limit;
Algorithm~\ref{alg:regularised} of Section~\ref{sec:regularised} satisfies
(i) exactly and~(ii) for the concentration, the development pattern
$\hat\bpi$ remaining substituted
(Section~\ref{sec:conditional_coverage}). Note that requirement~(ii) is not a formal
nicety: substituting $\hat\theta$ asserts that $\mathcal{D}$ determines
$\theta$, which it does not. The concentration is estimated from the
dispersion of partial-column proportions, of which a $10 \times 5$ triangle
supplies six, at a $31\%$ upward bias with a coefficient of variation above
one half (Section~\ref{sec:c_estimation}). The cost is measured: supplied
with the true parameters the construction covers at $94.9\%$ at $I = 10$, and
with estimates at $88.5\%$ (Section~\ref{sec:oracle}). A target attained only
when $c$ is known cannot be the operational target, since $c$ is never known.
We therefore state the principle with both requirements and treat
Algorithm~\ref{alg:bootstrap} as an approximation to
Algorithm~\ref{alg:regularised} whose cost we report.
\end{principle}

Requirement~(ii) makes the target a posterior predictive, and posterior
predictive reserving is not new: \citet{verrall2000} and \citet{wuthrich2008}
develop it at length, and \citet{srirams2021} give a Bayesian treatment of the
Dirichlet allocation itself. We claim neither the object nor the machinery.

Three things, however, are not supplied by the Bayesian literature. First,
a clarification of what the residual bootstraps in standard use do
approximate: under free accident-year effects they approximate exactly
this object in its diffuse-anchored form, component by component
(Proposition~\ref{prop:boot_decomp}), which is tested with the omitted
component computed exactly (Section~\ref{sec:sim_frailty}) and decomposed
by construction (Section~\ref{sec:sim_mech1}). Second, a construction
that is anchor-agnostic: the
conditional law of the outstanding amount depends on the reserving method only
through the cumulative development proportions, so the same predictive
attaches to Chain-Ladder, Bornhuetter--Ferguson and Cape Cod without a separate model for each, which a fully specified Bayesian model
does not deliver since it fixes the mean structure. Third, an identification result:
within this hierarchy the predictive depends on the data through the
development pattern and a single dispersion scalar, and we characterise how
well a paid triangle determines each and what the shortfall costs in coverage
(Sections~\ref{sec:c_estimation}, \ref{sec:regularised}
and~\ref{sec:conditional_coverage}). That characterisation is not favourable:
the concentration is weakly determined, the development pattern is the more consequential of the two, and the closed
form itself
contributes a further miscalibration that no integration removes. The value of the construction hence lies in the smallness of its parameter
set, and in what that smallness makes measurable.

\subsection{What the residual bootstrap targets}

The following proposition identifies the object that a residual bootstrap
with free accident-year effects approximates, by matching its three
variance components to the three components of the conditional predictive
variance under the count hierarchy in the Chain-Ladder limit. It replaces
a conjecture we held at the outset, namely that such a bootstrap omits the
posterior frailty variance of the least-developed accident years, and
Section~\ref{sec:sim_frailty} is the simulation that rejected that
conjecture.

\begin{proposition}[Decomposition of the residual bootstrap variance]
\label{prop:boot_decomp}
Consider the count hierarchy
\eqref{eq:frailty}--\eqref{eq:multinomial} with $\kappa \to 0$, so that
by Theorem~\ref{thm:ibnr}
$\Nibnr_i \mid \Nobs_i \sim \NB(\Nobs_i, F_{I-i})$ with
\begin{equation}\label{eq:cond_var_decomp}
\Var(\Nibnr_i \mid \Nobs_i)
= \underbrace{\frac{\Nobs_i (1-F_{I-i})}{F_{I-i}}}_{\text{process}}
+ \underbrace{\frac{\Nobs_i (1-F_{I-i})^2}{F_{I-i}^2}}_{\text{posterior level}},
\end{equation}
the second term being $(1-F_{I-i})^2\,\Var(\lambda_i \mid \Nobs_i)$ with
$\lambda_i \mid \Nobs_i \sim \Gam(\Nobs_i, F_{I-i})$. Let the ODP
bootstrap of \citet{england2002} be applied with free accident-year and
development effects, Pearson dispersion $\hat\phi$, resampling of the
full observed triangle including the latest diagonal, refit of the
development factors on each pseudo-triangle, and a process draw with
variance $\hat\phi\,\mu$ per future cell. Then, to first order in the
delta method and with $\hat\phi \to 1$ under the Poisson DGP,
\begin{enumerate}[label=(\roman*)]
\item the process draw contributes $\Nobs_i(1-F_{I-i})/F_{I-i}$ per
      accident year, the process term of \eqref{eq:cond_var_decomp};
\item resampling the latest diagonal contributes
      $\Nobs_i(1-F_{I-i})^2/F_{I-i}^2$ per accident year, the posterior
      level term of \eqref{eq:cond_var_decomp};
\item refitting the development factors contributes the estimation
      variance of the pattern,
      $\sum_j (T_j^2/f_j^2)\Var(\hat f_j)$ in the notation of the
      Chain-Ladder reserve $R = \sum_i C_{i,I-i}(P_i - 1)$, which is the
      component that requirement~(ii) of Principle~\ref{prin:bootstrap}
      demands for $\bpi$ and which Algorithms~\ref{alg:bootstrap}
      and~\ref{alg:regularised} substitute.
\end{enumerate}
The residual bootstrap therefore approximates the conditional predictive
\eqref{eq:cond_target} in its diffuse-anchored form with the pattern
integrated over, and its variance exceeds the process-only variance by a
factor that does not vanish with portfolio volume, since (ii) and (iii)
scale with volume as the process term does.
\end{proposition}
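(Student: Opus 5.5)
I will start by verifying the decomposition \eqref{eq:cond_var_decomp} directly from Theorem~\ref{thm:ibnr} in the limit $\kappa \to 0$. In that limit $\lambda_i \mid \Nobs_i \sim \Gam(\Nobs_i, F_{I-i})$ (shape, rate), and $\Nibnr_i \mid \lambda_i \sim \Pois(\lambda_i(1-F_{I-i}))$. The law of total variance then splits the conditional variance into two terms:
\[
\E[\lambda_i(1-F_{I-i}) \mid \Nobs_i] = \Nobs_i(1-F_{I-i})/F_{I-i}
\]
(process), and
\[
(1-F_{I-i})^2\Var(\lambda_i \mid \Nobs_i) = (1-F_{I-i})^2\Nobs_i/F_{I-i}^2
\]
(posterior level). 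Their sum equals the negative binomial variance $\Nobs_i(1-F_{I-i})/F_{I-i}^2$, which is a consistency check.

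Next I would write the Chain-Ladder reserve of year $i$ as $R_i = C_{i,I-i}(P_i-1)$ with $P_i = 1/F_{I-i}$ and $C_{i,I-i} = \Nobs_i$. Under the ODP bootstrap, the pseudo-reserve is $R_i^* = C^*_{i,I-i}(\hat P_i^*-1)$, plus a process draw around it. I would linearise $R_i^*$ to first order in two perturbations: $\delta C = C^*_{i,I-i} - C_{i,I-i}$ and $\delta \hat f_j^*$. The cross term is of second order and is dropped.

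Item (i) is immediate. The process draw has variance $\hat\phi \sum_{j>I-i}\hat\mu_{ij}$, and the future fitted means sum to $\Nobs_i(1-F_{I-i})/F_{I-i}$. With $\hat\phi \to 1$ under the Poisson DGP this is the process term.

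Item (ii) is the key step. Resampling Pearson residuals on row $i$ produces pseudo-cells $X^*_{ij} = \hat\mu_{ij} + \sqrt{\hat\phi\hat\mu_{ij}}\,r^*$. Because the residuals are drawn independently with (asymptotically) unit variance, the latest cumulative $C^*_{i,I-i} = \sum_{j\le I-i}X^*_{ij}$ has variance $\hat\phi\sum_{j\le I-i}\hat\mu_{ij} = \hat\phi\,\Nobs_i$. The multiplier $(P_i-1) = (1-F_{I-i})/F_{I-i}$ then yields $\Nobs_i(1-F_{I-i})^2/F_{I-i}^2$.

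The subtle point is attribution, and it depends on the refit. When the pseudo-triangle is refitted with free row effects, the refitted level of row $i$ equals $C^*_{i,I-i}/\hat F^*_{I-i}$, because the ODP maximum likelihood estimate with free row parameters preserves row sums. So the level perturbation coming through the diagonal is exactly $\delta C/F_{I-i}$. The part of the resampled row that also enters the column sums feeds the factor estimates, and I would assign that part to (iii) rather than double-count it. To first order the two channels are separable: the row-sum perturbation and the column-ratio perturbation are linear functionals of the residuals, and their covariance is lower order in the number of rows. Establishing that separation cleanly is the main obstacle. I would try to show it with the standard Mack/England--Verrall delta-method bookkeeping, conditioning on row sums first.

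Item (iii) then follows by linearising $P_i = \prod_{j\ge I-i}\hat f_j$. This gives $\delta R = \sum_j (T_j/f_j)\,\delta\hat f_j$, where $T_j$ collects the $C_{i,I-i}P_i$ of the rows whose projection uses $f_j$. Its variance is $\sum_j (T_j^2/f_j^2)\Var(\hat f_j)$ if the factor estimators are treated as uncorrelated, which is the usual Chain-Ladder property, holding to first order under the ODP/Poisson model.

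Summing (i)--(iii) identifies the bootstrap variance as the conditional predictive variance with the pattern integrated over. The closing remark needs one more observation: each term is linear in volume, so the ratio of the total to the process-only variance is a function of the $F$'s alone and does not vanish as volume grows.
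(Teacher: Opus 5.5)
Your proposal follows the paper's proof essentially step for step: the process draw as $\hat\phi$ times the summed future fitted means, the resampled latest cumulative with variance $\hat\phi\,\Nobs_i$ multiplied by $(P_i-1)^2$, and the delta method on $R$ in the factors. You build that cumulative from the row's independent pseudo-increments, where the paper uses a single residual on the cumulative; the variance is the same.

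The level--pattern covariance you single out as the main obstacle is asserted away in the paper by independence of the residual draws. It is not merely lower order in the number of rows, which would not justify dropping it at fixed $I$. It vanishes identically at first order: $\Cov(\delta C^*_{i,I-i},\delta\hat f^*_j)\propto\hat\mu_{i,j+1}-(\hat f_j-1)\hat\mu_i\hat F_j=0$ for every $j\le I-i-1$, and row $i$ does not enter $\hat f^*_j$ otherwise. This is exactly what your plan of conditioning on row sums first would show.
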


\begin{proof}
Under the Poisson DGP the ODP fit with free row and column effects
reproduces the Chain-Ladder ultimates, and the fitted value of the
latest diagonal cell of accident year $i$ is $\Nobs_i$ itself. The
process draw for the future cells of that row has total variance
$\hat\phi \sum_{j > I-i} \hat\mu_{ij} = \hat\phi\,\Nobs_i(1-F_{I-i})/F_{I-i}$,
which is (i) at $\hat\phi = 1$.

For (ii), hold the development factors at their original values and
resample only the diagonal. The pseudo-diagonal cell is
$C^*_{i,I-i} = \Nobs_i + r^*\sqrt{\hat\phi\,\Nobs_i}$ with $r^*$ a
resampled Pearson residual of unit variance after the bias adjustment, so
$\Var(C^*_{i,I-i}) = \hat\phi\,\Nobs_i$. The projected reserve of the
row is $C^*_{i,I-i}(P_i - 1) = C^*_{i,I-i}(1-F_{I-i})/F_{I-i}$, with
variance $\hat\phi\,\Nobs_i(1-F_{I-i})^2/F_{I-i}^2$, which is (ii) at
$\hat\phi = 1$. This equals $(1-F_{I-i})^2\Var(\lambda_i \mid \Nobs_i)$
since $\Var(\lambda_i \mid \Nobs_i) = \Nobs_i/F_{I-i}^2$ under the
$\Gam(\Nobs_i, F_{I-i})$ posterior.

For (iii), the refit adds the variance of $P_i$ across pseudo-triangles,
which by the delta method on $R$ is the stated sum with
$\Var(\hat f_j) = \hat\phi f_j/(F_j \mu_{(j)})$, where $\mu_{(j)}$ denotes
the expected cumulative total of development column $j$ over the rows
entering $\hat f_j$. Resampling the interior of the triangle enters only
through this term. The components add because the residual draws for the
diagonal, the interior and the future cells are independent within a
replication.

That (ii) and (iii) scale with volume follows from the same expressions:
each is proportional to $\Nobs_i$ or to $\mu_{(j)}$, as is (i).
\end{proof}

The proposition is stated to first order and at $\hat\phi = 1$; the
simulation of Section~\ref{sec:sim_frailty} checks it without either
approximation, including under finite $\kappa$, where the true predictive
is the narrower $\NB(\Nobs_i + \kappa, \cdot)$ and the bootstrap's
diffuse-limit variance is an upper bound. Section~\ref{sec:sim_mech1}
measures (ii) and (iii) separately by construction.

Two readings of the proposition should be kept apart. The ratio of the
bootstrap variance to the process-only variance is bounded away from one
and free of volume, and Section~\ref{sec:sim_mech1} measures it at
$2.98$ on square triangles; that ratio is not a miscalibration, since the
process-only variance is not the conditional target once the row level
is uncertain. And the bootstrap's dispersion estimate $\hat\phi$ carries
no information about $\kappa$ (Remark~\ref{rem:kappa_identification});
that is a statement about what the bootstrap can identify, not about
whether its interval is wide enough, and it means the bootstrap cannot
narrow its predictive when $\kappa$ is large even though a model with
anchored levels could.

\subsection{Algorithm}\label{sec:algorithm}

\begin{algorithm}[t]
\caption{Multinomial parametric bootstrap (plug-in)}
\label{alg:bootstrap}
\begin{algorithmic}[1]
\Require Observed triangle, estimated $\hat\bpi$, $\hat{c}$, iterations $B$
\State Compute $\hat{F}_j = \sum_{l=0}^j \hat\pi_l$; for each $i$ set
       $F_i^{\mathrm{obs}} = \hat{F}_{I-i}$ and
       $\Xobs_i = \sum_{j \leq I-i} X_{ij}$
\For{$b = 1,\ldots,B$}
  \For{$i = 1,\ldots,I$ with $F_i^{\mathrm{obs}} < 1$}
    \State $W_i^{*(b)} \sim \Bet\!\bigl(\hat{c}\,F_i^{\mathrm{obs}},\;
           \hat{c}\,(1-F_i^{\mathrm{obs}})\bigr)$
    \State $S_i^{\mathrm{IBNP},(b)}
           = \Xobs_i \cdot (1-W_i^{*(b)})/W_i^{*(b)}$
  \EndFor
  \State $R^{*(b)} = \sum_i S_i^{\mathrm{IBNP},(b)}$
\EndFor
\State \Return $\{R^{*(b)}\}_{b=1}^B$
\end{algorithmic}
\end{algorithm}

The transform $(1-W)/W$ diverges as $W \to 0$, and a random number generator
returning exactly zero for a Beta variate with a small first shape parameter
would produce an infinite reserve. The implementation therefore guards $W$
away from the endpoints at $10^{-12}$, which is an underflow guard rather than
a modelling choice: across every cell of Section~\ref{sec:simulations} the
proportion of bootstrap replications containing a draw below $10^{-4}$ is
under $10^{-3}$, and re-running the most exposed cells with the guard
tightened by eight orders of magnitude changes coverage by $0.00$ points and
interval width in the fifth significant figure, on paired triangles. Draws
near the \emph{upper} endpoint are frequent on long-development triangles but
harmless: they arise from accident years more than $99\%$ developed, whose
outstanding reserve is negligible either way. We record the guard because a
tighter one would truncate the predictive's upper tail, which is where a
$95\%$ interval is determined.

\subsection{Asymptotic plug-in coverage}\label{sec:coverage}

The following theorem states the rate at which the coverage of
Algorithm~\ref{alg:bootstrap} approaches the nominal level as the triangle
grows.

\begin{theorem}[Coverage rate]\label{thm:coverage_rate}
Under the Dirichlet--Gamma model with fixed $J \geq 3$, the bootstrap
predictive interval from Algorithm~\ref{alg:bootstrap} satisfies
\[
\Pr\!\left(R \in \mathrm{CI}^{(B)}_{1-\alpha}\right)
= (1-\alpha) + O\!\left(I^{-1/2}\right)
\quad \text{as } I, B \to \infty.
\]
The rate holds for every fixed $J$; the constant depends on $J$ through the
asymptotic variances of $(\hat{c}, \hat\bpi)$.
\end{theorem}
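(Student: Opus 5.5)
The approach is to reduce coverage to a comparison between two distribution functions of $R$ given $\mathcal{D}$. The first is the plug-in predictive $G_{\hat\theta}$, with $\hat\theta = (\hat c, \hat\bpi)$. It is the law of $\sum_i \Xobs_i (1-W_i)/W_i$ with independent $W_i \sim \Bet(\hat c \hat F_{I-i}, \hat c(1-\hat F_{I-i}))$. The second is the true conditional law $G_\theta$, which is the same object at the true $\theta$. Only the last $J-1$ accident years have $F_{I-i}<1$, so for fixed $J$ the sum has at most $J-1$ terms. Given $\mathcal{D}$, both laws are therefore smooth finite-dimensional pushforwards of products of Beta laws, scaled by the fixed multipliers $\Xobs_i$. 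I will work in the diffuse limit of Remark~\ref{rem:diffuse_scope}, so that Theorem~\ref{thm:ibnp}(ii) gives $G_\theta$ exactly. The argument then splits the error into three parts: Monte Carlo error in $B$, parameter error in $\hat\theta$, and the interaction between $\hat\theta$ and the realised $R$.

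The steps are as follows.

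\emph{Step 1 (Monte Carlo).} For fixed $\mathcal{D}$, the empirical quantiles of $\{R^{*(b)}\}$ converge to the quantiles of $G_{\hat\theta}$ at rate $O_p(B^{-1/2})$. This uses the DKW inequality together with a positive density at the quantiles, which holds because $G_{\hat\theta}$ has a continuous density. Taking $B\to\infty$ first, or jointly with $B \gtrsim I$, removes this term.

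\emph{Step 2 (Lipschitz dependence on $\theta$).} On a compact neighbourhood of $\theta$, and inside the region $cF_{I-i}>1$ guaranteed eventually by consistency, I will show that
\[
\sup_r \bigl|G_{\theta'}(r) - G_\theta(r)\bigr| \le L\,\|\theta'-\theta\|.
\]
The Beta density is smooth in its shape parameters. The map $w\mapsto (1-w)/w$ is fixed, and the convolution over at most $J-1$ independent terms preserves differentiability. Evaluating $G_\theta$ at the plug-in quantiles $q_{\hat\theta}(\alpha/2)$ and $q_{\hat\theta}(1-\alpha/2)$ then gives conditional coverage $1-\alpha + O(\|\hat\theta-\theta\|)$.

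\emph{Step 3 (rates).} Proposition~\ref{prop:c_variance} gives $\hat c - c = O_p(I^{-1/2})$. The Chain-Ladder pattern estimator, or any regular $\hat\bpi$, satisfies $\hat\bpi-\bpi = O_p(I^{-1/2})$ for fixed $J$, since each column estimate averages over $O(I)$ rows.

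\emph{Step 4 (taking expectations).} Averaging the conditional coverage over $\mathcal{D}$ gives $(1-\alpha) + O(\E\|\hat\theta-\theta\|) = (1-\alpha) + O(I^{-1/2})$. This uses $\mathrm{MSE}(\hat c)=O(I^{-1})$ from Proposition~\ref{prop:c_variance}, combined with Cauchy--Schwarz.

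I expect the main obstacle to be Step 4, for two reasons. The first is that the Lipschitz bound is only local. On the event where $\hat c$ falls far from $c$, for instance where $\hat c F_{I-i}\le 1$ or $\hat c$ is huge, the bound fails. I would handle this by splitting on $\{\|\hat\theta-\theta\|\le\delta\}$ and bounding the complement by $\Pr(\|\hat\theta-\theta\|>\delta)=O(I^{-1})$ via Chebyshev. This is harmless because a coverage probability is bounded by one.

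The second reason is the dependence between $\hat\theta$ and $R$. Both are functions of the same rows, through the subcompositions of the latest accident years. Step 2 is stated conditionally on $\mathcal{D}$, and $G_\theta$ is the correct conditional law of $R$ only given $\Xobs$. The dependence through $\hat\theta$, which is itself a function of $\mathcal{D}$, is therefore already absorbed, provided we condition on all of $\mathcal{D}$. I would need to check this carefully. Under the Dirichlet neutrality (Lemma~\ref{lem:dirichlet_sub}), given $\Xobs_i$ the outstanding share is independent of the internal subcomposition of the observed cells. This independence is what makes conditioning on $\mathcal{D}$ equivalent to conditioning on $\Xobs_i$ for each row.

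Finally, the constant is $L$ times the asymptotic standard deviation of $\hat\theta$, which depends on $J$ through the $\sigma_{c,k}^2$ of Proposition~\ref{prop:c_variance} and the variance of $\hat\bpi$. This matches the last sentence of the statement.
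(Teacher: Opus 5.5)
Your proposal is correct and follows the paper's proof. Conditional on $\mathcal{D}$ the reserve is a sum of at most $J-1$ Beta-prime transforms with fixed scales $\Xobs_i$. The true conditional CDF at the plug-in quantiles differs from its nominal level by at most $L\|\hat\theta-\theta_0\|$; your sup-norm CDF bound is a slightly more direct substitute for the paper's implicit-function-theorem step on the quantiles. Averaging with $\E\|\hat\theta-\theta_0\| = O(I^{-1/2})$, from Proposition~\ref{prop:c_variance} and the $O(I)$ rows per column of $\hat\bpi$, gives the rate, and your localisation and neutrality arguments make explicit what the paper leaves implicit.

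One small correction: the restriction to $cF_{I-i}>1$ is not needed, since Beta laws depend smoothly on any positive shape parameters. It is also not ``guaranteed by consistency'', because it is a condition on the true $c\pi_0$ rather than on $\hat c$.
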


\begin{proof}
For fixed $J$, only accident years $i > I - J + 1$ carry outstanding reserves,
so conditional on $\mathcal{D}$ the reserve
$R = \sum_i \Xobs_i (1-W_i)/W_i$ is a sum of at most $J-1$ independent
Beta-prime transforms with parameters $\theta = (c, \bpi)$ and fixed scale
factors $\Xobs_i$. Its conditional CDF $G_\theta$ is absolutely continuous
with strictly positive density on $(0,\infty)$ and continuously differentiable
in $\theta$ on a neighbourhood of the true $\theta_0$, with derivative bounded
on compact reserve ranges; since the number of summands is fixed, no
uniformity-in-$I$ issue arises. As $B \to \infty$ the interval endpoints
converge to the quantiles $q_u(\hat\theta)$ of $G_{\hat\theta}$,
$u \in \{\alpha/2, 1-\alpha/2\}$. By the implicit function theorem applied to
$G_\theta(q_u(\theta)) = u$ (positive density), $\theta \mapsto q_u(\theta)$
is differentiable, hence
$|G_{\theta_0}(q_u(\hat\theta)) - u| \leq L\,\|\hat\theta - \theta_0\|$ on the
neighbourhood. The coverage error is therefore bounded by
$2L\,\E\|\hat\theta - \theta_0\| + o(I^{-1/2})$, and
$\E\|\hat\theta - \theta_0\| = O(I^{-1/2})$: for $\hat{c}$ by
Proposition~\ref{prop:c_variance}, and for each component of $\hat\bpi$
because it is estimated from $O(I)$ rows per development column at fixed $J$.
\end{proof}

Theorem~\ref{thm:coverage_rate} bounds the coverage error by the standard
deviation of $\hat\theta$, which is $O(I^{-1/2})$. That bound is valid but
loose, because the deficit is driven by the \emph{bias} of $\hat{c}$ rather
than by its spread. Section~\ref{sec:oracle} measures the deficit at $13.7$,
$6.4$, $3.9$, $2.7$ and $2.1$ coverage points at $I = 7, 10, 15, 20, 30$,
a log-log slope of $-1.28$ against $I$; on the same triangles
$\E[\hat{c}]/c$ falls from $1.85$ to $1.07$ (final row of
Table~\ref{tab:oracle}), the $O(I^{-1})$ bias decay documented independently
in Remark~\ref{rem:sigma_c_finite}, whose own run gives $1.89$ at $I = 7$,
within Monte-Carlo error of the $1.85$ here. The theorem is therefore correct and
uninformative about the quantity that matters in practice: it guarantees that
the deficit vanishes, and understates how fast.

\subsection{The regularised conditional predictive}\label{sec:regularised}

Algorithm~\ref{alg:bootstrap} satisfies requirement~(i) of
Principle~\ref{prin:bootstrap} exactly and requirement~(ii) only in the limit
where $c$ is known. Since $c$ is never known, and since
Section~\ref{sec:c_estimation} shows it is poorly identified from a single
triangle at realistic $I$, we state the construction that satisfies both.

\begin{algorithm}[t]
\caption{Regularised conditional predictive}
\label{alg:regularised}
\begin{algorithmic}[1]
\Require Observed triangle, estimated $\hat\bpi$, prior $\pi(c)$,
         iterations $B$
\State Compute $\hat{F}_j = \sum_{l=0}^j \hat\pi_l$; for each $i$ set
       $F_i^{\mathrm{obs}} = \hat{F}_{I-i}$ and
       $\Xobs_i = \sum_{j \leq I-i} X_{ij}$
\State Sample $c^{(1)},\ldots,c^{(B)}$ from the posterior
       $\pi(c \mid \mathcal{D})$ induced by the subcomposition likelihood of
       Lemma~\ref{lem:dirichlet_sub} and the prior $\pi(c)$; a row observed
       through development year $k$ contributes its subcomposition, which is
       $\Dir(c\pi_0,\ldots,c\pi_k)$ with total concentration $cF_k$
\For{$b = 1,\ldots,B$}
  \For{$i = 1,\ldots,I$ with $F_i^{\mathrm{obs}} < 1$}
    \State $W_i^{*(b)} \sim \Bet\bigl(c^{(b)}F_i^{\mathrm{obs}},\;
           c^{(b)}(1-F_i^{\mathrm{obs}})\bigr)$
    \State $S_i^{\mathrm{IBNP},(b)}
           = \Xobs_i \cdot (1-W_i^{*(b)})/W_i^{*(b)}$
  \EndFor
  \State $R^{*(b)} = \sum_i S_i^{\mathrm{IBNP},(b)}$
\EndFor
\State \Return $\{R^{*(b)}\}_{b=1}^B$
\end{algorithmic}
\end{algorithm}

The triangle is held fixed throughout, so that requirement~(i) is untouched,
line~2 hereby supplying requirement~(ii) for the concentration. Note what it does
\emph{not} supply: $\hat\bpi$ is still substituted, so the construction
satisfies requirement~(ii) in one of its two parameters, and
Section~\ref{sec:conditional_coverage} measures what the other omission costs.
Algorithm~\ref{alg:bootstrap} is the special case in which
$\pi(c \mid \mathcal{D})$ is replaced by a point mass at $\hat{c}$, and it remains the recommended procedure where $\hat{c}$ is well determined,
i.e.\ where $\hat{c}F_{I-i} \geq 5$ on every required accident year and
$I \geq 15$, since it costs no MCMC and the two then differ by under two coverage points.

\begin{table}[ht]\centering
\caption{Coverage of the two constructions by triangle size
(Dirichlet--Gamma DGP, $J = 5$, $c = 50$). The plug-in requires three rows at
some horizon and is undefined at $I = 4$. Final columns: posterior mean and
standard deviation of $c$, confirming the chains carry parameter uncertainty
rather than collapsing onto the plug-in.}
\label{tab:bayes_coverage}
\begin{tabular}{@{}lcccccc@{}}
\toprule
$I$ & Plug-in Cov & Plug-in Width & Regularised Cov & Regularised Width
    & post-mean $c$ & post-SD $c$ \\
\midrule
 4 & ---  & ---  & 67.4 & 64 & 133.8 & 87.5 \\
 5 & 72.4 & 79   & 76.2 & 72 & 101.5 & 46.3 \\
 6 & 77.2 & 73   & 80.2 & 77 &  77.2 & 28.5 \\
 8 & 83.6 & 77   & 88.6 & 77 &  66.1 & 19.3 \\
10 & 87.2 & 77   & 89.2 & 80 &  61.8 & 15.4 \\
\bottomrule
\end{tabular}
\end{table}

The regularised construction is defined at every $I$, including $I = 4$ where
the plug-in fails outright. It improves coverage at every $I$, by $3.0$ points at
$I = 6$, $5.0$ at $I = 8$ and $2.0$ at $I = 10$. And it does not reach
nominal: at $I = 10$ coverage is $89.2\%$ against $95\%$, a shortfall of
roughly six points persisting under a construction that integrates over the
concentration rather than substituting it.

Integration reduces the error in $c$ without removing it. The posterior mean
at $I = 10$ is $61.8$ against a true $50$, i.e.\ a $24\%$ overstatement
against the plug-in's $31\%$, since the bias resides in the likelihood at
$n_k \leq 8$ and not in the estimator applied to it. Prior sensitivity
behaves as one would expect of a scarce-data problem
(Appendix~\ref{app:bayesian}): at $I = 5$ coverage moves with the prior mean
by up to $9.0$ points, at $I = 10$ by at most $4.2$. The prior carries weight
exactly where the data are thin, and not otherwise.

The residual is the substantive finding, and it is not all of one kind. Part of it is the estimation
error in $\hat\bpi$ that this algorithm does not address:
Section~\ref{sec:conditional_coverage} attributes eleven of twenty-six
coverage points of allocation-conditional miscalibration to that source, and
shows that integrating over the joint posterior of $(c, \bpi)$ recovers them
at a $25\%$ cost in interval width. Part of it is not estimation error at all
but the diffuse-anchoring approximation under which the closed form is derived
(Remark~\ref{rem:diffuse_scope}), which no amount of integration removes. Both are consequences of the information a single
paid triangle carries. Regularisation is therefore the ordinary operating point of
macro-level reserving rather than a small-triangle fallback, and the residual
is a statement about the information set and about the closed form, not about
the estimator.

\subsection{Conservative bias under compound Poisson DGP}\label{sec:conservative_bias}

The following theorem states the probability limit of $\hat{c}$ under a
compound Poisson data-generating process, and the sense in which the
resulting predictive is conservative.

\begin{theorem}[Conservative bias under compound Poisson DGP]
\label{thm:conservative_bias}
Suppose the true data-generating process is compound Poisson--Gamma with equal
accident-year volumes $\nu_i \equiv \nu$, cell means $\mu_{ij} = \nu\pi_j$ and
cell variances $\phi\mu_{ij}^{\,p}$ for $p \in [1,2)$. Then:
\begin{enumerate}[label=(\roman*)]
\item \emph{(Probability limit of $\hat{c}$.)} In the compound Poisson limit
      $p \to 1$, where cell variances are $\phi\mu_{ij}$,
      \[
        \hat{c} \xrightarrow{p} \frac{\nu}{\phi} - \frac{1}{F^*}
        \qquad\text{as } I \to \infty,
      \]
      where $F^* \in (0, 1]$ is the cumulative development mass $F_{k^*}$ at
      the median contributing cell of the aggregator in
      Proposition~\ref{prop:c_estimator}. For $p \in (1,2)$ the variance
      function is $\phi\mu_{ij}^{\,p}$ and the per-cell limit acquires a
      dependence on $p$ through the development pattern:
      \begin{equation}\label{eq:tweedie_plim_p}
        \hat{c}_{jk} \xrightarrow{p} \frac{1}{F_k}\left[
        \frac{\pi_j^{(k)}\bigl(1-\pi_j^{(k)}\bigr)F_k^2\,\nu^{\,2-p}}
             {\phi\Bigl[\pi_j^{p}\bigl(1-2\pi_j^{(k)}\bigr)
             + \bigl(\pi_j^{(k)}\bigr)^2\sum_{l\le k}\pi_l^{p}\Bigr]} - 1
        \right],
      \end{equation}
      which reduces to $\nu/\phi - 1/F_k$ at $p = 1$. \emph{The reduction is
      not uniform in $p$}, and the $p \to 1$ form must not be used as an approximation away from that
      limit, as discussed below.
\item \emph{(Conservatism.)} The ratio of bootstrap to true predictive
      standard deviation satisfies
      \begin{equation}\label{eq:conservatism_ratio}
      \frac{\mathrm{SD}_{\mathrm{boot}}(\Sibnp_i \mid \Xobs_i)}
           {\mathrm{SD}_{\mathrm{true}}(\Sibnp_i \mid \Xobs_i)}
      \;\xrightarrow{p}\;
      \frac{1}{\sqrt{F_{I-i}}} \;>\; 1
      \end{equation}
      in the limit $p \to 1$, for every $F_{I-i}$ with $\hat{c}F_{I-i} > 2$,
      i.e.\ the regime in which the predictive variance exists
      (Remark~\ref{rem:validity}). For $p \in (1,2)$ the true predictive
      variance $\phi\nu^{\,p}\sum_{j > I-i}\pi_j^{\,p}$ falls faster in the
      tail columns than the $\phi\nu(1-F_{I-i})$ of the limit, while the
      bootstrap variance does not, so \eqref{eq:conservatism_ratio} becomes a
      lower bound rather than the value. The conservative direction is
      preserved throughout $p \in [1,2)$.
\item \emph{(Coverage.)} The bootstrap predictive intervals are
      asymptotically conservative in width by the factor in (ii). Where the
      relative centring error $b$ of the point estimate satisfies
      $|b| < z_{1-\alpha/2}\,\CV_{\mathrm{true}}(R)\,(\rho - 1)$, with $\rho$
      the ratio in (ii), true coverage exceeds that of correctly calibrated
      plug-in intervals.
\end{enumerate}
\end{theorem}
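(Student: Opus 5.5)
The plan is to treat the three parts in order. Part (i) is a delta-method computation of the cell moments under the compound Poisson law. Part (ii) compares two explicit variances. Part (iii) is a short interval argument.

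For (i), fix a valid pair $(j,k)$. Under equal volumes the rows are i.i.d., so as $I\to\infty$ (hence $n_k\to\infty$) the sample mean and sample variance of $W_{\cdot j}^{(k)}$ converge to their population values, and $\hat F_k\to F_k$. By the continuous mapping theorem, $\hat c_{jk}$ then converges to \eqref{eq:c_moment} evaluated at those population values. I will compute those values by linearising the ratio $X_{ij}/T_{ik}$, with $T_{ik}=\sum_{l\le k}X_{il}$, around the means $\nu\pi_j$ and $\nu F_k$, using independence of the cells:
\[
\E W_{ij}^{(k)}\approx \pi_j^{(k)},\qquad
\Var W_{ij}^{(k)}\approx \frac{\Var X_{ij}\,(1-2\pi_j^{(k)})+(\pi_j^{(k)})^2\sum_{l\le k}\Var X_{il}}{\nu^2F_k^2}.
\]
Substituting $\Var X_{il}=\phi\nu^p\pi_l^p$ gives \eqref{eq:tweedie_plim_p} directly. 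This linearisation is first order in $1/\nu$, and I will state that explicitly as the sense of the limit.

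At $p=1$ the bracket in the denominator collapses: $\pi_j(1-2\pi_j^{(k)})+(\pi_j^{(k)})^2F_k=F_k\pi_j^{(k)}(1-\pi_j^{(k)})$. The per-cell limit is therefore $\nu/\phi-1/F_k$. This depends on $(j,k)$ only through $F_k$ and is monotone in it, so the median of the cell limits is attained at $F^*=F_{k^*}$. The continuity of the sample median then gives the stated plim of $\hat c$.

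For (ii) at $p\to1$, I will take $\hat c\approx\nu/\phi$ from (i), ignoring the $O(1)$ term $1/F^*$. The bootstrap variance is then given by Theorem~\ref{thm:ibnp}(iv) with $\Xobs_i\approx\nu F_{I-i}$:
\[
\Var_{\mathrm{boot}}\approx \Xobs_i^2\frac{(1-F)^2}{F^2}\cdot\frac{1}{\hat cF(1-F)}=\frac{\phi\nu(1-F)}{F}, \qquad F=F_{I-i}.
\]
Under the compound Poisson data-generating process the future cells are independent of $\Xobs_i$, so the true conditional variance is $\phi\nu(1-F)$. The variance ratio is therefore $1/F$, and taking square roots gives \eqref{eq:conservatism_ratio}. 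The condition $\hat cF>2$ is needed only for the bootstrap variance to exist (Remark~\ref{rem:validity}).

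For $p\in(1,2)$, the true variance is $\phi\nu^p\sum_{j>I-i}\pi_j^p$, while the bootstrap variance is $\Xobs_i^2(1-F)/(\hat c F^3)$ with $\hat c$ now given by \eqref{eq:tweedie_plim_p}. I will show the ratio is still at least $1/F$. The approach is to compare $\sum_{j>I-i}\pi_j^p$ with the early-column sums $\sum_{l\le k}\pi_l^p$ that enter $\hat c$, using the facts that $x\mapsto x^{p}$ is convex and that the tail increments are smaller. This comparison is the step I expect to be the main obstacle. It is not a clean inequality for arbitrary patterns. If it fails in general, I will first try to establish it under a decreasing-tail assumption on $\pi$, and at minimum verify it for the pattern used in the simulations.

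For (iii), I will write the bootstrap interval as the correctly calibrated interval widened by the factor $\rho$ from (ii), treating the interval as approximately symmetric in standardised units. A centring error $b$ then still leaves the true value inside the interval whenever $|b|$ is less than the extra half-width, $z_{1-\alpha/2}\CV_{\mathrm{true}}(R)(\rho-1)$. On that event, coverage dominates that of the calibrated plug-in interval.
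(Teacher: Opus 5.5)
Your treatment of (i), of (ii) at $p\to1$, and of (iii) follows the paper's argument. That covers the delta method on $X_{ij}/\sum_{l\le k}X_{il}$ with $\Cov(A,B)=\Var A$, independence of disjoint cell sets for the true variance, the large-$\hat c$ Beta-prime variance with $\Xobs_i\approx\nu F_{I-i}$, and the widening argument for coverage. Two small points. You are right that the $I\to\infty$ limit is the exact population functional and that \eqref{eq:tweedie_plim_p} is only its first-order approximation in $1/\nu$; the paper leaves this implicit. And with an even number of contributing cells (six at $J=5$) the median averages two cell limits, so $1/F^*$ is the mean of two values $1/F_k$, not a single one; the paper elides this too.

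Where you depart from the paper is the $p\in(1,2)$ half of (ii), and the obstacle you anticipate is real. The step cannot be completed, because the claim is false without a condition on $\bpi$. The paper argues that $\phi\nu^p\sum_{j>I-i}\pi_j^p$ is small relative to $\phi\nu(1-F_{I-i})$ because $\pi_j^p<\pi_j$. That argument keeps the bootstrap variance at its $p=1$ value, i.e.\ $\hat c$ at $\nu/\phi$, although (i) makes $\hat c$ of order $\nu^{2-p}$. It does not survive your consistent substitution of \eqref{eq:tweedie_plim_p}.

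Carrying that substitution out, write $R_1=F_k-\pi_j$ and $R_p=\sum_{l\le k,\,l\ne j}\pi_l^{\,p}$. To leading order $\hat c_{jk}\to\nu^{2-p}G_{jk}/\phi-1/F_k$, where
\[
\frac{1}{G_{jk}}=\frac{R_1}{F_k}\,\pi_j^{\,p-1}+\frac{\pi_j}{F_k}\cdot\frac{R_p}{R_1}
\]
is a weighted average of $\pi_l^{\,p-1}$ over $l\le k$. The powers of $\nu$ then cancel. With $F=F_{I-i}$ and $G^*$ the median of the $G_{jk}$, the variance ratio tends to $(1-F)/\bigl(F\,G^*\sum_{j>I-i}\pi_j^{\,p}\bigr)$. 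So $1/\sqrt F$ bounds the SD ratio from below exactly when the $\pi$-weighted mean of $\pi_j^{\,p-1}$ over the outstanding columns is at most $1/G^*$. Roughly, the outstanding increments must be no larger than the early increments that determine $\hat c$.

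Otherwise the bound fails. Take $J=5$ and $\bpi=(\epsilon,\epsilon,\epsilon,\epsilon,1-4\epsilon)$; every cell then has $G_{jk}=\epsilon^{1-p}$. For the year with $F_{I-i}=4\epsilon$ the SD ratio is $\bigl[(4\epsilon)^{-1}\bigl(\epsilon/(1-4\epsilon)\bigr)^{p-1}\bigr]^{1/2}$. At $\epsilon=0.01$, with $\nu$ large enough that $\hat cF_{I-i}>2$, this is about $1.6$ at $p=1.5$, against a claimed lower bound of $5$. At $p=1.8$ it is about $0.81$, so even the conservative direction is lost.

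What can be proved is the bound under a hypothesis of the kind you propose, plus a check on any particular design. A sufficient hypothesis is that every outstanding increment is no larger than every increment entering the median cells. That is also where the paper ends up after the proof: it reports the ratio on its own design and declines to claim $1/\sqrt{F_{I-i}}$ as a bound in general.
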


\begin{proof}
\textit{(i)} The estimator uses the partial-column proportions of
Definition~\ref{def:partial_props}, $W_{ij}^{(k)} = X_{ij}/\sum_{l \leq k}
X_{il}$, and not the full-ultimate ratio $\Xobs_i/S_i$, whose denominator
contains unobserved cells. Under compound Poisson with independent cells and
equal volumes, write $A = X_{ij}$, $B = \sum_{l \leq k} X_{il} \supseteq A$
and $\pi^* = \pi_j/F_k$, so that at $p = 1$: $\E A = \nu\pi_j$,
$\E B = \nu F_k$, $\Var A = \phi\nu\pi_j$, $\Var B = \phi\nu F_k$ and
$\Cov(A,B) = \phi\nu\pi_j$. The delta method on $A/B$ gives
\[
\Var\bigl(W_{ij}^{(k)}\bigr) \approx \frac{\phi\,\pi^*(1-\pi^*)}{\nu F_k}.
\]
The bracketed moment quantity in~\eqref{eq:c_moment} reads this as
$\pi^*(1-\pi^*)/(a_{jk} + 1)$, giving $a_{jk} \xrightarrow{p} \nu F_k/\phi -
1$; dividing by $\hat{F}_k$ as in~\eqref{eq:c_moment} gives
$\hat{c}_{jk} \xrightarrow{p} \nu/\phi - 1/F_k$, whose leading term is
horizon-free. The median aggregator converges to the limit at its median
contributing cell, $\hat{c} \xrightarrow{p} \nu/\phi - 1/F^*$. For
$p \in (1,2)$ the same delta-method computation with $\Var A = \phi(\nu\pi_j)^p$
and $\Var B = \phi\nu^p\sum_{l\le k}\pi_l^p$ yields
\eqref{eq:tweedie_plim_p}.

\textit{(ii)} The true conditional predictive variance under compound Poisson
independence is, at $p = 1$, exactly
\[
  \Var_{\mathrm{true}}(\Sibnp_i \mid \Xobs_i)
  = \Var(\Sibnp_i)
  = \phi \sum_{j > I-i} \mu_{ij}
  = \phi \nu (1-F),
\]
since $\Sibnp_i$ and $\Xobs_i$ are sums over disjoint cell sets and therefore
independent. The bootstrap variance, conditional on $\Xobs_i$, is
$(\Xobs_i)^2 \Var\bigl((1-W_i)/W_i\bigr)$ with
$W_i \sim \Bet(\hat{c}F, \hat{c}(1-F))$, and the Beta-prime variance is, for
$\hat{c}F > 2$,
\[
  \Var\!\left(\frac{1-W_i}{W_i}\right)
  = \frac{\hat{c}(1-F)(\hat{c}-1)}{(\hat{c}F-1)^2 (\hat{c}F - 2)}
  \;\approx\; \frac{1-F}{\hat{c}F^3}
\]
for large $\hat{c}$. Substituting $\hat{c} \to \nu/\phi - 1/F^* \approx
\nu/\phi$ from part (i), and $\Xobs_i \approx \nu F$ at the limit,
\[
  \Var_{\mathrm{boot}}(\Sibnp_i \mid \Xobs_i)
  \;\approx\; (\nu F)^2 \cdot \frac{(1-F)\,\phi}{\nu F^3}
  \;=\; \frac{\phi \nu (1-F)}{F},
\]
and the ratio of standard deviations against $\phi\nu(1-F)$ is
$1/\sqrt{F} > 1$. For $p \in (1,2)$ the denominator becomes
$\phi\nu^p\sum_{j>I-i}\pi_j^p$, which is smaller relative to
$\phi\nu(1-F)$ because $\pi_j^p < \pi_j$ for $\pi_j < 1$ and the shortfall is
largest in the thin tail columns; the ratio therefore exceeds $1/\sqrt{F}$.

\textit{(iii)} Immediate from (ii): a symmetric interval widened by a factor
$\rho$ retains coverage under a displacement of the centre by less than
$z_{1-\alpha/2}\,\CV_{\mathrm{true}}(R)\,(\rho-1)$ standard errors.
\end{proof}

The qualitative content of part (i), namely that $\hat{c}$ diverges as the
Tweedie dispersion falls, holds across $p \in [1,2)$. The $p \to 1$
\emph{formula} does not. On the design of Section~\ref{sec:sim_design} it
overstates the limit by a factor of $4.9$ at $p = 1.3$, $14.9$ at $p = 1.5$
and $104$ at $p = 1.8$, because $\Var(X_{ij}) = \phi\mu_{ij}^{\,p}$ grows
with $\mu$ far faster than the $\phi\mu_{ij}$ the limit assumes.
Section~\ref{sec:sim_tweedie} sets \eqref{eq:tweedie_plim_p} against the
simulated $\hat{c}$ at each $p$.

In practical terms, Theorem~\ref{thm:conservative_bias} says the multinomial
bootstrap cannot understate reserve uncertainty under compound Poisson data.
The overestimation factor is at least $1/\sqrt{F_{I-i}}$ wherever the
predictive variance exists ($\hat{c}F_{I-i} > 2$), and it diverges as
$F_{I-i} \to 0$: the widening is most pronounced exactly for the
least-developed accident years, where reserve uncertainty matters most.
Whether the interval is wider or narrower than a correctly calibrated one is
therefore not in doubt. The residual bootstrap carries no such guarantee
under a variance function other than its own: Table~\ref{tab:frailty_amounts}
shows its coverage falling under a Dirichlet allocation and rising under a
compound Poisson one, on the same rows.

Conservatism is not the same as usefulness, and the distinction becomes sharp
at the boundary. At $\hat{c}F_{I-i} \approx 2$ the Beta-prime predictive has
barely acquired a variance, and the widening the theorem guarantees is
unbounded: Section~\ref{sec:sim_tweedie} reports a cell at $p = 1.8$,
$\hat{c} = 4.6$, where the construction attains $95.6\%$ coverage with an
interval $357$ times the true reserve.

Under unequal volumes $\nu_i$ the per-cell limits become $\nu_i F_k/\phi - 1$
and the cross-year sample variance in~\eqref{eq:c_moment} averages over
$1/\nu_i$; the ratio in (ii) acquires a factor $\sqrt{\nu_i/\tilde\nu}$ with
$\tilde\nu$ an effective, harmonic-type volume. The conservative direction is
preserved whenever accident-year volumes are of comparable magnitude; under
extreme heterogeneity the guarantee applies to the dominant years only. The
Tweedie design of Section~\ref{sec:sim_design} carries a $5\%$ per accident
year trend, spanning a factor $e^{0.45} \approx 1.57$ across the triangle,
which falls within that qualification but means the equal-volume hypothesis of
the theorem is not exactly satisfied there.

Evaluated on the design of Section~\ref{sec:sim_design}, the ratio
$\mathrm{SD}_{\mathrm{boot}}/\mathrm{SD}_{\mathrm{true}}$ exceeds unity on
every active accident year at every $p$ tested, and exceeds the $p \to 1$
bound $1/\sqrt{F_{I-i}}$ on every one of them: it runs $1.26$--$1.62$ at
$p = 1.3$ and $1.39$--$1.67$ at $p = 1.5$, against a bound of $1.03$--$1.49$.
Twelve of twelve evaluable cells satisfy both inequalities. At $p = 1.8$,
where $\hat{c}F_{I-i}$ falls to $2.1$ on the greenest accident year, the ratio
reaches $17.5$ there. We report these as computed on this design and do not
claim $1/\sqrt{F_{I-i}}$ is a bound in general.

The centring condition of part (iii) holds at all three powers: the relative
centring errors are $-0.4\%$, $2.3\%$ and $32.9\%$ against thresholds of
$2.7\%$, $9.3\%$ and $395\%$. At $p = 1.8$ it holds vacuously, since the
threshold is large only because the widening is a factor of $6.8$. As such,
the theorem applies in that cell while its conclusion is uninformative there,
as already noted above.

The coverage pattern of Table~\ref{tab:misspec_tweedie} follows: $90.6\%$ and
$92.2\%$ at $p = 1.3$ and $p = 1.5$, above the correctly specified baseline of
$87.0\%$, reflecting the conservative widening of the Beta-prime allocation
offsetting part of the plug-in deficit in $\hat{c}$.

\section{Simulation studies}\label{sec:simulations}

The simulation studies are reported in the order of the claims they test:
correct specification first, then the three misspecification designs, then the
frailty test and the component decomposition of
Proposition~\ref{prop:boot_decomp}, and finally the oracle
decomposition, the conditional calibration and the anchoring regimes.

\subsection{Design}\label{sec:sim_design}

Unless stated otherwise, each cell of the study uses $M = 500$ synthetic
triangles and $B = 1{,}000$ bootstrap replications, seed 2026. Four
data-generating processes are used, and because the misspecification results
depend on their parameters we state all four.

The correctly specified DGP is the Dirichlet--Gamma hierarchy itself, with
$I = 10$, $J = 5$, exposure fixed at $E_i = 1{,}000$, ultimate
$S_i \sim \Gam(2E_i, 0.001)$ (mean $2 \times 10^6$), and Dirichlet allocation
with concentration $c$ on
$\bpi^{\mathrm{true}} = (0.45, 0.25, 0.15, 0.10, 0.05)$.

The five-column pattern above is used throughout except where $J = 10$, for
which we use the normalised geometric pattern $\pi_j \propto 0.6^{\,j}$,
$j = 0,\ldots,9$, i.e.
\[
\bpi = (0.4024,\, 0.2415,\, 0.1449,\, 0.0869,\, 0.0522,\,
        0.0313,\, 0.0188,\, 0.0113,\, 0.0068,\, 0.0041).
\]
The two patterns differ in more than length: the $J = 10$ pattern places only
$0.41\%$ of the ultimate in its final development year, so its late columns
are estimated from few rows \emph{and} carry little mass. Both effects bear on
the $J$-degradation discussed in Section~\ref{sec:sim_sensitivity} and on the
comparison of Section~\ref{sec:sim_odp}, which is run at $I = J = 10$ and
therefore uses the geometric pattern.

A third pattern separates those two features, which the $J = 5$ and $J = 10$
designs vary together. We add a thin-tailed $J = 5$ pattern, obtained by
setting $\pi_4 = 0.0041$ to match the $J = 10$ value and rescaling the first
four proportions to preserve their relative shape:
\[
\bpi^{\mathrm{thin}} = (0.4718,\; 0.2621,\; 0.1573,\; 0.1048,\; 0.0041).
\]
Comparing the fat- and thin-tailed $J = 5$ designs isolates tail thickness at
fixed $J$; comparing the thin-tailed $J = 5$ and $J = 10$ designs isolates $J$
at matched tail thickness.

Non-stationary development is generated by accident-year-specific
log-perturbations of the development proportions with standard deviation
$\sigma_\delta$, renormalised to the simplex;
$\sigma_\delta = 0$ recovers the stationary case.

Under compound Poisson (Tweedie) misspecification, the cell means are
$\mu_{ij} = \mu_0\,e^{0.05(i-1)}\pi_j$ with $\mu_0 = 1{,}000$, and the cell
variances $\phi\,\mu_{ij}^{\,p}$ with
$\phi = \phi(p) = (0.45\mu_0)^{p-1}(2-p)/10$, giving $\phi = 0.438$, $1.061$
and $2.652$ at $p = 1.3$, $1.5$ and $1.8$. Cells are drawn as compound
Poisson--Gamma sums with these first two moments. Two features of the design
bear on the comparison with Theorem~\ref{thm:conservative_bias}. First, the
$5\%$ per accident year trend means accident-year volumes are not equal,
spanning a factor $e^{0.45} \approx 1.57$ across the triangle; as noted in Section~\ref{sec:conservative_bias} the conservative direction is
preserved
at that degree of heterogeneity, though the theorem's equal-volume hypothesis
is not exactly satisfied. Second, the variance function is $\phi\mu^{\,p}$
rather than the $\phi\mu$ of the $p \to 1$ limit in which the theorem's
probability limit is derived, and the difference is not negligible over this
range (Section~\ref{sec:conservative_bias}).

Accident-year frailty is introduced by a row-level
$U_i \sim \Gam(\kappa,\kappa)$ multiplying the ultimate of the
Dirichlet--Gamma DGP above, so that $\E[U_i] = 1$ and $\Var(U_i) = 1/\kappa$;
$\kappa \to \infty$ recovers the Dirichlet--Gamma DGP exactly. This is the DGP
of the frailty test of Section~\ref{sec:sim_frailty}, and it is the
amounts-side counterpart of the
frailty layer $\lambda_i$ of the count hierarchy
\eqref{eq:frailty}.

Every interval reported below is the equal-tailed percentile interval, i.e.\
the interval spanned by the $\alpha/2$ and $1-\alpha/2$ quantiles of the
bootstrap draws, and is formed identically for every procedure compared.
Coverage is the proportion of replications in which the true outstanding
reserve falls inside that interval. \emph{Bias} is the
relative bias of the procedure's own point estimate, taken as the bootstrap median, since the Beta-prime mean does not exist for
$\hat{c}F_{I-i} \leq 1$ (Remark~\ref{rem:validity}), and is reported as a
ratio of means,
$\E[\hat{R}]/\E[R] - 1$, rather than as a mean of per-replication ratios: the
latter is upward-biased by Jensen's inequality wherever $R$ is volatile, which
is precisely the low-$c$ and high-$\sigma_\delta$ regimes of interest.
\emph{Width} is the width of the $95\%$ interval as a percentage of the true
reserve, on the same ratio-of-means convention.

Draws of $W_i$ are guarded away from the endpoints at $10^{-12}$
(Section~\ref{sec:algorithm}); the tightening check reported there covers,
in
particular, all twelve $J = 10$ cells of this study on paired triangles. The
guard is not a source of any result reported below.

\subsection{Results under correct specification}\label{sec:sim_correct}

Table~\ref{tab:sim_correct} reports the results under the correctly specified
Dirichlet--Gamma DGP of Section~\ref{sec:sim_design}.

\begin{table}[ht]
\centering
\caption{Simulation results under correct specification
($I=10$, $J=5$, $c=50$, $M=500$, $B=1{,}000$). The bootstrap's own point
estimate is the predictive median; the Chain-Ladder estimate is reported
alongside as a reference. The two differ by the factor
$\hat{c}F_{I-i}/(\hat{c}F_{I-i}-1)$ of Theorem~\ref{thm:ibnp}(iii) and are
\emph{not} interchangeable.}
\label{tab:sim_correct}
\begin{tabular}{@{}lccc@{}}
\toprule
Point estimate & Rel.\ Bias (\%) & 95\% Cov.\ (\%) & Rel.\ Width (\%) \\
\midrule
Multinomial (predictive median) & 2.3 & 87.0 & 75.1 \\
Chain-Ladder (reference)        & 0.5 & --- & --- \\
\bottomrule
\end{tabular}
\end{table}

Coverage is $87.0\%$ against a nominal $95\%$. Section~\ref{sec:oracle}
attributes the shortfall to estimation error in $(\hat{c}, \hat\bpi)$:
supplied with the true parameters the same construction covers at $94.9\%$ at
this triangle size, a gap of $6.4$ points on the paired design there. In this
cell $\E[\hat{c}] = 67.3$ against $c = 50$ and $\mathrm{sd}(\hat{c}) = 45.9$,
so in this run the estimator overstates the concentration by $35\%$ on
average, against $31\%$ on the same design at $M = 2{,}000$
(Table~\ref{tab:oracle}) and in Remark~\ref{rem:sigma_c_finite}, the
difference being Monte-Carlo scale at $M = 500$; it carries a coefficient of variation of $0.68$. Both act in the same direction: an
overstated $\hat{c}$ narrows the Beta predictive, and the noise in $\hat{c}$
misplaces the interval on any individual triangle.
Remark~\ref{rem:sigma_c_finite} decomposes the two.

That attribution is to \emph{unconditional} coverage only.
Section~\ref{sec:conditional_coverage} shows that an oracle supplied
with both true parameters, while calibrated on average, remains conditionally miscalibrated across terciles of realised allocation
dispersion, which is a consequence of the diffuse-anchoring approximation and
not of estimation. Estimation error hence accounts for the unconditional
deficit reported here, but not for the conditional one.

Note also that the multinomial and Chain-Ladder biases differ, by $1.8$
percentage points. Under Theorem~\ref{thm:ibnp}(iii) the predictive mean
exceeds the Chain-Ladder estimate by $\hat{c}F_{I-i}/(\hat{c}F_{I-i}-1) > 1$
for every accident year, so the two point estimates coincide only in the
concentrated limit $c \to \infty$; reporting one in place of the other would
misattribute the correction.

\subsection{Misspecification: non-stationary development}\label{sec:sim_nonstat}

Table~\ref{tab:misspec_nonstat} reports the performance of the construction
under non-stationary development.

\begin{table}[ht]
\centering
\caption{Performance under non-stationary development
($I=10$, $J=5$, $M=500$, $B=1{,}000$). Bias is that of the procedure's own
point estimate; the Chain-Ladder bias is given alongside for reference.}
\label{tab:misspec_nonstat}
\begin{tabular}{@{}lccccc@{}}
\toprule
$\sigma_\delta$ & Coverage (\%) & Rel.\ Width (\%) &
$\hat{c}$ & Bias, pred.\ (\%) & Bias, CL (\%) \\
\midrule
0.00 (stationary) & 86.6 &  74.6 & 67 & 2.8 & 1.0 \\
0.02 (mild)       & 89.2 &  78.4 & 59 & 2.4 & 0.5 \\
0.05 (moderate)   & 86.2 &  97.9 & 42 & 3.2 & 0.5 \\
0.10 (severe)     & 86.6 & 163.8 & 20 & 7.8 & 1.6 \\
\bottomrule
\end{tabular}
\end{table}

Coverage is flat at $86$--$89\%$ across non-stationarity levels, but the
flatness is bought entirely with width: the interval more than doubles from
$74.6\%$ to $163.8\%$ of the true reserve between the stationary and severe
cases. The procedure does not detect the misspecification and correct for it;
it absorbs the misspecification into a wider interval, and reports the fact
through $\hat{c}$, which declines monotonically from $67$ to $20$.

The two diagnostics of Section~\ref{sec:c_threshold} separate here as
designed: the operability bound is not breached ($\hat{c} = 20$ against
$c^\dagger(5) = 11.1$ at $\pi_0 = 0.45$, so every year remains reportable
even in the severe case), while the heterogeneity screen fires with
probability $0.74$ at $I = 10$ (Table~\ref{tab:c_screen}), correctly
recommending a model with explicit accident-year structure. Reporting hence remains possible, while the recommendation is to model
differently.

The stationary row of this table and Table~\ref{tab:sim_correct} are
independent replications of the identical cell, run under different per-cell
random streams. They agree on coverage to $0.4$ points and on $\hat{c}$ to
$0.7$, which fixes the Monte-Carlo scale of the study at $M = 500$ directly,
without recourse to a separate seed-stability experiment.

\subsection{Misspecification: Tweedie DGP}\label{sec:sim_tweedie}

Table~\ref{tab:misspec_tweedie} reports the performance under compound
Poisson misspecification, at three values of the Tweedie power parameter $p$.

\begin{table}[ht]
\centering
\caption{Performance under compound Poisson (Tweedie) misspecification
($I=10$, $J=5$, $M=500$, $B=1{,}000$). The point estimate is the predictive
median, which exists for every $\hat{c}F_{I-i}$; the interval width is
reported on the same ratio-of-means convention as elsewhere.}
\label{tab:misspec_tweedie}
\begin{tabular}{@{}lcccc@{}}
\toprule
Tweedie power $p$ & Coverage (\%) & $\hat{c}$ & Bias (\%) & Rel.\ Width (\%) \\
\midrule
1.3 (light-tailed) & 90.6 & 584.4 & $-0.4$ &      23.4 \\
1.5 (moderate)     & 92.2 &  80.0 &    2.3 &      70.5 \\
1.8 (heavy-tailed) & 95.6 &   4.6 &   32.9 & 35{,}684 \\
\bottomrule
\end{tabular}
\end{table}

On the two rows where the framework operates inside its regime, coverage is
$90.6\%$ and $92.2\%$, above the correctly specified baseline of $87.0\%$
(Table~\ref{tab:sim_correct}) and with negligible point-estimate bias. This is
the pattern Theorem~\ref{thm:conservative_bias} predicts: the conservative
allocation bias under compound Poisson misspecification widens the intervals
and offsets part of the plug-in deficit in $\hat{c}$.

The behaviour of $\hat{c}$ across the three rows is a quantitative test of
part~(i) of that theorem. Evaluating the $p$-dependent limit \eqref{eq:tweedie_plim_p} at the
design constants of Section~\ref{sec:sim_design} gives $462$, $58$ and $2.9$
against the simulated $584.4$, $80.0$ and $4.6$, i.e.\ ratios of $1.26$,
$1.38$ and $1.57$, the first two accounted for by the finite-sample upward
bias of the moment estimator at $I = 10$, independently measured at $31\%$
(Remark~\ref{rem:sigma_c_finite}). The $p \to 1$ form of the limit, by
contrast, overstates by factors of $4.9$, $14.9$ and $104$
(Section~\ref{sec:conservative_bias}). The theorem therefore accounts for the
observed $\hat{c}$ across two orders of magnitude in $p$ only in its
$p$-dependent form.

Coverage of $95.6\%$ at $p = 1.8$ is nominal, and meaningless. At
$\hat{c} = 4.6$ the greenest accident year has
$\hat{c}F_{I-i} = 4.6 \times 0.45 = 2.1$, at the boundary where the
Beta-prime predictive acquires a variance at all
(Remark~\ref{rem:validity}); the resulting interval averages $357$ times the
true reserve, and the point estimate is biased upward by $33\%$. Nominal
coverage from an interval two orders of magnitude too wide is not evidence
that the method works under this DGP. It is evidence that the reporting rule
of Remark~\ref{rem:validity} identifies the regime correctly: at
$\hat{c}F_{I-i} \approx 2$ the framework should report quantiles at most, and
in practice should not be used at all. This is why the tiering is stated as a
rule rather than as advice, and why $\hat{c}$ must be inspected before the
reserve is read.

The width is not a numerical artefact. The proportion of bootstrap
replications affected by the lower guard in this cell is
$6 \times 10^{-4}$ (Section~\ref{sec:sim_design}); the interval is genuinely
that wide, because the Beta-prime distribution at second shape parameter
$2.1$ genuinely has that tail.

The estimated $\hat{c} = 584.4$ at $p = 1.3$ sits far above any value the
heterogeneity screen of Section~\ref{sec:c_threshold} would flag, which might
appear to signal estimator failure. It does not. Under a compound Poisson DGP
with stable cell-level allocation, Theorem~\ref{thm:conservative_bias}(i)
predicts a probability limit increasing without bound as the Tweedie
dispersion $\phi$ falls, and light-tailed Tweedie ($p$ close to one)
corresponds to small $\phi$: at $p = 1.3$ the design gives $\phi = 0.438$
against $2.652$ at $p = 1.8$, and $\hat{c}$ moves inversely across two orders
of magnitude, as \eqref{eq:tweedie_plim_p} requires.

The screen is calibrated for genuine Dirichlet--Gamma data, where a small
$\hat{c}$ signals high allocation heterogeneity. Under Tweedie
misspecification a large $\hat{c}$ is the structurally correct response: the
estimator is reporting that allocation is more stable than the screen's
calibration assumes, not that the model has broken down. Note that a very large $\hat{c}$ should therefore prompt a check on the
variance function rather than reassurance, since it signals misspecification
just as much as a small one does, be it in the opposite direction.

\subsection{Sensitivity to concentration and triangle size}
\label{sec:sim_sensitivity}

Table~\ref{tab:sensitivity} reports coverage and interval width over a grid of
concentrations, triangle sizes and development lengths.

\begin{table}[ht]
\centering
\caption{Bootstrap coverage (\%) and $95\%$ interval width (\% of true
reserve) by $c$, $I$ and $J$ ($M=500$, $B=500$). Coverage improves in $I$ at
every $(c,J)$ and is close to flat in $c$; the width panel shows why the
flatness is not informative on its own. The $J = 10$, $I = 7$ column is empty
because with $I < J$ the development years $j \geq I$ carry no observed cell
and $\hat\bpi$ cannot be estimated.}
\label{tab:sensitivity}
\small
\begin{tabular}{@{}lcccccc@{}}
\toprule
& \multicolumn{3}{c}{$J = 5$}
& \multicolumn{3}{c}{$J = 10$} \\
\cmidrule(lr){2-4}\cmidrule(lr){5-7}
$c$ & $I=7$ & $I=10$ & $I=15$
    & $I=7$ & $I=10$ & $I=15$ \\
\midrule
\multicolumn{7}{@{}l}{\emph{Coverage (\%)}}\\
 10 & 85 & 90 & 93 & --- & 84 & 91 \\
 20 & 81 & 88 & 92 & --- & 84 & 89 \\
 30 & 83 & 87 & 93 & --- & 79 & 91 \\
 50 & 83 & 87 & 92 & --- & 82 & 89 \\
100 & 81 & 85 & 93 & --- & 81 & 87 \\
200 & 81 & 89 & 91 & --- & 83 & 89 \\
\midrule
\multicolumn{7}{@{}l}{\emph{Rel.\ width (\%)}}\\
 10 & 300 & 248 & 236 & --- & 194 & 188 \\
 20 & 131 & 135 & 136 & --- & 110 & 110 \\
 30 & 100 & 103 & 103 & --- &  81 &  84 \\
 50 &  73 &  77 &  75 & --- &  60 &  64 \\
100 &  48 &  50 &  51 & --- &  42 &  43 \\
200 &  33 &  35 &  36 & --- &  29 &  30 \\
\bottomrule
\end{tabular}
\end{table}

Coverage improves in $I$ throughout, and does so faster than the
$O(I^{-1/2})$ rate of Theorem~\ref{thm:coverage_rate} would suggest: the
oracle study of Section~\ref{sec:oracle} measures the deficit decaying at a
log-log slope of $-1.28$, because it is governed by the $O(I^{-1})$ bias of
$\hat{c}$ rather than by its $O(I^{-1/2})$ spread (Section~\ref{sec:coverage}).

Coverage is close to flat in $c$, at $79$--$93\%$ across the grid. That
flatness is a consequence of the scale-invariance established in
Remark~\ref{rem:sigma_c_finite}: both the relative bias and the coefficient
of variation of $\hat{c}$ depend on $I$ and not on $c$, so the plug-in error
enters the Beta predictive in the same relative amount at every
concentration. It should not be read as insensitivity to $c$ in any practical
sense: across the same grid the interval width varies by a factor of nine,
from $33\%$ of the true reserve at $c = 200$ to $300\%$ at $c = 10$. Coverage is hence held constant by the width rather than by the accuracy of
the estimator.

Coverage at $J = 10$ runs $1$--$8$ points below $J = 5$ at the same $(c,I)$,
about $4$ points on average. Three explanations are available and the design
of Section~\ref{sec:sim_design} separates them.

The first is numerical, and is excluded by the guard-tightening check of
Section~\ref{sec:algorithm}.

The remaining two are tail thickness and column count, which the standard
$J = 5$ and $J = 10$ designs vary together. Table~\ref{tab:jdecomp} separates
them using the thin-tailed $J = 5$ pattern.

\begin{table}[ht]\centering
\caption{Decomposition of the $J$ effect, coverage (\%) at $c = 50$,
$M = 500$. ``Tail'' is fat $\to$ thin at $J = 5$; ``$J$'' is thin $J=5 \to$
thin $J=10$; the two sum to the total. The standard error of each difference
is about $2$ points.}
\label{tab:jdecomp}
\begin{tabular}{@{}lcccccc@{}}
\toprule
$I$ & fat $J{=}5$ & thin $J{=}5$ & thin $J{=}10$
    & tail effect & $J$ effect & total \\
\midrule
 7 & 83.0 & 80.8 & ---  & $-2.2$ & ---    & ---    \\
10 & 86.6 & 89.2 & 82.4 & $+2.6$ & $-6.8$ & $-4.2$ \\
15 & 92.4 & 91.4 & 89.2 & $-1.0$ & $-2.2$ & $-3.2$ \\
\bottomrule
\end{tabular}
\end{table}

The tail-thickness effect is null: $+2.6$ and $-1.0$ points at $I = 10$ and
$15$, both inside Monte-Carlo error and of opposite sign. The $J$ effect
carries the whole difference, and at $I = 10$ it is $-6.8$ points, three
standard errors from zero. Coverage degrades at $J = 10$ because there are more
development proportions to estimate from a fixed number of rows, not because
the late columns carry little mass. This is the mechanism to which
Theorem~\ref{thm:coverage_rate} attributes the $J$-dependence of its constant,
and it is the reason that constant depends on $J$ at all. It is also the
component that Section~\ref{sec:conditional_coverage} finds responsible for
eleven of the twenty-six coverage points of allocation-conditional
miscalibration: $\hat\bpi$ is the parameter this construction substitutes and
does not integrate over.

\subsection{Comparison with the ODP bootstrap}\label{sec:sim_odp}

Table~\ref{tab:odp_comparison} compares the multinomial and the ODP bootstrap
over the five data-generating processes of Section~\ref{sec:sim_design}.

\begin{table}[ht]
\centering
\caption{Multinomial vs ODP bootstrap at $I = J = 10$, $M = 500$,
$B = 1{,}000$, on the geometric development pattern of
Section~\ref{sec:sim_design}. Bias is each procedure's own point-estimate
bias (predictive median for the multinomial, bootstrap mean for ODP), on the
ratio-of-means convention.}
\label{tab:odp_comparison}
\begin{tabular}{@{}llccccc@{}}
\toprule
DGP & Method & Cov.\ 95\% & Cov.\ 75\%
    & Bias (\%) & Rel.\ Width (\%) & $\hat{c}$ \\
\midrule
Dirichlet--Gamma
  & Multinomial & 79.4 & 57.0 &  1.4 &  60.0 & 69.6 \\
  & ODP         & 92.8 & 73.4 &  1.9 &  76.3 & --- \\
\midrule
Non-stat.\ ($\sigma_\delta = 0.05$)
  & Multinomial & 70.0 & 44.6 &  6.9 &  79.6 & 36.9 \\
  & ODP         & 92.0 & 70.8 & $-0.5$ & 112.1 & --- \\
\midrule
Tweedie ($p = 1.3$)
  & Multinomial & 88.6 & 68.4 &  0.2 &  18.2 & 722.9 \\
  & ODP         & 93.8 & 71.6 & $-0.0$ &  20.5 & --- \\
\midrule
Tweedie ($p = 1.5$)
  & Multinomial & 86.4 & 66.6 &  2.4 &  52.6 & 100.6 \\
  & ODP         & 93.0 & 70.0 &  1.5 &  51.8 & --- \\
\midrule
Tweedie ($p = 1.8$)
  & Multinomial & 91.0 & 68.4 & 28.2 & 883.7 & 7.5 \\
  & ODP         & 86.2 & 62.6 & 15.8 & 205.5 & --- \\
\bottomrule
\end{tabular}
\end{table}

Two readings of Table~\ref{tab:odp_comparison} must be kept separate. As a
procedure for unconditional coverage, the ODP bootstrap performs well: in four
of the five data-generating processes it attains coverage closer to the
nominal $95\%$ than the conditional multinomial bootstrap, and the ordering
was the same in every row of an earlier replication of the whole grid at a
second seed, run before the subcomposition correction of
Section~\ref{sec:c_estimation}. The two to three points by which ODP falls short of nominal on the
Dirichlet--Gamma row are the variance-function mismatch between a
Dirichlet allocation and the linear ODP variance function, isolated in
Table~\ref{tab:frailty_amounts}, and not a targeting error.
The exception is the heaviest-tailed Tweedie DGP, where $\hat{c} = 7.5$ places
the multinomial in the boundary regime of Section~\ref{sec:sim_tweedie}
and its nominal coverage is produced by an interval nine times the true
reserve; on that row neither procedure should be described as calibrated. Only the ordering is reported here, since the gaps moved substantially between
the two runs, by up to $6$ points on the Tweedie and non-stationary rows,
and only the direction is stable at $M = 500$.

By Proposition~\ref{prop:boot_decomp} the two procedures target the same
conditional object under diffuse anchoring, and the gap in the table is
the plug-in cost of the multinomial construction: the residual bootstrap
integrates over the row level and the development pattern by resampling,
while Algorithm~\ref{alg:bootstrap} substitutes $\hat{c}$ and $\hat\bpi$.
Section~\ref{sec:oracle} makes that attribution exact, and
Section~\ref{sec:conditional_coverage} shows that integrating over
$(c, \bpi)$ closes most of it. Note that the identification result for
the concentration and its closed-form variance, the operability bound and
the heterogeneity screen of Section~\ref{sec:c_threshold}, and the
applicability to Bornhuetter--Ferguson and Cape Cod have no ODP
counterpart, and are orthogonal to this coverage comparison.

\subsection{Frailty: the conjecture and its rejection}\label{sec:sim_frailty}

The conjecture that motivated this paper was that the ODP bootstrap
omits the posterior frailty variance of the least-developed accident
years, so that under a DGP with row frailty $U_i \sim \Gam(\kappa,\kappa)$
its coverage should fall as $\kappa$ falls while its width stays flat. The
count hierarchy of Section~\ref{sec:counts} is the setting in which that
conjecture can be tested exactly, because the component it says is omitted
is available in closed form on every simulated triangle:
$U_i \mid \mathcal{D}_i \sim \Gam(\kappa + \Nobs_i,\, \kappa + \mu_i F_{I-i})$
and $\E[R_i \mid \mathcal{D}]$ is the mean of Theorem~\ref{thm:ibnr}, so
$\sum_i \E[R_i \mid \mathcal{D}]^2 \Var(U_i \mid \mathcal{D}_i)$ can be
set against the ODP bootstrap variance on the same triangle. If the
component were omitted, the coverage under (N1)--(N3) with a normal limit
would be $2\Phi(\rho z_{1-\alpha/2}) - 1$ with
$\rho^2 = 1/(1 + r)$ and $r$ the ratio of the two variances, and that
prediction is reported next to the measured coverage.

\begin{table}[ht]\centering
\caption{ODP bootstrap under accident-year frailty in the count hierarchy of
Section~\ref{sec:counts} ($I = J = 10$, geometric pattern, $\mu_i = 2{,}000$,
$M = 500$, $B = 1{,}000$; between $445$ and $500$ replications per row are
retained, the remainder having a zero early cell on which the ODP fit is
undefined). ``Omitted/ODP'' is the conjectured omitted component
$\sum_i \E[R_i \mid \mathcal{D}]^2 \Var(U_i \mid \mathcal{D}_i)$, exact
under this DGP, as a ratio to the ODP bootstrap variance on the same
triangle; ``predicted'' is the coverage that ratio would imply if the
component were omitted, under (N1)--(N3) with a normal limit, i.e.\
$2\Phi(\rho z_{1-\alpha/2}) - 1$ with $\rho^2 = 1/(1 + \text{Omitted/ODP})$.
$\kappa = \infty$ is the frailty-free control.}
\label{tab:frailty}
\begin{tabular}{@{}lccccccc@{}}
\toprule
$\kappa$ & $\Var(U_i)$ & Cov.\ 95\% & Cov.\ 75\% & Rel.\ Width (\%)
  & Omitted/ODP & Pred.\ 95\% & Pred.\ 75\% \\
\midrule
$\infty$ & 0.00 & 93.6 & 72.4 & 12.4 & 0.00 & 95.0 & 75.0 \\
 10 & 0.10 & 93.8 & 72.1 & 12.7 & 0.35 & 90.9 & 67.9 \\
  4 & 0.25 & 95.6 & 77.7 & 13.1 & 0.41 & 90.2 & 66.8 \\
  2 & 0.50 & 95.1 & 77.2 & 13.6 & 0.53 & 88.7 & 64.8 \\
  1 & 1.00 & 96.4 & 75.5 & 15.0 & 0.73 & 86.4 & 61.8 \\
\bottomrule
\end{tabular}
\end{table}

The conjecture is rejected. The conjectured omission would take $95\%$
coverage down to $86.4\%$ and $75\%$ coverage to $61.8\%$
at $\Var(U_i) = 1$, where the component is $0.73$ times the
bootstrap variance. The measured coverage is $96.4\%$ and
$75.5\%$, against $93.6\%$ at the frailty-free control,
and the relative width rises from $12\%$ to $15\%$
over the sweep. The interval loses nothing as the frailty variance grows.

Proposition~\ref{prop:boot_decomp} says why. The component is not omitted;
it is the posterior level variance that resampling the diagonal
generates, and under free accident-year effects the bootstrap generates it
at its diffuse-prior value, $\Var(\lambda_i \mid \Nobs_i) = \Nobs_i/F_{I-i}^2$,
which is the $\kappa \to 0$ member of Theorem~\ref{thm:ibnr}. At finite
$\kappa$ the true conditional predictive is narrower, so the bootstrap
over-covers slightly rather than under-covers, which is the direction of
the drift in the table. Adding $\E[R_i \mid \mathcal{D}]^2\Var(U_i \mid \mathcal{D}_i)$
to the bootstrap variance counts the same term twice. The absorption of
the realised frailty by the free levels, which we had read as the
mechanism of a loss, is the mechanism of a conservative pricing of the
level uncertainty.

The same sweep on the amounts side, with $U_i$ multiplying the Gamma
ultimate of the Dirichlet--Gamma hierarchy, does show ODP coverage
falling, from $91.8\%$ to $85.4\%$ at the $95\%$ level, and that is what
originally appeared to confirm the conjecture. Table~\ref{tab:frailty_amounts}
separates the cause by running the same rows under two allocations.

\begin{table}[ht]\centering
\caption{ODP bootstrap under accident-year frailty on the amounts side,
$I = J = 10$, $S_i \sim \Gam(2E_i, 0.001) \cdot U_i$,
$U_i \sim \Gam(\kappa,\kappa)$, $M = 500$, $B = 1{,}000$, on the rows for
which the ODP fit is defined (between $394$ and $500$ per cell), same rows
under two allocations: Dirichlet with $c = 50$ (cell variance quadratic in
the row level) and compound Poisson at $p = 1$ (cell variance linear in
the mean, the ODP variance function).}
\label{tab:frailty_amounts}
\begin{tabular}{@{}lccccccc@{}}
\toprule
& & \multicolumn{3}{c}{Dirichlet allocation} & \multicolumn{3}{c}{Compound Poisson, $p = 1$} \\
\cmidrule(lr){3-5}\cmidrule(lr){6-8}
$\kappa$ & $\Var(U_i)$ & Cov.\ 95\% & Cov.\ 75\% & Width (\%) & Cov.\ 95\% & Cov.\ 75\% & Width (\%) \\
\midrule
$\infty$ & 0.00 & 92.2 & 70.7 & 77.2 & 93.7 & 74.5 & 17.7 \\
 10 & 0.10 & 93.8 & 69.6 & 75.5 & 94.4 & 76.4 & 17.8 \\
  4 & 0.25 & 90.4 & 66.1 & 76.8 & 93.5 & 72.5 & 18.5 \\
  2 & 0.50 & 87.6 & 64.6 & 79.6 & 94.5 & 72.8 & 19.5 \\
  1 & 1.00 & 82.1 & 63.7 & 92.2 & 96.4 & 76.4 & 21.5 \\
\bottomrule
\end{tabular}
\end{table}

Under the Dirichlet allocation, whose cell variance is quadratic in the
row level, coverage falls by $10.1$ points across the sweep;
under a compound Poisson allocation at $p = 1$ on the same rows, which has
the ODP variance function, it rises from $93.7\%$ to
$96.4\%$, the same slight upward drift as in the count
hierarchy, while the width grows in both arms. The amounts-side decline
is a variance-function effect, the mismatch between a Dirichlet
allocation and a linear variance function growing as the rows become
heterogeneous in size, and not a frailty effect. Note that the Dirichlet
arm is below nominal already at $\Var(U_i) = 0$, at $92.2\%$,
in agreement with the $92.8\%$ of Table~\ref{tab:odp_comparison}: that shortfall
is the same mismatch at zero heterogeneity, and it is the whole of the
under-coverage of the ODP bootstrap under the Dirichlet--Gamma DGP
reported in Section~\ref{sec:sim_odp}.

Two consequences follow. First, the systematic under-coverage of the ODP
bootstrap on paid Schedule~P data that \citet{Meyers2015} documents is
not a frailty effect. That is consistent with his own diagnosis, which is
one of location: both the ODP and the Mack model overstate the expected
ultimate on those triangles, and the model that repairs the failure
corrects the mean through a changing settlement rate rather than the
dispersion. Second, the remark on $\kappa$ identification stands but its
implication is reversed: a free-level residual bootstrap cannot learn
$\kappa$, and for that reason it prices the level at the diffuse limit,
which is the widest predictive in the family. What it cannot do is narrow
the predictive when $\kappa$ is large, which requires anchoring the
levels; that is the direction of \citet{VanOirbeek2026cnbg}, and the
information ceiling established there bounds how far anchoring can go.

\subsection{The components of the residual bootstrap variance}\label{sec:sim_mech1}

Proposition~\ref{prop:boot_decomp} attributes the excess of the residual
bootstrap variance over the process-only variance to two components,
resampling of the diagonal and refit of the development factors, and
claims that on square triangles their sum is bounded away from one and
free of volume, while at fixed $J$ the refit component is $O(I^{-1})$.
Both claims are testable, and the components can be isolated by
construction. Within a
single bootstrap replication we form the pseudo-triangle from resampled
Pearson residuals in the usual way and then project it more than once: with
the development factors refitted on the pseudo-triangle, with the factors
fitted to the original triangle, and with both the original factors and the
original latest diagonal. Everything else, i.e.\ the residual draw, the pseudo-triangle and the
dispersion estimate, is shared, so that each ratio isolates one component
only. The data-generating process is exact
overdispersed Poisson, $X_{ij} = \phi\,\mathrm{Pois}(\mu_{ij}/\phi)$ with
$\phi = 1$.

\begin{table}[ht]\centering
\caption{Measured re-estimation inflation, isolated by construction within
each bootstrap replication ($M = 300$, $B = 500$, no failed replications).
``Factor refit'' varies only the development factors; ``total''
additionally restores the original latest diagonal and is therefore the
full excess of the bootstrap variance over the process-only variance.}
\label{tab:mech1}
\begin{tabular}{@{}lccccc@{}}
\toprule
\multicolumn{6}{@{}l}{\emph{Arm A: square triangle $I = J = 10$, volume sweep}}\\
Portfolio volume & $10^4$ & $10^5$ & $10^6$ & $10^7$ & $10^8$ \\
Factor refit      & 1.655 & 1.652 & 1.659 & 1.652 & 1.641 \\
Diagonal resample & 1.812 & 1.815 & 1.804 & 1.811 & 1.816 \\
Total             & 2.986 & 2.988 & 2.985 & 2.984 & 2.969 \\
\midrule
\multicolumn{6}{@{}l}{\emph{Arm B: fixed $J = 5$, triangle size sweep}}\\
$I$ & 7 & 10 & 15 & 20 & 30 \\
Factor refit & 1.474 & 1.276 & 1.169 & 1.118 & 1.075 \\
Total        & 2.656 & 2.288 & 2.081 & 2.024 & 1.930 \\
\bottomrule
\end{tabular}
\end{table}

The two components multiply: in Arm~A the product of the factor-refit and
diagonal-resampling ratios reproduces the total to within $0.5\%$ in every
cell, and in Arm~B, which reports the factor-refit ratio and the total only,
the implied diagonal-resampling ratio lies between $1.79$ and $1.82$
throughout, in agreement with Arm~A.

Arm~A settles the square-triangle claim. Over four orders of magnitude of
portfolio volume the total varies by $0.6\%$ and the factor-refit component by
$1.1\%$; the regression of $\log(\text{inflation}-1)$ on
$\log(\text{volume})$ has slope $-0.002$. Under its own data-generating process the ODP bootstrap's variance is
therefore very nearly three times the process-only variance on a square
triangle, a factor $1.81$ of which is the diagonal resampling that
Proposition~\ref{prop:boot_decomp}(ii) identifies with the posterior level
variance, and $1.65$ the factor refit that (iii) identifies with the
estimation variance of the pattern. Neither is a miscalibration. An
excess growing in portfolio volume would have reached the order of $10^4$
at the top of this sweep; that it does not confirms that both components
scale with volume as the process term does. This is the regime in which
\citet{EnglerLindskog2024} motivate Mack's estimator through
large-exposure asymptotics in a compound Poisson setting; there, as here,
the estimation term grows with exposure at the same rate as the process
term rather than becoming negligible against it, and the volume
invariance of Table~\ref{tab:mech1} is the same fact seen from the
bootstrap side.

Arm~B confirms the fixed-$J$ claim. The factor-refit inflation falls from
$1.474$ at $I = 7$ to $1.075$ at $I = 30$, with $\log(\text{inflation}-1)$
regressing on $\log I$ at slope $-1.26$ against the $-1$ the proposition implies.
The total falls more slowly, from $2.656$ to $1.930$, because the
diagonal-resampling component does not decay: it is a property of resampling
the observed data at all, not of estimating factors from few rows.
The pattern component is therefore a small-triangle and square-triangle
phenomenon, while the level component, being the posterior variance of a
quantity estimated from a single row, persists at every $I$ tested, as it
must.

Suppressing the refit also costs coverage, by $5.3$, $2.7$, $3.7$, $1.0$
and $1.3$ points at $I = 7, 10, 15, 20, 30$ on the same replications. With
$M = 300$ the standard error of each difference is about $1.5$ points, so
only the direction and the rough magnitude are established: a few points at
$I \leq 15$ and roughly nothing by $I = 20$--$30$, which is what the decay
of the factor-refit inflation above implies. The plug-in multinomial pays a
larger deficit against its oracle ($13.7$ to $2.1$ points,
Table~\ref{tab:oracle}), but that deficit prices the substitution of $c$ as
well as $\bpi$ and is not comparable term by term.

\subsection{Isolating the plug-in deficit: oracle coverage}\label{sec:oracle}

The coverage comparison of Table~\ref{tab:odp_comparison} conflates two
effects: the target a bootstrap aims at, and the error incurred in estimating
its parameters. To separate them we re-run the multinomial bootstrap with the
true $(c, \bpi)$ substituted for its estimates, removing all estimation error,
and compare it with the ODP bootstrap in its standard form. All three arms are
evaluated on the same simulated triangles in each replication.

\begin{table}[ht]\centering
\caption{Coverage with estimated vs.\ true parameters, all three procedures on
the same simulated triangles (Dirichlet--Gamma, $J=5$, $c=50$, $M=2{,}000$,
$B=1{,}000$). No replication was dropped.}
\label{tab:oracle}
\begin{tabular}{@{}lccccc@{}}
\toprule
$I$ & 7 & 10 & 15 & 20 & 30 \\
\midrule
Multinomial, estimated $(\hat c, \hat\bpi)$ & 81.0 & 88.5 & 91.7 & 92.0 & 93.6 \\
Multinomial, true $(c, \bpi)$ (oracle)      & 94.7 & 94.9 & 95.6 & 94.7 & 95.7 \\
ODP, estimated (standard)                   & 92.7 & 93.3 & 93.9 & 93.6 & 94.3 \\
\midrule
Plug-in deficit (points)         & 13.7 & 6.4 & 3.9 & 2.7 & 2.1 \\
Width, ODP / conditional oracle  & 1.11 & 1.03 & 0.99 & 0.96 & 0.95 \\
$\E[\hat c]$ (true $c = 50$)     & 92.6 & 65.6 & 57.3 & 55.4 & 53.3 \\
\bottomrule
\end{tabular}
\end{table}

Three conclusions follow. First, the conditional predictive is correctly
specified in its parameters: given the true $(c, \bpi)$ the construction
covers at $94.7$--$95.7\%$ across the whole range, with no trend in $I$.
Second, the deficit in the estimated case is plug-in error in those
parameters. It falls from $13.7$ points at $I = 7$ to $2.1$ at $I = 30$, a
log-log slope of $-1.28$ against $I$; Theorem~\ref{thm:coverage_rate} gives
$O(I^{-1/2})$, a valid upper bound that describes the standard deviation of
$\hat{c}$, whereas the deficit tracks its $O(I^{-1})$ bias
($\E[\hat c]/c$ falls from $1.85$ to $1.07$ over the same range;
Section~\ref{sec:coverage}). Third, ODP is close to nominal throughout at
$92.7$--$94.3\%$, and its interval is $11\%$ wider than the correctly
specified conditional target at $I = 7$, converging to $5\%$ \emph{narrower}
by $I = 30$.

The practical reading is that the plug-in construction is unusable at
$I = 7$, costly at $I = 10$ and essentially calibrated by $I = 20$--$30$, and
that the regularised construction of Section~\ref{sec:regularised} is the one
to use in the range where macro-level triangles actually sit. Note that this
attribution concerns \emph{unconditional} coverage; the oracle arm is
calibrated on average and, as the next subsection shows, is not conditionally
calibrated.

\subsection{Conditional calibration}\label{sec:conditional_coverage}

Every coverage figure reported so far is unconditional: a proportion over
simulated triangles. The argument of this paper is about a conditional object,
and the two notions differ exactly when a procedure's error depends on the
realised triangle. We therefore partition the replications by a statistic of
$\mathcal{D}$ and report coverage within each cell. A procedure targeting
$p(R \mid \mathcal{D})$ should be calibrated within cells; one calibrated only
on average may be systematically wrong inside them.

Two partitions are used, both into terciles with cut-points fixed in advance
by a pilot run under the same data-generating process. $S_1 = \hat{c}/c$ is
the statistic through which the plug-in construction depends on the triangle,
and is therefore the partition on which requirement~(ii) of
Principle~\ref{prin:bootstrap} makes a prediction. $S_2$ is the coefficient of
variation of the Chain-Ladder row ultimates, a function of $\mathcal{D}$ that
no procedure uses as a parameter. Five procedures run on each simulated
triangle: the plug-in bootstrap of Algorithm~\ref{alg:bootstrap}; the
regularised bootstrap of Algorithm~\ref{alg:regularised}, integrating over the
posterior of $c$; a variant integrating over the joint posterior of
$(c,\bpi)$; the ODP bootstrap; and an oracle supplied with the true $c$ and
true $\bpi$, which carries no estimation error of any kind.

\begin{table}[ht]\centering
\caption{Coverage (\%) within terciles of a statistic of $\mathcal{D}$
(Dirichlet--Gamma DGP, $J = 5$, $c = 50$, $M = 2{,}000$). ``Gradient'' is the
range across cells; with roughly $660$ replications per cell the standard
error of a cell coverage is about $1.3$ points, so a gradient below $4$ is not
distinguishable from flat. ``Width'' is the $95\%$ interval as a percentage of
the true reserve.}
\label{tab:conditional_coverage}
\small
\begin{tabular}{@{}llccccc@{}}
\toprule
& Procedure & cell 1 & cell 2 & cell 3 & Gradient & Width \\
\midrule
\multicolumn{7}{@{}l}{\emph{$I = 10$, partition on $S_1 = \hat{c}/c$}}\\
& Plug-in                  & 96.5 & 88.3 & 78.3 & $18.3$ & 76.7 \\
& Regularised in $c$       & 92.4 & 88.0 & 87.2 & $\phantom{0}5.3$ & 77.0 \\
& Regularised in $c,\bpi$  & 96.2 & 94.2 & 93.4 & $\phantom{0}2.8$ & 96.0 \\
& ODP                      & 96.2 & 93.0 & 92.4 & $\phantom{0}3.8$ & 81.2 \\
& Oracle (true $c,\bpi$)   & 95.5 & 94.7 & 96.4 & $\phantom{0}1.7$ & 79.3 \\
\midrule
\multicolumn{7}{@{}l}{\emph{$I = 10$, partition on $S_2$ (CV of ultimates)}}\\
& Plug-in                  & 99.2 & 93.3 & 73.6 & $25.6$ & 76.7 \\
& Regularised in $c$       & 99.4 & 95.7 & 75.1 & $24.3$ & 77.0 \\
& Regularised in $c,\bpi$  & 99.8 & 98.7 & 86.7 & $13.1$ & 96.0 \\
& ODP                      & 100.0 & 99.3 & 84.3 & $15.7$ & 81.2 \\
& Oracle (true $c,\bpi$)   & 99.7 & 98.7 & 89.3 & $10.4$ & 79.3 \\
\bottomrule
\end{tabular}
\end{table}

On the partition by $S_1 = \hat{c}/c$, the plug-in construction falls from
$96.5\%$ on triangles yielding a low
$\hat{c}$ to $78.3\%$ on those yielding a high one, a gradient of $18.3$
points. The direction is the one the mechanism predicts: an overstated
$\hat{c}$ concentrates the Beta predictive and narrows the interval. Replacing
the point estimate by the posterior of $c$ removes most of it ($5.3$ points),
and integrating over $(c,\bpi)$ jointly removes almost all ($2.8$). The
oracle, carrying no estimation error, is flat at $1.7$ points. Since the
$\hat{c}$-conditional miscalibration is entirely estimation error and
integration removes $85\%$ of it, requirement~(ii) has measurable content. The widths of the plug-in,
$c$-regularised and oracle arms agree to within one percentage point, the improvement hereby being a
redistribution of coverage across triangles rather than a uniform widening.

The ODP row on $S_1$ is comparatively flat, at $3.8$ points, but this carries
no information: ODP does not use $\hat{c}$, so its coverage is close to
independent of the partitioning variable by construction. It is reported for completeness only.

Every procedure shows a large gradient on the second partition, and
integrating over $c$ alone does not help. The oracle is the informative row:
with the true $c$ and the true $\bpi$ it still varies by $10.4$ points. The $S_2$ miscalibration is therefore not, or not only, estimation error. The plug-in's $25.6$ points attribute \emph{sequentially}: each row gives the
increment along the path plug-in $\to$ $c$-regularised $\to$ joint $\to$
oracle of Table~\ref{tab:conditional_coverage}, so that the split is
path-dependent and not a partition of separable causes:
\begin{center}
\begin{tabular}{@{}lc@{}}
\toprule
Increment along the path & Points \\
\midrule
Removed by integrating over $c$ (plug-in $\to$ $c$-reg.) & $\phantom{0}1.3$ \\
Further removed by integrating over $\bpi$ ($c$-reg.\ $\to$ joint) & $11.2$ \\
Joint arm to oracle (residual estimation and MC error) & $\phantom{0}2.7$ \\
Oracle floor: diffuse-anchoring approximation & $10.4$ \\
\midrule
Total & $25.6$ \\
\bottomrule
\end{tabular}
\end{center}
The first two are the same failure of requirement~(ii) identified on $S_1$,
now in the development pattern rather than the concentration, and they are
removable in principle by integrating over $\bpi$; the joint arm does so and
recovers $11$ of the $25.6$ points, at a $25\%$ cost in interval width. The
third is not estimation error at all, and is the subject of the next
paragraph.

By Remark~\ref{rem:diffuse_scope} the Beta representation of
Theorem~\ref{thm:ibnp} is exact only in the limit $\alpha, \beta \to 0$; under
an informative prior on the ultimate the conditional law of $\Wobs_i$ given
$\Xobs_i$ is tilted, and the tilt depends on $\Xobs_i$. Since $S_2$ orders
triangles by the dispersion of the estimated ultimates, a tilt of this kind
would appear as exactly the residual gradient the oracle displays. The
prediction is testable by varying the dispersion of $S_i$ while holding its
mean fixed.

\begin{table}[ht]\centering
\caption{Coverage gradient on $S_2$ as the ultimate approaches the diffuse
limit. $S_i \sim \Gam(a, a/2\!\times\!10^6)$, so the mean is held fixed and
only the dispersion varies. Bin edges re-derived by pilot at each shape, since
$S_2$'s scale moves with the dispersion. $I = 10$, $J = 5$, $c = 50$,
$M = 2{,}000$.}
\label{tab:diffuse_scope}
\begin{tabular}{@{}lcccc@{}}
\toprule
Shape $a$ & $\CV(S_i)$ & Plug-in & Joint $(c,\bpi)$ & Oracle \\
\midrule
$2{,}000$ & $2.2\%$  & $25.6$ & $13.1$ & $10.4$ \\
$50$      & $14\%$   & $10.5$ & $\phantom{0}6.2$ & $\phantom{0}5.7$ \\
$4$       & $50\%$   & $\phantom{0}1.8$ & $\phantom{0}0.5$ & $\phantom{0}1.4$ \\
\bottomrule
\end{tabular}
\end{table}

The oracle gradient falls from $10.4$ points to $1.4$, which is inside
Monte-Carlo error, as the ultimate becomes diffuse, confirming the mechanism. Note that
the plug-in gradient collapses with it, from $25.6$ to $1.8$: when the Beta
representation is exact, the estimation error in $(\hat{c}, \hat\bpi)$ has
nothing to interact with and the construction is conditionally calibrated on
$S_2$ even with both parameters substituted. This is why the attribution above is stated as a path
decomposition: the estimation and anchoring components multiply rather than
add, in the sense that the estimation gradient collapses together with the
anchoring approximation once the latter is exact, so that the $11.2$ points
credited to $\bpi$ are its increment along one path and not a standalone
share.

Requirement~(ii) of Principle~\ref{prin:bootstrap} is implementable and worth
implementing: integration removes $85\%$ of the $\hat{c}$-conditional
miscalibration at unchanged width, and a further $11$ points of the $S_2$
miscalibration at a $25\%$ width cost. But conditional calibration in general
is not achieved, and the obstruction is not estimation. Under
Chain-Ladder-type anchoring with a concentrated ultimate, which is the regime
in which the method is normally applied, the construction is conditionally
miscalibrated by roughly ten coverage points across terciles of realised
allocation dispersion, for a reason internal to the model rather than to the
data: the closed-form conditional law is exact only in a limit the anchoring
does not occupy. Removing it requires the tilted density of Remark~\ref{rem:diffuse_scope} rather than its
Beta approximation, and that density does not admit the closed form on which
the anchor-agnostic property of Section~\ref{sec:bootstrap} depends.

\subsection{Anchoring regimes}\label{sec:sim_anchor}

The construction of Section~\ref{sec:bootstrap} depends on the reserving
method only through the cumulative development proportions, and is in that
sense anchor-agnostic. Every coverage figure reported so far, however, is
measured under Chain-Ladder anchoring, the diffuse limit in which the Beta
representation of Theorem~\ref{thm:ibnp} is exact
(Remark~\ref{rem:diffuse_scope}). Under an informative anchor the Beta draw is
a plug-in approximation to a tilted density, and the predictive is a different
estimator: $E_i q(1 - W_i)$ rather than $\Xobs_i(1-W_i)/W_i$. Whether it remains calibrated is a separate question from whether it
applies.

Three anchors are run on the same simulated triangle in each replication: the
diffuse Chain-Ladder anchor; the Bornhuetter--Ferguson anchor $E_i q$ with the
prior loss ratio supplied at the true value and at $\pm 20\%$; and the Cape
Cod anchor, with $\hat q = \sum_i \Xobs_i / \sum_i E_i F_{I-i}$ estimated from
the triangle. The data-generating process carries a known exposure and a known
true loss ratio, so ``correctly specified'' is well defined and the
misspecified arms are misspecified by a stated amount.

\begin{table}[ht]\centering
\caption{Coverage (\%) and interval width (\% of the true reserve) by
anchoring regime. Dirichlet--Gamma DGP, $J = 5$, $c = 50$, $M = 500$,
$B = 1{,}000$; within each row all three anchors are evaluated on the same
triangles, while prior rows are independent replications. The CL and CC
anchors do not depend on the prior, so their spread across rows within each
$I$ (at most $1.4$ and $3.2$ points) reads directly as the Monte-Carlo scale.
True loss ratio $2{,}000$; Cape Cod recovers $\hat q = 1{,}998$--$2{,}000$.}
\label{tab:anchor_coverage}
\begin{tabular}{@{}llcccccc@{}}
\toprule
& & \multicolumn{3}{c}{Coverage} & \multicolumn{3}{c}{Width} \\
\cmidrule(lr){3-5}\cmidrule(lr){6-8}
$I$ & Prior $q$ & CL & BF & CC & CL & BF & CC \\
\midrule
10 & $0.8\,q^{\ast}$ & 87.0 & 37.4 & 82.2 & 75.6 & 31.6 & 39.6 \\
10 & $q^{\ast}$      & 88.4 & 86.2 & 82.2 & 76.0 & 40.2 & 40.3 \\
10 & $1.2\,q^{\ast}$ & 88.0 & 59.8 & 79.0 & 76.1 & 47.8 & 39.8 \\
\midrule
15 & $0.8\,q^{\ast}$ & 90.4 & 37.2 & 86.4 & 77.6 & 32.8 & 41.0 \\
15 & $q^{\ast}$      & 90.2 & 89.0 & 86.8 & 76.2 & 40.5 & 40.6 \\
15 & $1.2\,q^{\ast}$ & 90.2 & 63.8 & 88.4 & 76.7 & 48.6 & 40.5 \\
\bottomrule
\end{tabular}
\end{table}

At $q = q^{\ast}$ the Bornhuetter--Ferguson anchor covers at $86.2\%$ against
the diffuse anchor's $88.4\%$ at $I = 10$, and $89.0\%$ against $90.2\%$ at
$I = 15$, both within Monte-Carlo error at $M = 500$, on an interval $53\%$
of the width. The relative bias of the predictive median is under one
percentage point for both. This is the credibility structure operating as
intended: the same triangle combined with a correct informative prior yields a
materially sharper predictive at no cost in coverage.

A $20\%$ error in the prior loss ratio reduces coverage to $37.4\%$ when the
prior is too low and $59.8\%$ when it is too high, at $I = 10$, with the
predictive median displaced by $-20.5\%$ and $+19.1\%$ respectively. The
asymmetry follows from the widths: an understated prior both displaces the
estimate downward and narrows the interval, so the two errors compound.
Increasing $I$ does not help, coverage at $I = 15$ being $37.2\%$ and
$63.8\%$ respectively, since the miscalibration is driven by the anchor and
not by estimation. Note that no amount of data corrects a prior that is
asserted rather than estimated. An informative anchor hence purchases
precision at the price of the correctness of that anchor, quantified in
Table~\ref{tab:anchor_coverage}.

Cape Cod recovers the loss ratio almost exactly ($\hat q = 1{,}998$--$2{,}000$
against $2{,}000$) and its point estimate is unbiased to within $1.2\%$, yet
it covers $2$--$9$ points below the diffuse anchor across the six rows, and
$6.2$ and $3.4$ points below it on the correctly specified rows, at the same
interval width as a correctly specified Bornhuetter--Ferguson. The deficit is not bias; it is
the sampling variability of $\hat q$, estimated from the same triangle that is
being held fixed and whose uncertainty the Beta draw does not represent. This
is requirement~(ii) of Principle~\ref{prin:bootstrap} again, in a third
parameter: the construction integrates over $c$ where the regularised variant
is used, substitutes $\hat\bpi$, and substitutes $\hat q$ under Cape Cod. The
deficit narrows with $I$ ($6.2$ points at $I = 10$, $3.4$ at $I = 15$),
consistent with an estimation-error explanation.

\section{Empirical illustration}\label{sec:empirical}

Count-side reserving on real data is illustrated in
\citet{VanOirbeek2026nbcl}, whose reported $\hat{\kappa}$ does not carry over
to hierarchy~\eqref{eq:frailty} (Remark~\ref{rem:kappa_identification}). The
present paper complements that illustration by demonstrating the amount-side
framework on three benchmark paid loss triangles from the
\texttt{ChainLadder} R package \citep{ChainLadder2023}: Taylor--Ashe
\citep{taylor1983}, RAA \citep{mack1994} and Mortgage, for which the
allocation parameter $\hat{c}$ rather than the frailty parameter
$\hat{\kappa}$ is the diagnostic of interest.

\subsection{Reserve estimates}\label{sec:empirical_reserves}

\begin{table}[ht]
\centering
\begin{threeparttable}
\caption{Reserve estimates for three benchmark triangles. Multinomial
bootstrap with $B=5{,}000$, seed 2026. For RAA and Mortgage the multinomial
rows cover only the accident years meeting the moment-reporting tier
$\hat{c}F_{I-i} \geq 5$ of Remark~\ref{rem:validity}; the like-for-like
Chain-Ladder figure for those years is given as ``CL (reported)''.}
\label{tab:empirical_all}
\small
\begin{tabular}{@{}llrrrr@{}}
\toprule
Dataset ($\hat{c}$) & Method & Mean & Std.\ Err & 75th \%ile & 95th \%ile \\
\midrule
Taylor--Ashe (189)
  & CL (all 9 years)  & 18{,}681 & --- & --- & --- \\
  & Multinomial (9)   & 19{,}234 & 1{,}914 & 20{,}292 & 22{,}639 \\
  & Mack (all 9)      & 18{,}681 & 2{,}447 & --- & --- \\
\midrule
RAA (24)\tnote{$\dagger$}
  & CL (all 9 years)  & 52{,}135 & --- & --- & --- \\
  & CL (8 reported)   & 35{,}796 & --- & --- & --- \\
  & Multinomial (8)   & 39{,}288 & 8{,}945 & 44{,}283 & 55{,}116 \\
  & Mack (all 9)      & 52{,}135 & 26{,}909 & --- & --- \\
\midrule
Mortgage (152)\tnote{$\ddagger$}
  & CL (all 8 years)  & 14{,}547 & --- & --- & --- \\
  & CL (7 reported)   & 12{,}900 & --- & --- & --- \\
  & Multinomial (7)   & 13{,}260 & 1{,}389 & 14{,}138 & 15{,}768 \\
  & Mack (all 8)      & 14{,}547 & 3{,}729 & --- & --- \\
\bottomrule
\end{tabular}
\begin{tablenotes}\small
\item[$\dagger$] RAA: $\hat{c} = 24.3$ against an operability bound
$c^\dagger(5) = 44.6$ (Section~\ref{sec:c_threshold}). Its most recent
accident
year has $F_{I-i} = 0.112$, hence $\hat{c}F_{I-i} = 2.7$: the predictive
variance exists but the fourth moment does not, so the year falls in the third
tier of Remark~\ref{rem:validity} and is excluded from the moment summaries.
That single year carries $16{,}339$ of the $52{,}135$ Chain-Ladder reserve,
i.e.\ $31\%$ of the total. The multinomial row is therefore not comparable
with the full-triangle Chain-Ladder and Mack rows; the like-for-like
comparison is against CL (8 reported).
\item[$\ddagger$] Mortgage: $\hat{c} = 152.0$ against
$c^\dagger(5) = 632.5$. Its excluded year has $F_{I-i} = 0.008$ and hence
$\hat{c}F_{I-i} = 1.2$, in the lowest tier, where the predictive mean barely
exists and the variance does not. It carries $1{,}647$ of $14{,}547$, or
$11\%$.
\end{tablenotes}
\end{threeparttable}
\end{table}

On the years it reports, the bootstrap confirms Theorem~\ref{thm:ibnp}(iii)
on all three portfolios. The predictive mean exceeds the Chain-Ladder estimate
for the same accident years by the accident-year-weighted aggregate of the
factor $\hat{c}F_{I-i}/(\hat{c}F_{I-i}-1)$: by $3.0\%$ on Taylor--Ashe
($19{,}234$ against $18{,}681$, $\hat{c} = 189.1$), $2.8\%$ on Mortgage
($13{,}260$ against $12{,}900$, $\hat{c} = 152.0$) and $9.8\%$ on RAA
($39{,}288$ against $35{,}796$, $\hat{c} = 24.3$). The correction is largest
exactly where $\hat{c}$ is smallest, as the theorem requires, and vanishes as
$c \to \infty$, which is the sense in which Chain-Ladder is the
concentrated-allocation limit of the hierarchy. The reported figure is an
aggregate over accident years with different $F_{I-i}$ and is not the per-year
factor of the theorem.

Read against a single constant, the concentration estimates would order the
portfolios by development stability. Read against the operability bound
$c^\dagger(\tau) = \tau/\pi_0$ of Section~\ref{sec:c_threshold}, which is the
quantity that actually governs what the framework can report, the ordering
changes:
\begin{center}
\begin{tabular}{@{}lccccl@{}}
\toprule
Portfolio & $\hat{c}$ & $\pi_0$ & $c^\dagger(2)$ & $c^\dagger(5)$
& Years reported \\
\midrule
Taylor--Ashe & 189.1 & 0.0692 &  28.9 &  72.2 & 9 of 9 \\
RAA          &  24.3 & 0.1121 &  17.8 &  44.6 & 8 of 9 \\
Mortgage     & 152.0 & 0.0079 & 253.0 & 632.5 & 7 of 8 \\
\bottomrule
\end{tabular}
\end{center}
Only Taylor--Ashe clears its own $c^\dagger(5)$, and it is the only portfolio
on which the bootstrap reports every accident year. Mortgage has the second
largest $\hat{c}$ of the three and is nevertheless the most demanding
portfolio, because it reports only $0.79\%$ of the ultimate in its first
development year and therefore requires $\hat{c} \geq 633$ for full moment
reporting on its greenest year. RAA fails its bound by a smaller margin. The
contrast between the two exclusions is not the same mechanical restriction
acting arbitrarily: it is $\pi_0$, and it is computable from the development
pattern before any reserve is calculated.

With $\hat{c} = 24.3$, the most recent accident year of the RAA triangle,
observed to only $F_{I-i} = 0.112$, has $\hat{c}F_{I-i} = 2.7$: below the $\hat{c}F_{I-i} > 4$ at which all four
moments exist, though above the boundary at which the predictive variance
ceases to exist. The bootstrap therefore reports nothing for that year. Note that this is not a
question of under-reserving: on the eight years for which the bootstrap does
report, it over-reserves relative to Chain-Ladder, as the theorem predicts.
The excluded year does, however, carry $31\%$ of the portfolio's reserve. A procedure that remains silent on a third of the
liability should not be used on that triangle, and the comparison
$\hat{c} = 24.3 < c^\dagger(5) = 44.6$ indicates this before any reserve is
computed.

The Mortgage triangle offers a contrast in consequence rather than in kind: it
too falls below its bound, and by a far wider margin ($152.0$ against
$632.5$), but the year it excludes carries $11\%$ of the reserve rather than
$31\%$.

For the well-behaved triangles the multinomial intervals can be compared with
Mack's. On Taylor--Ashe, the only portfolio for which no year is excluded and
the comparison is therefore exact, the multinomial standard error is $1{,}914$
against Mack's $2{,}447$, i.e.\ $22\%$ narrower. This is not an efficiency
gain. The two quantities differ in
what they price: Mack's MSEP includes an estimation-error term, whereas the
plug-in bootstrap conditions on $(\hat{c}, \hat\bpi)$ and prices allocation
uncertainty only. The omitted component is the plug-in deficit of
Theorem~\ref{thm:coverage_rate}, which Section~\ref{sec:oracle} measures at
$6.4$ coverage points at $I = 10$ and attributes to estimation error in
$(\hat{c}, \hat\bpi)$. A narrower interval from a procedure known to
under-cover is a restatement of the deficit, not evidence against it.
Practitioners requiring an interval that includes estimation error should use
the regularised construction of Section~\ref{sec:regularised}, which
integrates over the posterior of $c$ rather than plugging it in;
Table~\ref{tab:empirical_reg} reports it on the same three triangles.

\subsection{The regularised construction on the benchmarks}
\label{sec:empirical_reg}

Section~\ref{sec:regularised} recommends Algorithm~\ref{alg:regularised} over
the plug-in wherever $\hat{c}$ is not well determined, and
Section~\ref{sec:c_threshold} identifies exactly that condition on two of the
three benchmarks. We therefore apply it here. The prior is
$\log c \sim \mathcal{N}(\log 50, 1)$ as in Appendix~\ref{app:bayesian}, the
likelihood is the exact subcomposition likelihood of
Lemma~\ref{lem:dirichlet_sub}, and $\hat\bpi$ is substituted as in the
plug-in: Algorithm~\ref{alg:regularised} integrates over the concentration
and not over the development pattern, so the allocation-conditional
miscalibration of Section~\ref{sec:conditional_coverage} is present in these
figures too.

The reporting rule of Remark~\ref{rem:validity} is a plug-in construct and
does not transfer. Under the posterior mixture the predictive mean of
$\Sibnp_i$ is
$\Xobs_i \int c(1-F_{I-i})/(cF_{I-i}-1)\, d\pi(c \mid \mathcal{D})$, and it
fails to exist twice over: the event $\{cF_{I-i} \leq 1\}$ carries positive
posterior probability under any full-support prior and the conditional mean
is already infinite there, while on its complement the integrand has a
simple pole at $c = 1/F_{I-i}$ at which the posterior density is positive,
so the integral diverges logarithmically. The same argument at
$c = k/F_{I-i}$ removes every higher moment. We therefore report quantiles throughout this subsection, since no moment
exists.

This is also why the accident years excluded from
Table~\ref{tab:empirical_all} reappear below. They are not better determined
under Algorithm~\ref{alg:regularised}; the condition that silenced them,
$\hat{c}F_{I-i} \geq 5$, is a condition for the existence of plug-in moments
and is not the condition governing a quantile of the posterior predictive.

\begin{table}[ht]\centering
\caption{Regularised conditional predictive (Algorithm~\ref{alg:regularised})
on the three benchmarks: percentiles of the total outstanding reserve,
$B = 5{,}000$ posterior draws, prior $\log c \sim \mathcal{N}(\log 50, 1)$,
seed 2026. All accident years are included, including those excluded from
Table~\ref{tab:empirical_all}; units follow Table~\ref{tab:empirical_all}
throughout (thousands for all three portfolios). The posterior mean of $c$
(posterior SD in parentheses) is given alongside the plug-in $\hat{c}$.}
\label{tab:empirical_reg}
\small
\begin{tabular}{@{}lrrrrrrr@{}}
\toprule
& \multicolumn{2}{c}{Concentration} & \multicolumn{5}{c}{Reserve percentile} \\
\cmidrule(lr){2-3}\cmidrule(lr){4-8}
Portfolio & plug-in $\hat{c}$ & post.\ mean
& 5th & 25th & 50th & 75th & 95th \\
\midrule
Taylor--Ashe & 189.1 & 108.7 (22.4)
  & 16{,}093 & 17{,}834 & 19{,}281 & 21{,}166 & 24{,}896 \\
RAA          &  24.3 &  27.3 (5.7)
  & 39{,}700 & 48{,}694 & 57{,}816 & 69{,}851 & 105{,}818 \\
Mortgage     & 152.0 &  65.4 (13.9)
  & 12{,}392 & 14{,}868 & 18{,}197 & 29{,}156 & 476{,}499 \\
\bottomrule
\end{tabular}
\end{table}

The RAA triangle is the case that Section~\ref{sec:c_threshold} identifies in
advance
($\hat{c} = 24.3$ against $c^\dagger(5) = 44.6$) and on which
Table~\ref{tab:empirical_all} is silent for $31\%$ of the Chain-Ladder
reserve. Under the regularised construction that accident year, observed to
$F_{I-i} = 0.112$, has percentiles
\[
7{,}050 \;/\; 11{,}943 \;/\; 18{,}058 \;/\; 28{,}254 \;/\; 63{,}903
\]
at the 5th, 25th, 50th, 75th and 95th, and its posterior median equals
$31.2\%$ of the total's posterior median, against the $31\%$ of the
Chain-Ladder reserve that the plug-in leaves unreported. The interval is wide,
and correctly so: at $\hat{c}F_{I-i} \approx 3$ the allocation of that year
is genuinely poorly determined, and a nine-fold ratio between the 5th and the 95th percentile is the correct
representation of that state of knowledge. The posterior mean of $c$ is $27.3$ against
the plug-in's $24.3$, so on this triangle regularisation moves the
concentration very little; what it changes is the reportability, not the
estimate. One accident year contributes no subcomposition to the likelihood,
its seventh development increment being negative ($-103$, the known salvage
entry of this triangle); the year is included in the predictive, and its fully observed subcomposition through development year six, valid by
Lemma~\ref{lem:dirichlet_sub}, is deliberately not used, since the likelihood
takes one subcomposition per row at its maximal horizon.

On Taylor--Ashe, the posterior mean of $c$ is $108.7$ against a plug-in
$\hat{c} = 189.1$, a
$43\%$ reduction. The natural suspicion is the prior, centred at $50$ below
a likelihood peaked much higher, and the sensitivity grid rules it out:
tripling the prior mean from $30$ to $100$ moves the posterior mean of $c$
only from $106.9$ to $112.2$, a log-linear pooling weight of roughly $4\%$
on the prior, so the posterior sits essentially on the exact subcomposition
likelihood, which peaks near $110$ and not near $189$. The gap is therefore
between the moment estimator and the exact likelihood on the same triangle,
and it is the real-data face of Remark~\ref{rem:sigma_c_finite}: the moment
estimator inverts a noisy sample variance, is upward-biased by roughly a
third at these triangle sizes in simulation, and here, with $n_k$ as small
as three at the deep horizons of a ten-column triangle, the exact likelihood
places the concentration far below it. The predictive is nevertheless insensitive to that disagreement, with a
median total of $19{,}281$ against the plug-in mean of $19{,}234$, since
$\hat{c}F_{I-i} \gg 1$ on every accident year, so the predictive depends on $c$ only weakly over the
contested range.

Mortgage reports $0.79\%$ of its ultimate in the first development year, so
its greenest accident year has $cF_{I-i} \approx 0.5$ at the posterior mean
of $c$. The first shape parameter of the Beta is then below one, the
conditional predictive has no mean at posterior-typical $c$, and the reported quantiles for that year run from $407$ at the 5th
percentile to $465{,}527$ at the 95th, against a Chain-Ladder reserve of
$14{,}547$ for the whole portfolio. The total inherits it: a 95th percentile of
$476{,}499$, roughly thirty-three times the Chain-Ladder reserve.

We report this row and do not recommend it. It is the bottom tier of
Remark~\ref{rem:validity} evaluated rather than avoided, and the operability
bound flags it in advance: $c^\dagger(5) = 632.5$ against $\hat{c} = 152.0$
is the widest margin of the three benchmarks
(Section~\ref{sec:c_threshold}).
Regularisation makes the year \emph{reportable}, in the sense that a
quantile exists and can be computed; it does not make it \emph{informative}.
Note that the two are easily conflated, however.

Across prior means of $30$, $50$ and $100$, i.e.\ a factor of $3.3$, the
posterior median total moves by under one percent on each portfolio, within
the Monte-Carlo resolution of the runs, while the posterior mean of $c$
moves by about five percent. The prior sensitivity that
Appendix~\ref{app:bayesian} documents in simulation is a coverage property
measured over triangles and cannot be assessed on one realised triangle;
what can be measured here is the location, and it is data-dominated. On
these triangles the likelihood pins the concentration far more tightly than
the prior does, while the predictive intervals remain as wide as the level
of $c$ dictates. Note that a parameter being identified and a predictive
being calibrated are different properties, the benchmarks exhibiting the
former without the latter.

\subsection{Bornhuetter--Ferguson illustration}\label{sec:empirical_bf}

To demonstrate the anchor-agnostic nature of the construction, we apply
Algorithm~\ref{alg:bootstrap} to Taylor--Ashe under Bornhuetter--Ferguson
proportions, using first-period claims as the exposure proxy and
$q^{\mathrm{BF}} = 12$ as the prior loss ratio. The chosen value sits below
the rate implied by observed development, deliberately producing a BF point
estimate distinct from CL so that the contrast between the two anchoring
regimes is visible in the predictive intervals.

\begin{table}[ht]
\centering
\caption{Reserve estimates for Taylor--Ashe under BF proportions
($\hat{q}^{\mathrm{BF}} = 12$, $\hat{c} = 189.1$, $B = 5{,}000$).}
\label{tab:bf_reserves}
\begin{tabular}{@{}lrrrr@{}}
\toprule
Method & Mean & Std.\ Err & 95\% PI lower & 95\% PI upper \\
\midrule
BF (point only)    & 15{,}073 &      --- &      --- &      --- \\
Multinomial (BF)   & 15{,}068 &      387 & 14{,}318 & 15{,}841 \\
Multinomial (CL)   & 19{,}234 & 1{,}914 & 16{,}144 & 23{,}516 \\
\bottomrule
\end{tabular}
\end{table}

The construction accommodates BF without modification: $E_i q^{\mathrm{BF}}$
replaces $\Xobs_i / F_{I-i}$ as the anchoring quantity inside
Algorithm~\ref{alg:bootstrap}, and the IBNP draw becomes
$E_i q^{\mathrm{BF}} (1-W_i)$ rather than $\Xobs_i (1-W_i)/W_i$. The contrast
between the two anchoring regimes is pronounced: under BF the predictive
standard error is $387$ against $1{,}914$ under CL, and the $95\%$ interval
width $1{,}523$ against $7{,}372$ (all in thousands), i.e.\ a factor of
nearly five on both measures. The informative anchor concentrates the allocation and
sharply narrows the predictive spread, while the diffuse anchor widens it.

The narrowing is not free, and the value chosen here is instructive. Setting
$q^{\mathrm{BF}} = 12$ displaces the Bornhuetter--Ferguson point estimate
$22\%$ below the Chain-Ladder-anchored predictive mean, which is the magnitude
of prior misspecification examined in Section~\ref{sec:sim_anchor}: under a
correctly specified prior the anchored construction covers within two points
of the diffuse one on an interval half as wide, but a $20\%$ error in the
prior loss ratio reduces coverage to $37\%$ when the prior is too low. The
narrow interval displayed in Table~\ref{tab:bf_reserves} should therefore be
read as conditional on the prior being correct, not as a free improvement on
Chain-Ladder anchoring. On this triangle the prior is a modelling choice made
for illustration and its correctness is not established.

This illustrates the credibility structure on real data: the same triangle
combined with two different priors yields two different predictive
distributions, so that the choice of anchor is not only a point-estimate choice but
a choice of uncertainty regime, whose calibration, as
Section~\ref{sec:sim_anchor} shows, depends on that anchor being correct. The ODP residual
bootstrap has no direct analogue here: it resamples Pearson residuals from the
observed development structure, which the BF anchor, being an external prior loss ratio rather than a fitted
pattern, does not supply. Analytic BF prediction
errors exist \citep{Mack2008, AlaiMerzWuthrich2009}, but no conditional
predictive bootstrap for BF was available prior to this construction.

\section{Discussion}\label{sec:discussion}

The reserve of a portfolio is booked on one realised triangle, and the
uncertainty relevant to that booking is the conditional predictive
$p(R \mid \mathcal{D})$. We began from the view that the residual bootstraps target something
else, and the paper's clearest result is that, for a Chain-Ladder-anchored
paid triangle, they do not: resampling a triangle with free accident-year
effects draws the row levels from their diffuse-prior posterior and the
development pattern from its estimation distribution, so that the
bootstrap variance decomposes into the process, level and pattern
components of the conditional predictive
(Proposition~\ref{prop:boot_decomp}), and its coverage under the count
hierarchy is nominal at every frailty variance tested
(Section~\ref{sec:sim_frailty}). The conjecture that omitted frailty
explains the under-coverage \citet{Meyers2015} documents is rejected by
that test; the paid-data failure he reports is one of location, and
nothing in the dispersion of a residual bootstrap accounts for it.

Reaching the conditional object requires holding $\mathcal{D}$ fixed and
integrating over $\pi(\theta \mid \mathcal{D})$, and of the two requirements
only the second is under the modeller's control. The construction of this
paper satisfies the first exactly, and its unconditional deficit is the cost
of failing the second: supplied with the true parameters it covers at
$94.7$--$95.7\%$ at every triangle size with no trend
(Section~\ref{sec:oracle}), while integrating over the posterior of $c$
removes most of the $\hat{c}$-conditional miscalibration and leaves roughly
six points of unconditional deficit at $I = 10$ that no further modelling of
the same triangle removes (Section~\ref{sec:regularised}). On Chain-Ladder-anchored paid triangles the ODP bootstrap is the
better-calibrated procedure at practical triangle sizes
(Section~\ref{sec:sim_odp}), and given the proposition that is the
expected outcome rather than a trade-off between two targets: it
integrates over two of the three parameters the plug-in construction
substitutes. What
the analysis adds to that familiar trade is a statement about the operating
point: the information a calibrated conditional interval requires is not
present in a single paid triangle at the sizes at which macro-level
reserving is practised, so regularisation is the ordinary mode of the method
rather than a small-sample fallback, and a closed-form asymptotic constant
derived for a limit those triangles never approach
(Proposition~\ref{prop:c_variance}) describes a regime the practitioner
does not occupy (Remark~\ref{rem:sigma_c_finite}). Note that the latter caution applies to this paper's own asymptotics as much
as to the theory it replaces.

Two kinds of limit are hereby separated. The first is a limit of information,
i.e.\ what a single paid triangle determines about the dispersion structure,
and it is shared by every procedure considered here. The second is a limit of
approximation, internal to the closed form rather than to the data, and it
survives when every parameter is known exactly.

\subsection{Beyond the single paid triangle}\label{sec:extensions}

The residual bootstraps cannot identify the frailty dispersion from a
single triangle, and price the level at its diffuse limit instead
(Section~\ref{sec:sim_frailty}); the conditional construction of this
paper cannot identify the concentration well, and falls short of nominal
by roughly six points at $I = 10$ even when the concentration is
integrated over optimally (Section~\ref{sec:regularised}). Substituting a point estimate for any of the three parameters, i.e.\ $c$,
$\bpi$ or the pooled loss ratio $\hat q$, costs coverage by an amount we
measure in each case (Sections~\ref{sec:oracle},
\ref{sec:conditional_coverage} and \ref{sec:sim_anchor}). And even with all
parameters known exactly, the closed form remains conditionally miscalibrated
by roughly ten points at the ultimate dispersion typical of macro-level
portfolios, because it is exact only under diffuse anchoring
(Section~\ref{sec:conditional_coverage}).

The first three have a common cause: a single paid triangle carries one
observation per cell and, at the $I$ typical of macro-level reserving, does
not determine the dispersion structure to the accuracy a calibrated
conditional interval requires. The fourth is of a different kind: it is a property of the closed form and
not of the data, and it is in tension with the anchor-agnostic property
instead of being removable alongside the others. The
extensions below address the first three. Note that these extensions are not refinements of the method, but the only
remaining direction.

The first extension is to counts. The Poisson--Gamma--Multinomial hierarchy
of Section~\ref{sec:counts} shares the development vector $\bpi$ with the
amounts side: a joint model estimates $\bpi$ from two triangles rather than one, which addresses the $\hat\bpi$ component of the plug-in error, i.e.\ the
component visible as the $J$-degradation of Table~\ref{tab:sensitivity} and
as eleven of the twenty-six points of allocation-conditional miscalibration
in Section~\ref{sec:conditional_coverage}. It does not by itself identify $c$:
the count allocation carries the frequency dispersion $\kappa$, and the
severity allocation concentration is a separate parameter of the hierarchy.

The second extension is to incurred amounts. Case reserves constitute, by
construction,
information on the eventual cost of a claim before it is paid, and hence bear
directly on the allocation that $c$ governs. A coupled model for paid and
incurred development, in which the two allocations are constrained to share
structure, is the natural framework to incorporate this information, and a
principled alternative to the regression correction of Munich Chain-Ladder
\citep{QuargMack2004}.

The third extension is to other portfolios. The concentration $c$ is a single scalar, and the difficulty is the small
number of rows that a single triangle supplies for its estimation. Estimated hierarchically across a portfolio of
triangles it becomes a credibility problem in its own right:
B\"uhlmann--Straub applied to the dispersion parameter rather than to the
mean, the natural counterpart to the credibility correspondence for the mean
established in \citet{VanOirbeek2026cred}. That paper shows that the
conditional Dirichlet--Multinomial likelihood already identifies $c$ from a
single portfolio, that only $c$ and $\bpi$ need be shared across the pool, and
that pooling therefore serves precision rather than identification. The
closed-form variance \eqref{eq:sigma_c_explicit} supplies the per-portfolio precision that the credibility weight in that
construction requires, subject to the finite-sample correction of
Remark~\ref{rem:sigma_c_finite}, since at the $I$ typical of a pooled portfolio the asymptotic constant
understates the true variance by a factor of two to three.

\subsection{Limitations}\label{sec:limitations}

The limitations of the construction are of four kinds: assumptions on the
data, parameters that are substituted rather than integrated over, a cost
internal to the closed form, and the price of targeting the conditional
object at all.

The baseline model assumes stationary development proportions and does not
represent calendar-year effects or accident-year dependence. It also
requires positive incremental payments: the Dirichlet allocation has no
support on negative increments, so a row containing one contributes no
subcomposition to the likelihood, as happens on the RAA triangle
(Section~\ref{sec:empirical}). This excludes the construction from
triangles with systematic salvage or subrogation recoveries, where the
residual bootstrap, which tolerates negative increments as long as column
sums stay positive, remains the available procedure. Under
non-stationary development the construction maintains coverage at $J = 5$
(Table~\ref{tab:misspec_nonstat}) but only by widening: relative interval
width more than doubles from $\sigma_\delta = 0$ to $\sigma_\delta = 0.10$,
and the point-estimate bias grows from $2.8\%$ to $7.8\%$. Coverage degrades
at small $I$ ($81$--$85\%$ at $I = 7$, at every $c$) and, most severely, under
non-stationary development on a $J = 10$ triangle ($70.0\%$,
Table~\ref{tab:odp_comparison}), where $\hat\bpi$ is both misspecified and
weakly identified in the late columns. In the boundary regime
$\hat{c}F_{I-i} \approx 2$ the predictive is formally valid and practically
useless: Section~\ref{sec:sim_tweedie} reports nominal coverage delivered
by an interval $357$ times the true reserve. The tiering of
Remark~\ref{rem:validity} exists to keep the practitioner out of that regime,
and the operability bound of Section~\ref{sec:c_threshold} makes it checkable
before any reserve is struck.

Three parameters are substituted rather than integrated over, and each carries
a measured cost. The concentration $c$, wherever Algorithm~\ref{alg:bootstrap}
is used in place of Algorithm~\ref{alg:regularised}: $6.4$ coverage points at
$I = 10$, falling to $2.1$ at $I = 30$ (Section~\ref{sec:oracle}). The
development pattern $\hat\bpi$, throughout: eleven of the twenty-six points of
gradient across terciles of realised allocation dispersion
(Section~\ref{sec:conditional_coverage}). And the pooled loss ratio $\hat q$
under Cape Cod: $6.2$ points at $I = 10$, narrowing to $3.4$ at $I = 15$
(Section~\ref{sec:sim_anchor}).

The pattern is consistent and the remedy, integration rather than
substitution, is the same in each case. We implement it for one of the three. For $c$, integrating over the posterior reduces the
$\hat{c}$-conditional gradient from $18.3$ to $5.3$ coverage points at
unchanged interval width. For
$\bpi$ we demonstrate the remedy without adopting it: integrating over the
joint posterior of $(c, \bpi)$ recovers those eleven points, but at a $25\%$
cost in interval width, which is a materially different trade from
the free improvement available in $c$. For $\hat q$ we identify the cost and
do not address it.

Conditional calibration is therefore achieved with respect to the
concentration and, at a price, with respect to the development pattern; it is
not achieved with respect to the anchor.

The third limitation is of a different kind, because it is not a substitution
and no amount of integration removes it. The conditional law of
Theorem~\ref{thm:ibnp} is exact only under diffuse anchoring, and at the
ultimate dispersion typical of macro-level portfolios the resulting
approximation costs roughly ten coverage points across terciles of realised
allocation dispersion \emph{even when both parameters are known exactly}
(Section~\ref{sec:conditional_coverage}). The mechanism is identified: the
gradient falls to $1.4$ points, inside Monte-Carlo error, as the prior on the
ultimate approaches the diffuse limit in which the Beta representation is
exact.

This is the price of the closed form itself. The tilted density is available
(Remark~\ref{rem:diffuse_scope}) but does not yield the Beta representation on
which the anchor-agnostic property depends, so a construction sampling it
directly would be conditionally calibrated and would lose the property that
makes the present one applicable to Bornhuetter--Ferguson and Cape Cod without
modification. Note that we have not attempted that trade, and that both properties are in
tension rather than merely both unaddressed.

Finally, the coverage deficit of the plug-in construction at practical
triangle sizes is real and, as Section~\ref{sec:simulations} shows,
larger than that of the residual bootstrap on the anchor where both
apply. This is the cost of substituting parameters that the residual
bootstrap integrates over by resampling, and on Chain-Ladder-anchored
paid triangles there is no reason to pay it: the construction's domain is
the informative anchors and the identification questions, not a
replacement for the ODP bootstrap where the latter is correctly targeted.
It is not a
fixed cost: it falls from $13.7$ coverage points at $I = 7$ to $2.1$ at
$I = 30$ (Table~\ref{tab:oracle}), so the construction is unusable on the
shortest triangles, costly at $I = 10$ and essentially calibrated by
$I = 20$--$30$.

Finally, everything in this paper is stated at ultimate. The one-year
claims development result of Solvency~II, and the conditional MSEP of
that quantity derived by \citet{MerzWuthrich2008}, are not treated: the
closed form of Theorem~\ref{thm:ibnr} gives the predictive of the
full reserve, and its one-year projection would require the predictive of
next year's diagonal under the same hierarchy, which we leave open.

\section*{Acknowledgements}

A large language model (Anthropic's Claude)
was used as an assistive tool during the preparation of this manuscript. All
scientific content is the authors' own, and the authors take sole
responsibility for it.

\bibliographystyle{apalike}
\bibliography{references}

@article{aitchison1982,
  author  = {Aitchison, J.},
  title   = {The Statistical Analysis of Compositional Data},
  journal = {Journal of the Royal Statistical Society: Series B (Methodological)},
  volume  = {44},
  number  = {2},
  pages   = {139--160},
  year    = {1982}
}

@manual{ChainLadder2023,
  title  = {{ChainLadder}: Statistical Methods and Models for Claims Reserving in General Insurance},
  author = {Gesmann, M. and Murphy, D. and Zhang, Y. and Carrato, A. and Wuthrich, M. and Concina, F. and Dal Moro, E.},
  year   = {2023},
  note   = {R package version 0.2.18}
}

@article{Clark2003,
  author  = {Clark, D. R.},
  title   = {LDF Curve-Fitting and Stochastic Reserving: A Maximum Likelihood Approach},
  journal = {CAS Forum},
  year    = {2003},
  note    = {Fall},
  pages   = {41--92}
}

@book{DavisonHinkley1997,
  author    = {Davison, A. C. and Hinkley, D. V.},
  title     = {Bootstrap Methods and their Application},
  publisher = {Cambridge University Press},
  address   = {Cambridge},
  year      = {1997}
}

@article{england2002,
  author  = {England, P. D. and Verrall, R. J.},
  title   = {Stochastic claims reserving in general insurance},
  journal = {British Actuarial Journal},
  year    = {2002},
  volume  = {8},
  number  = {3},
  pages   = {443--518}
}

@book{Hall1992,
  author    = {Hall, P.},
  title     = {The Bootstrap and Edgeworth Expansion},
  publisher = {Springer},
  year      = {1992}
}

@article{Jorgensen1987,
  author  = {J{\o}rgensen, B.},
  title   = {Exponential dispersion models},
  journal = {Journal of the Royal Statistical Society, Series B},
  year    = {1987},
  volume  = {49},
  number  = {2},
  pages   = {127--162}
}

@article{mack1993,
  author  = {Mack, T.},
  title   = {Distribution-free calculation of the standard error of chain ladder reserve estimates},
  journal = {ASTIN Bulletin},
  year    = {1993},
  volume  = {23},
  number  = {2},
  pages   = {213--225}
}

@article{mack1994,
  author  = {Mack, T.},
  title   = {Which stochastic model is underlying the chain ladder method?},
  journal = {Insurance: Mathematics and Economics},
  year    = {1994},
  volume  = {15},
  number  = {2-3},
  pages   = {133--138}
}

@article{mackventer2000,
  author  = {Mack, T. and Venter, G.},
  title   = {A comparison of stochastic models that reproduce chain ladder reserve estimates},
  journal = {Insurance: Mathematics and Economics},
  year    = {2000},
  volume  = {26},
  number  = {1},
  pages   = {101--107}
}

@book{mccullagh1989,
  author    = {McCullagh, P. and Nelder, J. A.},
  title     = {Generalized Linear Models},
  publisher = {Chapman \& Hall},
  address   = {London},
  year      = {1989},
  edition   = {2}
}

@book{Meyers2015,
  author    = {Meyers, G. G.},
  title     = {Stochastic Loss Reserving Using {B}ayesian {MCMC} Models},
  publisher = {Casualty Actuarial Society},
  year      = {2015},
  series    = {CAS Monograph Series},
  number    = {1}
}

@article{renshaw1998,
  author  = {Renshaw, A. E. and Verrall, R. J.},
  title   = {A stochastic model underlying the chain-ladder technique},
  journal = {British Actuarial Journal},
  year    = {1998},
  volume  = {4},
  number  = {4},
  pages   = {903--923}
}

@article{SmythJorgensen2002,
  author  = {Smyth, G. K. and J{\o}rgensen, B.},
  title   = {Fitting Tweedie's Compound Poisson Model to Insurance Claims Data: Dispersion Modelling},
  journal = {ASTIN Bulletin},
  year    = {2002},
  volume  = {32},
  number  = {1},
  pages   = {143--157}
}

@article{srirams2021,
  author  = {Sriram, K. and Shi, P.},
  title   = {Stochastic loss reserving: A new perspective from a {D}irichlet model},
  journal = {Journal of Risk and Insurance},
  year    = {2021},
  volume  = {88},
  number  = {1},
  pages   = {195--230}
}

@article{taylor1983,
  author  = {Taylor, G. and Ashe, F.},
  title   = {Second moments of estimates of outstanding claims},
  journal = {Journal of Econometrics},
  year    = {1983},
  volume  = {23},
  number  = {1},
  pages   = {37--61}
}

@incollection{tweedie1984,
  author    = {Tweedie, M. C. K.},
  title     = {An index which distinguishes between some important exponential families},
  booktitle = {Statistics: Applications and New Directions},
  publisher = {Indian Statistical Institute},
  address   = {Calcutta},
  year      = {1984},
  pages     = {579--604}
}

@book{vdVaart1998,
  author    = {van der Vaart, A. W.},
  title     = {Asymptotic Statistics},
  publisher = {Cambridge University Press},
  year      = {1998}
}

@article{VanOirbeek2026nbcl,
  author        = {Van Oirbeek, R.},
  title         = {The {Negative Binomial Chain-Ladder}: A Full Likelihood Model for Claim Count Reserving},
  journal       = {CAS Forum},
  year          = {2026},
  number        = {1},
  eprint        = {2605.15811},
  archivePrefix = {arXiv},
  primaryClass  = {stat.ME}
}

@unpublished{VanOirbeek2026cnbg,
  author = {Van Oirbeek, Robin},
  title  = {What Paid Triangles Can and Cannot Identify: An Identification Boundary for Loss-Reserve Uncertainty},
  note   = {Working paper},
  year   = {2026}
}

@unpublished{VanOirbeek2026cred,
  author = {Van Oirbeek, R.},
  title  = {Classical Reserving Methods as Credibility Estimators: A Unified Bayesian Framework},
  year   = {2026},
  note   = {Manuscript in preparation}
}

@article{verrall2000,
  author  = {Verrall, R. J.},
  title   = {An investigation into stochastic claims reserving models and the chain-ladder technique},
  journal = {Insurance: Mathematics and Economics},
  year    = {2000},
  volume  = {26},
  number  = {1},
  pages   = {91--99}
}

@book{wuthrich2008,
  author    = {W{\"u}thrich, M. V. and Merz, M.},
  title     = {Stochastic Claims Reserving Methods in Insurance},
  publisher = {Wiley},
  address   = {Chichester},
  year      = {2008}
}

@article{Mack2008,
  author  = {Mack, Thomas},
  title   = {The Prediction Error of {Bornhuetter/Ferguson}},
  journal = {ASTIN Bulletin},
  year    = {2008},
  volume  = {38},
  number  = {1},
  pages   = {87--103}
}

@article{AlaiMerzWuthrich2009,
  author  = {Alai, Daniel H. and Merz, Michael and W{\"u}thrich, Mario V.},
  title   = {Mean Square Error of Prediction in the {Bornhuetter--Ferguson} Claims Reserving Method},
  journal = {Annals of Actuarial Science},
  year    = {2009},
  volume  = {4},
  number  = {1},
  pages   = {7--31}
}

@article{AlaiMerzWuthrich2010,
  author  = {Alai, Daniel H. and Merz, Michael and W{\"u}thrich, Mario V.},
  title   = {Prediction Uncertainty in the {Bornhuetter--Ferguson} Claims Reserving Method: Revisited},
  journal = {Annals of Actuarial Science},
  year    = {2010},
  volume  = {5},
  number  = {1},
  pages   = {7--17}
}

@article{QuargMack2004,
  author  = {Quarg, G. and Mack, T.},
  title   = {{Munich Chain Ladder}},
  journal = {Bl{\"a}tter der DGVFM},
  year    = {2004},
  volume  = {26},
  number  = {4},
  pages   = {597--630}
}

@article{BBMW2006,
  author  = {Buchwalder, M. and B{\"u}hlmann, H. and Merz, M. and W{\"u}thrich, M. V.},
  title   = {The Mean Square Error of Prediction in the Chain Ladder Reserving Method ({M}ack and {M}urphy Revisited)},
  journal = {ASTIN Bulletin},
  year    = {2006},
  volume  = {36},
  number  = {2},
  pages   = {521--542}
}

@article{MackQuargBraun2006,
  author  = {Mack, T. and Quarg, G. and Braun, C.},
  title   = {The Mean Square Error of Prediction in the Chain Ladder Reserving Method: A Comment},
  journal = {ASTIN Bulletin},
  year    = {2006},
  volume  = {36},
  number  = {2},
  pages   = {543--552}
}

@article{Gisler2006,
  author  = {Gisler, A.},
  title   = {The Estimation Error in the Chain-Ladder Reserving Method: A {B}ayesian Approach},
  journal = {ASTIN Bulletin},
  year    = {2006},
  volume  = {36},
  number  = {2},
  pages   = {554--565}
}

@article{GislerWuthrich2008,
  author  = {Gisler, A. and W{\"u}thrich, M. V.},
  title   = {Credibility for the Chain Ladder Reserving Method},
  journal = {ASTIN Bulletin},
  year    = {2008},
  volume  = {38},
  number  = {2},
  pages   = {565--600}
}

@article{Gisler2019,
  author  = {Gisler, A.},
  title   = {The Reserve Uncertainties in the Chain Ladder Model of {M}ack Revisited},
  journal = {ASTIN Bulletin},
  year    = {2019},
  volume  = {49},
  number  = {3},
  pages   = {787--821}
}

@article{MerzWuthrich2008,
  author  = {Merz, M. and W{\"u}thrich, M. V.},
  title   = {Modelling the Claims Development Result for Solvency Purposes},
  journal = {CAS E-Forum},
  year    = {2008},
  note    = {Fall 2008, 542--568}
}

@article{EnglerLindskog2024,
  author  = {Engler, N. and Lindskog, F.},
  title   = {Mack's Estimator Motivated by Large Exposure Asymptotics in a Compound {P}oisson Setting},
  journal = {ASTIN Bulletin},
  year    = {2024},
  volume  = {54},
  number  = {2},
  pages   = {310--326}
}

\appendix
\section{Proofs and derivations}\label{app:proofs}

This appendix collects the distributional results on which the body relies,
together with the moments used in the proof of
Proposition~\ref{prop:c_variance} and the regularised construction of
Section~\ref{sec:regularised}.

\subsection{Gamma--Dirichlet factorisation}

The construction of Section~\ref{sec:amounts} rests on the following
factorisation of a vector of independent Gamma variables into its sum and its
normalised components.

\begin{lemma}[Gamma--Dirichlet factorisation]\label{lem:gamma_dirichlet}
Let $Y_1,\ldots,Y_n \overset{\mathrm{iid}}{\sim} \Gam(\varphi,\lambda)$ and
$S = \sum_k Y_k$. Then $S \sim \Gam(n\varphi,\lambda)$ and
$(Y_1/S,\ldots,Y_n/S) \sim \Dir(\varphi,\ldots,\varphi)$, independently.
\end{lemma}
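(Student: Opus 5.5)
The plan is to compute the joint law of $(S, Y_1/S, \ldots, Y_{n-1}/S)$ by a change of variables from the independent Gamma vector, and then read off that the density factorises. Write the joint density of $(Y_1,\ldots,Y_n)$ on $(0,\infty)^n$ as
\[
\prod_{k=1}^n \frac{\lambda^{\varphi}}{\Gamma(\varphi)}\, y_k^{\varphi-1} e^{-\lambda y_k}
= \frac{\lambda^{n\varphi}}{\Gamma(\varphi)^n}
  \Bigl(\prod_{k=1}^n y_k\Bigr)^{\varphi-1} e^{-\lambda \sum_k y_k}.
\]
The map $(y_1,\ldots,y_n)\mapsto (s, w_1,\ldots,w_{n-1})$, with $s=\sum_k y_k$ and $w_k = y_k/s$, is a bijection onto $(0,\infty)\times\Delta_{n-1}$. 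Here $\Delta_{n-1}=\{w_k>0,\ \sum_{k<n} w_k<1\}$, and the last share is $w_n = 1-\sum_{k<n} w_k$. The inverse is $y_k = s w_k$ for $k<n$ and $y_n = s w_n$.

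The one step needing care is the Jacobian of this inverse. I would compute it by subtracting the sum of the first $n-1$ rows from the last row, or equivalently by noting that the matrix is upper triangular after adding columns. Either route should give $|J| = s^{n-1}$.

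Substituting, the density of $(s,w)$ becomes
\[
\frac{\lambda^{n\varphi}}{\Gamma(\varphi)^n}\, s^{n(\varphi-1)} \Bigl(\prod_{k=1}^n w_k\Bigr)^{\varphi-1} e^{-\lambda s}\, s^{n-1}
= \Bigl[\frac{\lambda^{n\varphi}}{\Gamma(n\varphi)} s^{n\varphi-1}e^{-\lambda s}\Bigr]
  \cdot \Bigl[\frac{\Gamma(n\varphi)}{\Gamma(\varphi)^n}\prod_{k=1}^n w_k^{\varphi-1}\Bigr].
\]
This is a product of a function of $s$ alone and a function of $w$ alone. The first factor is the $\Gam(n\varphi,\lambda)$ density, and the second is the $\Dir(\varphi,\ldots,\varphi)$ density on $\Delta_{n-1}$. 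Because the joint density factorises into two proper densities, three conclusions follow at once: the two components are independent, $S$ has the Gamma marginal, and the normalised vector has the Dirichlet marginal. The marginal law of $S$ can also be checked independently by the moment generating function, $\bigl(\lambda/(\lambda-t)\bigr)^{n\varphi}$.

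There is no real obstacle here. The only point that requires attention is the Jacobian determinant $s^{n-1}$. I would verify it on $n=2$, where $y_1=sw$ and $y_2=s(1-w)$ give $|J|=s$, and then argue the general case by the row operation above. The same computation with unequal shapes $\varphi_k$ gives $\Dir(\varphi_1,\ldots,\varphi_n)$. That general form is what hierarchy~\eqref{eq:dirichlet} needs with $\varphi_k = c\pi_k$, and I would note it in passing.
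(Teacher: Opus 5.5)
Your proof is correct and is the same change-of-variables argument as the paper's: map to $(w_1,\ldots,w_{n-1},s)$, use the Jacobian $s^{n-1}$, and read off the factorisation of the joint density into the $\Gam(n\varphi,\lambda)$ and $\Dir(\varphi,\ldots,\varphi)$ densities. One small correction to the Jacobian step: in $\partial(y_1,\ldots,y_n)/\partial(w_1,\ldots,w_{n-1},s)$ the last row is $(-s,\ldots,-s,w_n)$, so you must \emph{add} the first $n-1$ rows to it, not subtract them, to reduce it to $(0,\ldots,0,1)$ and see that the determinant is $s^{n-1}$.
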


\begin{proof}
Transform $(y_1,\ldots,y_n) \mapsto (w_1,\ldots,w_{n-1},s)$ with
$w_k = y_k/s$, $s = \sum y_k$. The Jacobian is $s^{n-1}$. The joint density
factors into $\Gam(n\varphi,\lambda)$ for $s$ and
$\Dir(\varphi,\ldots,\varphi)$ for $(w_1,\ldots,w_n)$.
\end{proof}

\subsection{Dirichlet aggregation and subcomposition}

These two operations on a Dirichlet vector are easily conflated and carry
different concentrations. The distinction is what makes the moment estimator
of Section~\ref{sec:c_estimation} consistent.

\begin{lemma}[Aggregation]\label{lem:dirichlet_agg}
If $(W_0,\ldots,W_{J-1}) \sim \Dir(c\pi_0,\ldots,c\pi_{J-1})$ and
$\Wobs = \sum_{j=0}^k W_j$, then $\Wobs \sim \Bet(cF_k, c(1-F_k))$ with
$F_k = \sum_{j=0}^k \pi_j$.
\end{lemma}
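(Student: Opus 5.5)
The plan is to realise the Dirichlet vector through independent Gamma variables and then apply Lemma~\ref{lem:gamma_dirichlet} twice. The lemma is stated for equal shapes, so the first step is to record its general-shape version. The same change of variables $(y_1,\ldots,y_n)\mapsto(w_1,\ldots,w_{n-1},s)$, with Jacobian $s^{n-1}$, shows the following. If $Y_j \sim \Gam(\varphi_j,\lambda)$ are independent and $S=\sum_j Y_j$, then $S\sim\Gam(\sum_j\varphi_j,\lambda)$, the vector $(Y_j/S)_j$ is $\Dir(\varphi_1,\ldots,\varphi_n)$, and the two are independent. The joint density factorises exactly as in the equal-shape case, with $\prod_j y_j^{\varphi_j-1}=s^{\sum\varphi_j-n}\prod_j w_j^{\varphi_j-1}$.

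With that in hand, take independent $Y_j\sim\Gam(c\pi_j,1)$ for $j=0,\ldots,J-1$ and put $T=\sum_j Y_j$. Then $(Y_j/T)_j \sim \Dir(c\pi_0,\ldots,c\pi_{J-1})$, so we may take $W_j=Y_j/T$ without changing the law of $\Wobs$. Write $A=\sum_{j\le k}Y_j$ and $B=\sum_{j>k}Y_j$, so that $\Wobs=A/(A+B)$. By the additivity of independent Gammas with a common rate, $A\sim\Gam(cF_k,1)$ and $B\sim\Gam(c(1-F_k),1)$. These are independent, because they are built from disjoint sets of independent variables.

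Applying the general-shape factorisation with $n=2$ to $(A,B)$ gives $A/(A+B)\sim\Dir(cF_k,c(1-F_k))$, which is $\Bet(cF_k,c(1-F_k))$. The total concentration $cF_k+c(1-F_k)=c$ is therefore preserved, in contrast to the subcomposition of Lemma~\ref{lem:dirichlet_sub}.

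The only point needing care is the upgrade of Lemma~\ref{lem:gamma_dirichlet} to unequal shapes. This is routine. The shapes $c\pi_j$ are positive because $\pi_j>0$ by \eqref{eq:simplex}, and the edge case $k=J-1$, where $F_k=1$ and $\Wobs\equiv1$, is degenerate and can be excluded or read as a point mass. As a cross-check, the same construction gives the moments $\E\Wobs=F_k$ and $\Var\Wobs=F_k(1-F_k)/(c+1)$, which match the Dirichlet covariance summed over $j\le k$.
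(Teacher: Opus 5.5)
Your argument is correct, and it takes a more elementary route than the paper. The paper's proof is a one-line appeal to the aggregation property of the Dirichlet: summing a block of components yields a Dirichlet in the summed parameters, which the paper cites to Aitchison (1982). It applies this to the two-block partition $\{0,\ldots,k\}\cup\{k+1,\ldots,J-1\}$ and reads the result off as a Beta. You instead prove that property. You realise $\Dir(c\bpi)$ as normalised independent $\Gam(c\pi_j,1)$ variables, collapse each block using additivity of Gammas with a common rate, and then apply the two-component factorisation.

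The paper's route buys brevity and covers aggregation over arbitrary partitions at once; yours extends to that case by summing the Gammas within each block. Your route buys self-containment, and it supplies something the appendix lacks. Lemma~\ref{lem:gamma_dirichlet} is stated only for i.i.d.\ Gammas, so it delivers only the symmetric $\Dir(\varphi,\ldots,\varphi)$. The paper's Dirichlets have parameters $c\pi_j$ that are unequal in general, so the unequal-shape factorisation you record is the version actually needed to represent $\Dir(c\bpi)$ through Gammas.

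Your handling of the degenerate case $k=J-1$ is right, and so is the moment check $\Var\Wobs = F_k(1-F_k)/(c+1)$.
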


\begin{proof}
Aggregating a subset of components of a Dirichlet vector yields a Dirichlet
vector in the summed parameters \citep{aitchison1982}, here
$\Dir(cF_k, c(1-F_k))$ on two components, which is $\Bet(cF_k, c(1-F_k))$. The
total concentration $c$ is unchanged, since $cF_k + c(1-F_k) = c$.
\end{proof}

\begin{lemma}[Subcomposition]\label{lem:dirichlet_sub}
If $(W_0,\ldots,W_{J-1}) \sim \Dir(c\pi_0,\ldots,c\pi_{J-1})$, then the
subcomposition formed by rescaling the first $k+1$ components to their own
total satisfies
\[
\left(\frac{W_0}{\sum_{l\le k} W_l},\ldots,
      \frac{W_k}{\sum_{l\le k} W_l}\right)
\;\sim\; \Dir(c\pi_0,\ldots,c\pi_k),
\]
independently of $\sum_{l \le k} W_l$. In particular each component is
\[
\frac{W_j}{\sum_{l\le k} W_l}
\;\sim\; \Bet\bigl(c\pi_j,\; c(F_k - \pi_j)\bigr)
\;=\; \Bet\bigl(a\,\pi_j^{(k)},\; a(1-\pi_j^{(k)})\bigr),
\qquad
\pi_j^{(k)} = \frac{\pi_j}{F_k},\quad a = cF_k,
\]
i.e.\ it has mean $\pi_j^{(k)}$ but \emph{concentration $a = cF_k$, not $c$}.
\end{lemma}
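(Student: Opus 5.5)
We use the Gamma representation of the Dirichlet, which gives both claims at once. By the generalisation of Lemma~\ref{lem:gamma_dirichlet} to unequal shapes (the same change of variables with Jacobian $s^{n-1}$, the joint density factorising into $\Gam(\sum_j\varphi_j,\lambda)$ times $\Dir(\varphi_0,\ldots,\varphi_{n-1})$), let $Y_0,\ldots,Y_{J-1}$ be independent with $Y_j \sim \Gam(c\pi_j,1)$ and $T=\sum_{j<J} Y_j$. Then $(W_0,\ldots,W_{J-1}) \overset{d}{=} (Y_0/T,\ldots,Y_{J-1}/T)$. It therefore suffices to prove the statement for this representation.

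The key observation is that the subcomposition does not involve $T$. Writing $T_k=\sum_{l\le k} Y_l$, we have $W_j/\sum_{l\le k}W_l = Y_j/T_k$ for $j\le k$. Apply the unequal-shape factorisation to the block $(Y_0,\ldots,Y_k)$ alone. This gives $(Y_0/T_k,\ldots,Y_k/T_k) \sim \Dir(c\pi_0,\ldots,c\pi_k)$, and this vector is independent of $T_k$.

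To prove independence from $\sum_{l\le k}W_l = T_k/T = T_k/(T_k+T_k')$, where $T_k'=\sum_{l>k}Y_l$, we use three facts:
\begin{itemize}
\item the subcomposition vector is independent of $T_k$ by the factorisation above;
\item $T_k'$ is independent of the whole block $(Y_0,\ldots,Y_k)$, since the $Y_j$ are mutually independent;
\item the pair $(T_k,T_k')$ is therefore jointly independent of the subcomposition vector.
\end{itemize}
The third point holds because the block factorises as (subcomposition, $T_k$) with independent components, and $T_k'$ is independent of the block. Hence any function of $(T_k,T_k')$, in particular $T_k/T$, is independent of the subcomposition.

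This step is the only delicate one. Pairwise independence would not suffice, so the argument must go through joint independence as just described.

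The marginal statement follows from aggregation inside the $(k+1)$-component Dirichlet. By Lemma~\ref{lem:dirichlet_agg} applied to $\Dir(c\pi_0,\ldots,c\pi_k)$ with the single component $j$ against the rest, we get $Y_j/T_k \sim \Bet(c\pi_j,\, c(F_k-\pi_j))$, since the total concentration of this sub-Dirichlet is $\sum_{l\le k}c\pi_l = cF_k$. Rewriting with $a=cF_k$ and $\pi_j^{(k)}=\pi_j/F_k$ gives $c\pi_j = a\pi_j^{(k)}$ and $c(F_k-\pi_j)=a(1-\pi_j^{(k)})$. The mean is $\pi_j^{(k)}$ and the concentration is $a$, not $c$. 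This is simply a reading of the parameters: the shape parameters outside the block are discarded by the rescaling.
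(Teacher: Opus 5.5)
Your proof is correct. It differs from the paper's in that it proves what the paper cites.

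The paper settles the Dirichlet law of the subcomposition and its independence from the subtotal in one line, as a standard property from \citet{aitchison1982}. It then obtains the Beta marginal exactly as you do, from Lemma~\ref{lem:dirichlet_agg} applied inside the $(k+1)$-component Dirichlet of total concentration $cF_k$.

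You instead derive the cited property from the Gamma representation. With independent $Y_j \sim \Gam(c\pi_j,1)$, the subcomposition is $Y_j/T_k$ and never involves the outer total. The factorisation applied to the block $(Y_0,\ldots,Y_k)$ then gives the $\Dir(c\pi_0,\ldots,c\pi_k)$ law together with independence from $T_k$. Independence from $\sum_{l\le k}W_l = T_k/(T_k+T_k')$ follows because the subcomposition and $T_k$ are functions of the block, while $T_k'$ is independent of it. You are right that this mutual-independence step, and not pairwise independence, is what carries the argument.

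Your route needs the unequal-shape version of Lemma~\ref{lem:gamma_dirichlet}, since the equal-shape statement in the appendix only produces $\Dir(\varphi,\ldots,\varphi)$. Your one-line justification via the same Jacobian is adequate.

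What your approach buys:
\begin{itemize}
\item It is self-contained.
\item It gives a transparent mechanism for the concentration $cF_k$: rescaling by the partial total literally discards the Gamma shapes beyond horizon $k$. This is the point Definition~\ref{def:partial_props} makes in words.
\item It lets you read off the Beta marginal directly, since $Y_j$ and $\sum_{l\le k,\,l\neq j}Y_l$ are independent Gammas with shapes $c\pi_j$ and $c(F_k-\pi_j)$. This avoids the relabelling that Lemma~\ref{lem:dirichlet_agg}, stated for an initial block of components, strictly requires when applied to a single component $j$.
\end{itemize}
The paper's approach buys brevity by leaning on a textbook fact.
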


\begin{proof}
Standard property of the Dirichlet distribution \citep{aitchison1982}: a
subcomposition of a Dirichlet vector is Dirichlet in the corresponding
subvector of parameters, and is independent of the subtotal. The Beta marginal
follows from Lemma~\ref{lem:dirichlet_agg} applied within the subcomposition,
whose total concentration is $\sum_{j \le k} c\pi_j = cF_k$.
\end{proof}

\begin{remark}[Aggregation versus subcomposition]\label{rem:agg_vs_sub}
The two lemmas govern different quantities. The bootstrap of
Algorithm~\ref{alg:bootstrap} uses aggregation: $\Wobs_i$ is the sum of the
observed development shares of the full ultimate, and its concentration is
$c$. The moment estimator of Section~\ref{sec:c_estimation} uses
subcomposition: $W_{ij}^{(k)}$ rescales a cell by a \emph{partial} row total,
and its concentration is $cF_k$. Substituting the first where the second is
required leaves an estimator whose probability limit is $cF_{k^*}$ rather than
$c$, at the median contributing horizon $k^*$.

The magnitude of the resulting understatement depends on the development
pattern and on which cells contribute; it is quantified after
Proposition~\ref{prop:c_estimator}, and it does not diminish as $I$ grows.
\end{remark}

\subsection{Beta central moments used in Proposition~\ref{prop:c_variance}}

The delta-method computation in Proposition~\ref{prop:c_variance} is carried
out on the subcomposition of Lemma~\ref{lem:dirichlet_sub}, whose
concentration is $a = cF_k$ and not $c$. The moments below are therefore
stated in $a$; setting $a = c$ recovers the full-composition case.

For $W \sim \Bet(a\pi, a(1-\pi))$ with $\pi \in (0,1)$, $a > 0$:
\begin{align}
\mu_2(W) &= \frac{\pi(1-\pi)}{a+1}, \label{eq:beta_mu2}\\
\mu_3(W) &= \frac{2\pi(1-\pi)(1-2\pi)}{(a+1)(a+2)}, \label{eq:beta_mu3}\\
\mu_4(W) - \mu_2(W)^2
&= \frac{2\pi(1-\pi)\bigl[(a^2 - 10a - 12)\,\pi(1-\pi) + 3(a+1)\bigr]}
        {(a+1)^2(a+2)(a+3)}. \label{eq:beta_mu4}
\end{align}
These follow from the Beta raw moments
$\E[W^r] = \prod_{s=0}^{r-1}(a\pi+s)/(a+s)$ by direct computation. The
delta-method step uses only these three quantities and the gradient of
$f(m, s^2) = m(1-m)/s^2 - 1$, which is the moment estimator of $a$; dividing
by $\hat{F}_k$ converts it to an estimator of $c$ and scales the asymptotic
variance by $F_k^{-2}$.

\begin{remark}[Where the fourth moment enters twice]\label{rem:mu4_twice}
Expression \eqref{eq:beta_mu4} appears in two distinct roles, which should
be kept apart. In Proposition~\ref{prop:c_variance} it supplies the
asymptotic variance of the moment estimator $\hat{c}_{jk}$, through the
sampling variability of $\hat{s}^2$. In Remark~\ref{rem:validity} the existence of the fourth moment of
$(1-W_i)/W_i$, which is a different random variable, Beta-prime rather than
Beta, determines whether a reported standard error has a finite-sample
guarantee. The two are unrelated: the first
concerns the estimator of the concentration, the second the predictive of the
reserve.
\end{remark}

\subsection{Regularisation in the scarce-information regime}\label{app:bayesian}

The plug-in moment estimator of Section~\ref{sec:c_estimation} requires at
least three rows at some horizon, i.e.\ $n_k = I - k - 1 \geq 3$ for some
$k \geq 1$, and is therefore undefined for $I < 5$. Where it is defined it is
both biased and noisy at realistic $I$ (Remark~\ref{rem:sigma_c_finite}) and
undercovers accordingly. This is not a small-triangle edge case. Macro-level
triangles are $I \times J$ with $I$ typically $7$--$15$ and one observation per
cell; the asymptotics under which a frequentist estimator would identify the
dispersion structure do not engage at these sizes. Regularisation is thus the ordinary operating point of the method and not a
fallback, which is why Algorithm~\ref{alg:regularised} is presented in the
body rather than here,
with its coverage by triangle size in Table~\ref{tab:bayes_coverage} of
Section~\ref{sec:regularised}.

We place a weakly-informative prior $\log c \sim \mathcal{N}(\log 50, 1)$ and
sample the posterior of $c$ by Metropolis-within-Gibbs from the exact
subcomposition likelihood of Lemma~\ref{lem:dirichlet_sub}, holding the
triangle fixed and drawing the predictive
$W_i \sim \Bet(cF_{I-i}, c(1-F_{I-i}))$ under each posterior draw. Partially
developed rows enter exactly: a row observed through development year $k$
contributes its subcomposition, which is $\Dir(c\pi_0,\ldots,c\pi_k)$ with
total concentration $cF_k$. This is Algorithm~\ref{alg:bootstrap} with $c$
integrated over its posterior rather than plugged in, and hence within the
same conditional family.

This appendix reports the prior sensitivity of that construction; the coverage
comparison itself is in Section~\ref{sec:regularised}.

\begin{table}[ht]\centering
\caption{Prior sensitivity of the regularised construction. At $I=5$ coverage
moves with the prior mean by up to $9.0$ points; at $I=10$ the likelihood
dominates and sensitivity is at most $4.2$ points, with no systematic
direction.}
\label{tab:bayes_prior}
\begin{tabular}{@{}lcccccc@{}}
\toprule
& \multicolumn{3}{c}{$I=5$} & \multicolumn{3}{c}{$I=10$} \\
\cmidrule(lr){2-4}\cmidrule(lr){5-7}
$c$ & $\log 30$ & $\log 50$ & $\log 100$ & $\log 30$ & $\log 50$ & $\log 100$ \\
\midrule
 20 & 73.2 & 70.8 & 68.8 & 91.6 & 88.4 & 88.0 \\
 50 & 77.0 & 72.8 & 70.0 & 89.0 & 90.2 & 89.0 \\
100 & 82.4 & 76.0 & 73.4 & 89.8 & 88.0 & 88.6 \\
200 & 85.4 & 83.8 & 76.8 & 95.0 & 90.8 & 92.8 \\
\bottomrule
\end{tabular}
\end{table}

Prior sensitivity is a scarce-data phenomenon: the prior carries weight
exactly where the data are thin, and not otherwise. The residual deficit is informative as well. Even integrating over the posterior of $c$,
coverage at $I = 10$ reaches only $89.2\%$
(Table~\ref{tab:bayes_coverage}), and the posterior mean of $c$ is $61.8$ against a true $50$, i.e.\ a $24\%$
overstatement against the plug-in's $31\%$. Integration reduces the bias without removing it, because the bias is a
property of the likelihood at $n_k \leq 8$ rather than of the estimator
applied to it. The gap to nominal is the information bound of paid-only
reserving: it is not removed by using more of the model, since a single paid
triangle does not contain the information that the frequentist limit requires.

\subsection{Code and reproducibility}\label{app:code}

Complete R implementations, the simulation harness, the per-cell checkpoints
backing every simulated figure in
Sections~\ref{sec:simulations}--\ref{sec:empirical}, and an MD5 manifest of
those checkpoints are available at \url{https://github.com/robin-vo/hgr}.
Every simulated number in this paper is reproducible from that harness by a
single driver script, which runs the full study in under an hour on eight
cores, dominated by the MCMC of this appendix. The harness enforces
per-replication timeouts, atomic checkpointing with restart, and
load-balanced dispatch. The count-hierarchy frailty sweep of Section~\ref{sec:sim_frailty} and
the allocation comparison of Table~\ref{tab:frailty_amounts} use an
in-repository implementation of the ODP bootstrap
\citep{england2002} that was validated against the \texttt{ChainLadder}
implementation on the Taylor--Ashe triangle at $B = 10{,}000$, agreeing on
the mean, standard deviation and five quantiles of the reserve to within
$1.4\%$; the validation output is in the repository. The \texttt{ChainLadder}
implementation is not used for those sweeps because its overdispersed
Poisson process draw is undefined when the estimated dispersion is at or
below one, which is the regime of Poisson counts. Each script writes its
table and the figures quoted in the text, which were transcribed into the
manuscript from that output without rounding beyond what the script
applies. A separate script attaches a measured value to each
asymptotic constant stated in this paper and reports any that its run does not
support; its output is included in the repository.

\end{document}